\documentclass[11pt,a4paper]{article}
\usepackage[margin=1.05in]{geometry}
\usepackage{amsmath,amsthm,amssymb,amsfonts,mathtools,bm}
\usepackage[round,authoryear]{natbib}
\usepackage{iftex}
\ifptex
  \PassOptionsToPackage{dvipdfmx}{graphicx}
  \PassOptionsToPackage{dvipdfmx}{xcolor}
  \PassOptionsToPackage{dvipdfmx}{hyperref}
\fi
\usepackage{graphicx}

\usepackage{booktabs}
\usepackage{subcaption}
\usepackage{tikz}
\usetikzlibrary{arrows.meta,calc,positioning}
\usepackage{threeparttable}
\usepackage{tabularx}
\usepackage{float}
\usepackage[colorlinks=true,linkcolor=blue,citecolor=blue,urlcolor=blue]{hyperref}
\usepackage{setspace}
\usepackage{bbm}
\theoremstyle{plain}
\newtheorem{theorem}{Theorem}[section]

\newtheorem{prop}[theorem]{Proposition}
\newtheorem{cor}[theorem]{Corollary}

\theoremstyle{definition}

\newtheorem{assumption}{Assumption}[section]
\newtheorem{example}{Example}[section]

\theoremstyle{remark}
\newtheorem{remark}{Remark}[section]

\newcommand{\1}{\mathbbm{1}}
\numberwithin{equation}{section}

\title{Identification and Estimation of Staggered Difference-in-Differences with Network Spillovers}
\author{Hayato Tagawa\thanks{Graduate School of Economics, University of Tokyo, Bunkyo-ku, Tokyo 113-0033, Japan; Email: hayato-tagawa@g.ecc.u-tokyo.ac.jp}}
\date{August 2026}

\begin{document}

\maketitle

\begin{abstract}
  This paper studies staggered difference-in-differences designs in which treated and comparison units can be exposed to other units' adoption.
  Using an exposure mapping specified by the researcher, we define realized and counterfactual exposure states and decompose the total effect into a switching effect of own adoption and a spillover effect under no own adoption.
  Conditional parallel trends identify the total and switching effects by matching, respectively, the counterfactual and realized exposure states of the treated cohort to the same states among units that never adopt.
  The difference between these effects identifies the spillover effect for the treated cohort.
  We construct outcome regression estimators for these effects and doubly robust estimators for the total and switching effects, and establish their asymptotic normality under spatial dependence.
  Monte Carlo simulations assess finite-sample bias, RMSE, pointwise coverage, and robustness to outcome regression misspecification.
  In an application to Community Health Centers, the estimated total and spillover effects on mortality are negative after treatment.
  We also find that the spillover estimates for never-treated counties are negative at some event times.
\end{abstract}

Keywords: Difference-in-differences; staggered adoption; interference; spillovers; exposure mapping; spatial dependence.

JEL codes: C13; C21; C23; I18.

\newpage
\section{Introduction}

Staggered difference-in-differences (DID) designs study policies that different units adopt at different dates.
The conventional potential outcomes framework for DID assumes the stable unit treatment value assumption (SUTVA).
Its no-interference component requires a unit's outcome to be unaffected by other units' adoption.
Under a parallel trends assumption, outcome changes for units that have not yet adopted or never adopt identify the counterfactual trend for a treated cohort.
The requirement of no interference may fail when a policy operates across locations or network links.
When such spillovers are present, comparison units can be exposed and treated outcomes can combine own adoption with exposure to adoption elsewhere.
A conventional DID comparison therefore need not recover either the effect of own adoption or the effect of the policy as a whole.

The rollout of Community Health Centers studied below illustrates this form of interference.
A county without a center may benefit from a center in a nearby county through travel and referrals, while a county with its own center may also benefit from centers opened elsewhere.

\citet{goodman2021difference} and \citet{sun2021estimating} characterize the comparisons used by regressions with two-way fixed effects and event study regressions when treatment effects are heterogeneous.
\citet{callaway2021difference} define effects by cohort and time, while \citet{borusyak2024revisiting} construct an estimator from untreated outcome paths.
\citet{delgado2015difference} and \citet{clarke2017estimating} study how spillovers affect DID comparisons in spatial settings.
\citet{fiorini2025simple} use comparison units without spillovers, \citet{XU2026106304} defines direct effects at fixed exposure levels, and \citet{butts2024difference} distinguishes switching and total effects.

The policy comparison studied here is defined by realized and counterfactual exposure levels.
Following \citet{10.1214/16-AOAS1005}, the exposure mapping reduces the adoption profile to states relevant for outcomes.
The mapping may return a vector and may retain current and lagged network exposure.
The researcher specifies the exposure mapping as part of the research design.
The realized adoption profile induces the exposure level $H^1_{it}$.
For each treated cohort, we specify a counterfactual adoption scenario under which that cohort remains untreated; this scenario induces $H^0_{it}$ for every unit.
The resulting pair $(H^1_{it},H^0_{it})$ defines the causal comparison.

The total effect admits a decomposition with two additional objects.
The switching effect changes a cohort's own adoption while holding its realized exposure fixed.
The spillover effect changes exposure while the cohort remains untreated.
The total effect equals the sum of these two effects by definition.

We identify the total effect directly at the counterfactual exposure level induced by this scenario.
Conditional parallel trends equates the untreated outcome trend of cohort $g$ at its counterfactual exposure state with the corresponding trend among units that never adopt and are observed at the same exposure state.
The switching comparison instead matches the realized exposure state.
The difference between the identified total and switching effects then identifies the spillover effect for the treated cohort under no own adoption.
The comparison at $H^0_{it}$ also identifies the spillover effect among units that never adopt relative to the same counterfactual exposure for that cohort.
An alternative identification result uses parallel trends across exposure states among units that never adopt and transportability of untreated exposure effects to identify the spillover effect for cohort $g$ under no own adoption.
Adding the switching effect then identifies the corresponding total effect.

We construct outcome regression estimators for these effects.
For a cohort that adopts, the spillover estimator is the difference between the corresponding total and switching estimators.
We also construct doubly robust estimators for the total and switching effects.
The population functional for each doubly robust estimator equals its target when either its outcome regression or the odds constructed from its propensity score are correctly specified.
We derive the joint asymptotic distribution of the effect estimators under spatial dependence and construct pointwise confidence intervals using a spatial HAC covariance estimator.

Monte Carlo simulations evaluate bias, RMSE, and pointwise coverage under correct specification and under misspecification of the switching outcome regression.

We study the rollout of Community Health Centers.
The total effect comparison uses never-treated counties that remain outside the 75-mile exposure radius at the target date.
Both outcome regressions adjust for the baseline characteristics used by \citet{bailey2015war}.
The estimated total effect on mortality is negative at event times $l = 0$ through $l = 4$.
The pointwise confidence intervals for the total effect exclude zero from $l = 1$ onward, and those for the spillover effect for treated counties exclude zero at every event time.

\subsection*{Related literature}

The standard DID literature provides several approaches to estimation when adoption dates and treatment effects vary.
In addition to the methods discussed above, \citet{gardner2025two} propose estimation in two stages, and \citet{ROTH20232218} review recent developments in this literature.
We retain the cohort and event-time indexing of \citet{callaway2021difference} and \citet{sun2021estimating} and allow outcomes to depend on other units' adoption.
Without interference, the switching and total effects reduce to the usual group--time average treatment effect for a cohort at a given time.

\citet{10.1214/16-AOAS1005} use exposure mappings to summarize the assignment vector under interference, and \citet{savje2024causal} studies the consequences of misspecifying those mappings.
\citet{delgado2015difference} and \citet{clarke2017estimating} study DID with spatial spillovers, while \citet{berg2021spillover} discuss spillovers in empirical corporate finance.
\citet{leung2022causal} studies approximate neighborhood interference, and \citet{huber2021framework} separate individual treatment effects from spillover effects.
We identify the spillover effect among units that never adopt relative to the counterfactual exposure specified for each cohort without requiring those units to be unexposed.
For a treated cohort, the alternative identification result combines parallel trends across exposure states with transportability.

The closest DID approaches differ in their target effects and identifying comparisons.
\citet{butts2024difference} distinguishes switching and total effects in a spatial design with two periods and gives an extension to event studies.
We define the decomposition for each cohort and event time, identify the total effect at the counterfactual exposure level induced by the scenario specified by the researcher, and obtain the spillover component as the difference between the total and switching effects.
\citet{fiorini2025simple} target the ATT without interference by ruling out spillovers for treated units after adoption and requiring comparison units to be free of spillovers.
Our setting allows both treated units and units that never adopt to remain exposed.
\citet{XU2026106304} studies direct effects at fixed exposure levels and derives DR estimators with spatially robust inference using finite-dimensional estimating equations.
Our inference analysis uses a related finite-dimensional formulation for the conditional means and propensity scores.
\citet{10.1257/pandp.20261108} studies interference in panels with multiple periods after treatment and potentially staggered adoption.
Our effects average over the realized and counterfactual exposure states of the treated cohort.

\citet{lasso2026spillover} decomposes the canonical DID estimator with two periods into comparisons across exposure states and proposes spill-imputation using outcomes observed before treatment and a pure control group.
We identify the total and switching effects by comparing trends at the same exposure state in a staggered design.
\citet{sun2025differenceindifferencesnetworkinterference} studies ATT parameters for own treatment and spillovers under network interference while adjusting for network confounders.
Our target is the total effect defined by the realized and counterfactual exposure levels and its decomposition into switching and spillover effects.

Section~\ref{sec:setup} defines the effects, Sections~\ref{sec:identification}--\ref{sec:inference} give identification, estimation, and inference, Section~\ref{sec:simulation} reports the Monte Carlo results, and Section~\ref{sec:application} presents the empirical application.

\section{Setup}\label{sec:setup}

\subsection{Framework}\label{sec:framework}

Consider a panel of $N$ units indexed by $i=1,\dots,N$ over periods $t=1,\dots,T$.
Let $D_{it}\in\{0,1\}$ denote own treatment status.
We restrict attention to panels in which all units are untreated in the first period, so $D_{i1}=0$.
We impose absorbing treatment, $D_{it}\le D_{i,t+1}$ for all $i$ and $t<T$.
Let $\mathcal G\equiv\{2,\dots,T\}\cup\{\infty\}$.
Define the adoption time $G_i \equiv \min\{t:D_{it}=1\}\in\mathcal G$, with $G_i=\infty$ for units that never adopt.
Let $\mathbf G=(G_1,\dots,G_N)$ denote the vector of adoption times, and let $Y_{it}(\mathbf G)$ be the potential outcome for unit $i$ at time $t$ under adoption profile $\mathbf G$.
Before imposing an exposure mapping, $Y_{it}(\mathbf G)$ may depend on the full adoption profile.

Let $\mathcal D_N$ collect the locations, distances, network links, and other predetermined information used to construct exposure.
Let $\mathcal C_N$ be the sigma-field generated by $\mathcal D_N$ and by the predetermined covariates and analysis weights introduced below.
Throughout, expectations and probabilities are evaluated under the law that selects a unit uniformly from $\{1,\ldots,N\}$ and then uses its conditional law given $\mathcal C_N$.
Equivalently, they integrate the average of the conditional laws for the $N$ units; conditioning on $\mathcal C_N$ is suppressed from the notation.
Let $\mathcal H_N$ denote the resulting set of exposure states, with $0\in\mathcal H_N$ denoting no exposure.
For $\mathbf g\in\mathcal G^N$, let $h_{it,N}(\mathbf g_{-i};\mathcal D_N)\in\mathcal H_N$ denote an exposure mapping specified by the researcher that summarizes the features of other units' adoption paths that affect unit $i$.
For the observed adoption profile, write $\mathbf G^1\equiv\mathbf G$ and define the realized exposure state by $H^1_{it} \equiv h_{it,N}(\mathbf G^1_{-i};\mathcal D_N) \in \mathcal H_N$.
The mapping restricts the interference structure by requiring own adoption and exposure to be sufficient for the potential outcome.
\begin{assumption}[Exposure mapping]\label{assumption:exposure_mapping}
  For any unit $i$, time $t$, and adoption profiles $\mathbf g,\mathbf g'\in\mathcal G^N$, if $g_i=g_i'$ and $h_{it,N}(\mathbf g_{-i};\mathcal D_N)=h_{it,N}(\mathbf g'_{-i};\mathcal D_N)$, then $Y_{it}(\mathbf g)=Y_{it}(\mathbf g')$.
\end{assumption}

\begin{assumption}[Consistency]\label{assumption:consistency}
  For the realized adoption profile $\mathbf G$, the observed outcome satisfies $Y_{it}=Y_{it}(\mathbf G)$.
\end{assumption}

Under Assumption~\ref{assumption:exposure_mapping}, let $Y_{it}(g,h)$ denote the common value of $Y_{it}(\mathbf g)$ among profiles with $g_i=g$ and $h_{it,N}(\mathbf g_{-i};\mathcal D_N)=h$.
The states $(\infty,0)$, $(\infty,h)$, and $(g,h)$ denote pure control, untreated exposure, and treatment with exposure, respectively.
All causal effects below are defined relative to the chosen mapping.

The mapping is part of the empirical design and is specified before outcomes are analyzed.
The exposure state may be a scalar or a vector.
For example, a vector may contain the number of adopted neighbors at dates $t$ and $t-1$, or current network exposure together with duration or cumulative exposure.
The example based on a line network in Figure~\ref{fig:line_network_rollouts} uses a binary state for current exposure.
This formulation follows the literature on exposure mappings by treating the mapping as the restriction that connects the full assignment profile to potential outcomes \citep{10.1214/16-AOAS1005,VAZQUEZBARE2023105237}.
The analysis conditions on the mapping specified by the researcher.
Mapping misspecification changes the causal states being compared and may violate Assumption~\ref{assumption:exposure_mapping} \citep{savje2024causal}.

The mapping may also represent a decaying exposure history.
For example, a raw exposure index may be
\begin{align*}
  \widetilde H_{it,N}^{decay} = \sum_{j \neq i} w_{ij}\1\{G_j \le t\}\exp\{-\chi(t - G_j)\},
\end{align*}
where $\chi\ge0$ is the decay rate.
This index can be coarsened into states such as zero, low, medium, and high exposure.
The coarsening disregards exposure differences that are not substantively relevant and creates repeated states that can be compared in the data.
Following \citet{10.1214/16-AOAS1005}, \citet{VAZQUEZBARE2023105237}, and \citet{butts2024difference}, we assume $\sup_N |\mathcal H_N| < \infty$.
The underlying distance, dose, or duration index may remain continuous.
Any time-varying network in $\mathcal D_N$ must be predetermined, and $H^1_{it}$ may depend only on adoption histories through $t$.

\subsection{Policy effects}\label{sec:policy-effects}

Fix a cohort $g$, a nonnegative event time $l$, and the corresponding calendar time $t=g+l$.
Let $H^0_{it}\in\mathcal H_N$ denote the exposure level induced for unit $i$ by the counterfactual adoption scenario specified by the researcher for cohort $g$, under which members of the cohort remain untreated.
The same scenario induces $H^0_{it}$ for units inside and outside cohort $g$.
The scenario may, for example, withhold cohort $g$ while leaving the adoption times of other cohorts unchanged.
Figure~\ref{fig:line_network_rollouts} contrasts the realized adoption profile with the profile generated by withholding cohort $g$.

\begin{figure}[h]
  \centering
  \begingroup
  \def\LineNetworkEarlierExposureLabel{exposure from earlier adopter}
  \begingroup
\def\LineNetworkRolloutOnly{}
\def\LineNetworkSuppressLegend{}

\begin{minipage}{\textwidth}
  \centering
  {\small A. Observed adoption profile\par}
  \vspace{0.5em}
  \input{tikz/fig_line_network_spillover_concept.tex}
\end{minipage}

\vspace{1em}

\begin{minipage}{\textwidth}
  \centering
  {\small B. Counterfactual adoption profile with cohort $g$ withheld\par}
  \vspace{0.5em}
  \input{tikz/fig_line_network_cohort_deletion.tex}
\end{minipage}

\vspace{1.2em}

\begin{tikzpicture}[
    node/.style={
        circle,
        draw=black!70,
        line width=0.55pt,
        minimum size=4.5mm,
        inner sep=0pt
      },
    never/.style={node, fill=black!8},
    exposed/.style={node, fill=blue!18, draw=blue!55!black},
    treated/.style={node, fill=red!18, draw=red!60!black},
    early/.style={node, fill=green!20, draw=green!45!black},
    withheld/.style={node, fill=black!8, draw=red!60!black, dashed, line width=0.85pt},
    legend/.style={font=\footnotesize, anchor=west}
  ]
  \node[never] at (0,0) {};
  \node[legend] at (0.35,0) {never-treated, unexposed};
  \node[exposed] at (4.75,0) {};
  \node[legend] at (5.10,0) {never-treated, exposed};
  \node[treated] at (9.20,0) {};
  \node[legend] at (9.55,0) {cohort $g$ treated};

  \node[early] at (2.40,-0.70) {};
  \node[legend] at (2.75,-0.70) {earlier adopter};
  \node[withheld] at (7.15,-0.70) {};
  \node[legend] at (7.50,-0.70) {cohort $g$ withheld};
\end{tikzpicture}
\endgroup
  \endgroup
  \caption{Observed and counterfactual adoption profiles in the example based on a line network. Panel A shows the realized profile at the target date. Panel B withholds cohort $g$ while holding all other adoption times fixed. Units 3 and 6 therefore remain untreated, while unit 6 remains exposed because unit 5 adopted earlier.}
  \label{fig:line_network_rollouts}
\end{figure}

Let $h^1$ and $h^0$ denote generic realized and counterfactual exposure levels.
For any feasible pair $(h^1,h^0)$, the three causal contrasts satisfy
\begin{align*}
  \underbrace{Y_{it}(g,h^1) - Y_{it}(\infty,h^0)}_{\text{total effect}} & = \underbrace{Y_{it}(g,h^1) - Y_{it}(\infty,h^1)}_{\text{switching effect}} + \underbrace{Y_{it}(\infty,h^1) - Y_{it}(\infty,h^0)}_{\text{spillover effect}}.
\end{align*}
The switching term changes own adoption while holding realized exposure fixed.
The spillover term changes exposure from $h^0$ to $h^1$ while the unit remains untreated.
The decomposition holds by definition.

Figure~\ref{fig:effect_decomposition_paths} represents the three paths of potential outcomes in this decomposition at the target date $t=g+l$.
The vertical gaps are causal contrasts.

\begin{figure}[h]
  \centering
  \begingroup
  \def\EffectDecompositionVTwo{}
  \input{tikz/fig_identification_comparisons.tex}
  \endgroup
  \caption{Paths of potential outcomes for the total effect decomposition. Panel A compares $(g,h^1)$ with $(\infty,h^1)$, holding exposure fixed at its realized state. Panel B compares $(\infty,h^1)$ with $(\infty,h^0)$, holding own adoption fixed at infinity. Panel C evaluates both contrasts at the target date $t=g+l$: $(\infty,h^1)$ is the intermediate state between $(g,h^1)$ and $(\infty,h^0)$, so the total vertical gap is the sum of the switching and spillover gaps. Here $h^1$ and $h^0$ are realizations of $H^1_{it}$ and $H^0_{it}$.}
  \label{fig:effect_decomposition_paths}
\end{figure}

Write $\mathbb E_g[\cdot]\equiv\mathbb E[\cdot\mid G_i = g]$ and $\mathbb E_\infty[\cdot]\equiv\mathbb E[\cdot\mid G_i = \infty]$.
Evaluating the preceding contrasts at each target unit's $(H^1_{it},H^0_{it})$ gives the effects for cohort $g$ at event time $l$,
\begin{align*}
  \tau^{TE}(g,l) & \equiv \mathbb E_g\!\left[Y_{it}(g,H^1_{it}) - Y_{it}(\infty,H^0_{it})\right],      \\
  \tau^{SW}(g,l) & \equiv \mathbb E_g\!\left[Y_{it}(g,H^1_{it}) - Y_{it}(\infty,H^1_{it})\right],      \\
  \tau^{SP}(g,l) & \equiv \mathbb E_g\!\left[Y_{it}(\infty,H^1_{it}) - Y_{it}(\infty,H^0_{it})\right].
\end{align*}
The decomposition for each unit gives
\begin{align*}
  \tau^{TE}(g,l) & = \tau^{SW}(g,l) + \tau^{SP}(g,l).
\end{align*}
For the same $(g,l)$ and counterfactual scenario, define the spillover effect among units that never adopt by
\begin{align*}
  \tau_\infty^{SP}(g,l) & \equiv \mathbb E_\infty\!\left[Y_{it}(\infty,H^1_{it}) - Y_{it}(\infty,H^0_{it})\right].
\end{align*}
Own adoption is fixed at infinity for both potential outcomes. Thus, this effect compares realized exposure with the same counterfactual exposure for cohort $g$ used in the decomposition of the total effect.

\begin{example}[Policy effects in the line network example]
  Focus on units 5--7 in Panel A of Figure~\ref{fig:line_network_rollouts}.
  Unit 6 adopts in cohort $g$, unit 5 adopts before $g$, and unit 7 remains untreated.
  Let exposure indicate whether at least one adjacent unit has adopted:
  \begin{align*}
    H^1_{it} = \1\!\left\{\sum_{j:\,j\sim i}\1\{G_j \le t\}>0\right\}.
  \end{align*}
  At $t\ge g$, unit 6 is observed in state $(g,1)$ because unit 5 has already adopted.
  Under the counterfactual profile in Panel B, unit 6 remains exposed in state $(\infty,1)$ because unit 5 retains its earlier adoption, so $h^0=1$.
  If the researcher instead specifies a profile in which no unit ever adopts, its counterfactual state is $(\infty,0)$, so $h^0=0$.
  Only $Y_{6t}(g,1)$ is observed for unit 6; the two missing untreated states $Y_{6t}(\infty,1)$ and $Y_{6t}(\infty,0)$ separate own adoption from exposure.
\end{example}

\begin{remark}
  If exposure does not affect potential outcomes, $Y_{it}(g,h)=Y_{it}(g,0)$ for all $(g,h)$.
  Then $\tau^{SP}(g,l)=0$ and $\tau_\infty^{SP}(g,l)=0$ for any counterfactual exposure level, and
  \begin{align*}
    \tau^{TE}(g,l) & = \tau^{SW}(g,l) = \mathbb E\!\left[Y_{i,g + l}(g,0) - Y_{i,g + l}(\infty,0)\mid G_i = g\right].
  \end{align*}
  The common value is the conventional group--time average treatment effect $ATT(g,g+l)$ in a design with staggered adoption \citep{callaway2021difference}.
  Thus, without interference, the chosen exposure levels do not change the estimand: the total and switching effects both reduce to the conventional effect for cohort $g$ at time $g+l$, and the spillover effect is zero.
\end{remark}

\section{Identification}\label{sec:identification}

Under interference, a conventional DID for cohort $g$ generally differs from the total effect and does not separately identify the switching and spillover components.
Relative to the total effect defined by $H^0_{it}$, it subtracts cohort $g$'s contrast between untreated potential outcomes under realized and counterfactual exposure at the baseline $g-1$ and the change between $g-1$ and $t$ in the corresponding contrast among comparison units that never adopt. It also adds the difference in trends between the target cohort and comparison units under the counterfactual exposure path.
This section establishes identification of the effects defined in Section~\ref{sec:policy-effects} for each cohort and event time under a general counterfactual exposure $H^0_{it}$.
We first impose conditional parallel trends and require that units in cohort $g$ and units that never adopt overlap at the same exposure state.
The comparison at $H^0_{it}$ identifies the total effect and the spillover effect among units that never adopt under the same counterfactual scenario, while the comparison at the realized exposure $H^1_{it}$ identifies the switching effect; the difference between the total and switching effects identifies the spillover effect for cohort $g$.
Parallel trends across exposure states and transportability provide an alternative route to the spillover effect for cohort $g$.
When the conditions for the switching effect also hold, the total effect follows from the decomposition in Section~\ref{sec:policy-effects}.

Throughout this section, $g\in\mathcal G\setminus\{\infty\}$ and $t=g+l\le T$.
We use a common baseline \(t_b\) at which all units are untreated and unexposed under the realized and counterfactual scenarios, so \(H^1_{i,t_b}=H^0_{i,t_b}=0\).
Let $X_i$ denote predetermined covariates.

\subsection{Conventional DID under spillovers}\label{sec:conventional-did}

A conventional DID for cohort $g$ compares the cohort with all units that never adopt, including units exposed in either comparison period.
Panel C of Figure~\ref{fig:effect_decomposition_paths} shows that the path of realized exposure can separate from the path of counterfactual exposure before own adoption; an untreated observation may therefore be exposed.
In the decomposition below, $H^0_{it}$ and $H^0_{i,g-1}$ are induced for every unit at the two dates by the same counterfactual adoption scenario specified by the researcher to define the total effect for cohort $g$.
For $t=g+l$, define
\begin{align*}
  \tau^{DID}(g,l) & \equiv \mathbb E_g[Y_{it} - Y_{i,g-1}] - \mathbb E_\infty[Y_{it} - Y_{i,g-1}].
\end{align*}
This benchmark uses baseline $g-1$ for cohort $g$.
Identification of the total and switching effects uses the common baseline $t_b$.

\begin{prop}[Decomposition of the DID for a cohort]\label{prop:did_decomposition}
  Suppose Assumptions~\ref{assumption:exposure_mapping} and \ref{assumption:consistency} hold and $Y_{i,g-1}(g,H^1_{i,g-1})=Y_{i,g-1}(\infty,H^1_{i,g-1})$ almost surely conditional on $G_i=g$.
  Then
  \begin{align*}
    \tau^{DID}(g,l) & = \underbrace{\mathbb E_g\!\left[Y_{it}(g,H^1_{it}) - Y_{it}(\infty,H^0_{it})\right]}_{\text{total effect}}                                                                                                                                                                          \\
                    & {}- \underbrace{\mathbb E_g\!\left[Y_{i,g-1}(\infty,H^1_{i,g-1}) - Y_{i,g-1}(\infty,H^0_{i,g-1})\right]}_{\text{baseline contrast between untreated outcomes in cohort }g}                                                                                                           \\
                    & {}- \underbrace{\left\{\mathbb E_\infty\!\left[Y_{it}(\infty,H^1_{it}) - Y_{it}(\infty,H^0_{it})\right] - \mathbb E_\infty\!\left[Y_{i,g-1}(\infty,H^1_{i,g-1}) - Y_{i,g-1}(\infty,H^0_{i,g-1})\right]\right\}}_{\text{change in the corresponding contrast among comparison units}} \\
                    & {}+ \underbrace{\left\{\mathbb E_g\!\left[Y_{it}(\infty,H^0_{it}) - Y_{i,g-1}(\infty,H^0_{i,g-1})\right] - \mathbb E_\infty\!\left[Y_{it}(\infty,H^0_{it}) - Y_{i,g-1}(\infty,H^0_{i,g-1})\right]\right\}}_{\text{difference in trends under counterfactual exposure}}.
  \end{align*}
\end{prop}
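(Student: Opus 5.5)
The argument is an add-and-subtract identity. We express the observed outcomes of each group through potential outcomes using Assumptions~\ref{assumption:exposure_mapping} and \ref{assumption:consistency}, and then insert the counterfactual-exposure terms.

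\textbf{Step 1: observed outcomes as potential outcomes.} By Assumption~\ref{assumption:consistency}, $Y_{it}=Y_{it}(\mathbf G)$. By Assumption~\ref{assumption:exposure_mapping}, this value depends on $\mathbf G$ only through $G_i$ and $H^1_{it}$, so $Y_{it}=Y_{it}(G_i,H^1_{it})$. This gives four cases:
\begin{itemize}
\item[] For $G_i=g$ at date $t$: $Y_{it}=Y_{it}(g,H^1_{it})$.
\item[] For $G_i=g$ at date $g-1$: $Y_{i,g-1}=Y_{i,g-1}(g,H^1_{i,g-1})$. The no-anticipation hypothesis in the statement replaces this almost surely, conditional on $G_i=g$, by $Y_{i,g-1}(\infty,H^1_{i,g-1})$.
\item[] For $G_i=\infty$ at either date $s\in\{g-1,t\}$: $Y_{is}=Y_{is}(\infty,H^1_{is})$.
\end{itemize}
This yields
\begin{align*}
\tau^{DID}(g,l)&=\mathbb E_g\!\left[Y_{it}(g,H^1_{it})-Y_{i,g-1}(\infty,H^1_{i,g-1})\right]\\
&\quad-\mathbb E_\infty\!\left[Y_{it}(\infty,H^1_{it})-Y_{i,g-1}(\infty,H^1_{i,g-1})\right].
\end{align*}

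\textbf{Step 2: the cohort-$g$ term.} Add and subtract $Y_{it}(\infty,H^0_{it})$ and $Y_{i,g-1}(\infty,H^0_{i,g-1})$ inside $\mathbb E_g$:
\begin{align*}
\mathbb E_g\!\left[Y_{it}(g,H^1_{it})-Y_{i,g-1}(\infty,H^1_{i,g-1})\right]
&=\mathbb E_g\!\left[Y_{it}(g,H^1_{it})-Y_{it}(\infty,H^0_{it})\right]\\
&\quad-\mathbb E_g\!\left[Y_{i,g-1}(\infty,H^1_{i,g-1})-Y_{i,g-1}(\infty,H^0_{i,g-1})\right]\\
&\quad+\mathbb E_g\!\left[Y_{it}(\infty,H^0_{it})-Y_{i,g-1}(\infty,H^0_{i,g-1})\right].
\end{align*}
The first line on the right is the total effect $\tau^{TE}(g,l)$ and the second is the baseline contrast in cohort $g$.

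\textbf{Step 3: the comparison-group term.} Add and subtract the same counterfactual-exposure potential outcomes inside $\mathbb E_\infty$:
\begin{align*}
\mathbb E_\infty\!\left[Y_{it}(\infty,H^1_{it})-Y_{i,g-1}(\infty,H^1_{i,g-1})\right]
&=\mathbb E_\infty\!\left[Y_{it}(\infty,H^1_{it})-Y_{it}(\infty,H^0_{it})\right]\\
&\quad-\mathbb E_\infty\!\left[Y_{i,g-1}(\infty,H^1_{i,g-1})-Y_{i,g-1}(\infty,H^0_{i,g-1})\right]\\
&\quad+\mathbb E_\infty\!\left[Y_{it}(\infty,H^0_{it})-Y_{i,g-1}(\infty,H^0_{i,g-1})\right].
\end{align*}

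\textbf{Step 4: combine.} Subtract the Step 3 expression from the Step 2 expression. The two counterfactual-trend expectations pair into the final bracket, and the comparison-group exposure contrasts form the third bracket with a minus sign. This reproduces the displayed identity term by term.

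The only delicate point is well-definedness. The objects $Y_{is}(\infty,H^0_{is})$ must exist for every unit at both dates, which the researcher-specified scenario guarantees because it induces $H^0_{is}\in\mathcal H_N$ for all units. Each expectation must also be finite so that linearity of $\mathbb E_g$ and $\mathbb E_\infty$ can be used to split terms. We take integrability of the potential outcomes as implicit. Beyond that, the argument is bookkeeping, and the only substantive input is the no-anticipation condition used at date $g-1$.
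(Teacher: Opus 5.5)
Your proof is correct and follows essentially the same route as the paper's. Both rewrite the observed outcomes via consistency and the exposure mapping, invoke the no-anticipation condition at $g-1$ for cohort $g$, and then add and subtract $Y_{it}(\infty,H^0_{it})$ and $Y_{i,g-1}(\infty,H^0_{i,g-1})$ inside both expectations; your explicit remarks on integrability and on the scenario inducing $H^0_{i,g-1}$ for every unit simply spell out what the paper leaves implicit.
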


\begin{proof}
  See Appendix~\ref{app:identification-proofs}.
\end{proof}

\subsection{Identification assumptions}

\begin{assumption}[No anticipation at the common baseline]\label{assumption:no_anticipation}
  For every cohort \(g\in\mathcal G\setminus\{\infty\}\), $Y_{i,t_b}(g,0)=Y_{i,t_b}(\infty,0)$ almost surely conditional on $G_i=g$ and $X_i=x$, for every $x$ in the target support.
\end{assumption}

Assumption~\ref{assumption:no_anticipation} compares the baseline outcome under adoption in cohort $g$ with the outcome under never adopting, holding exposure fixed at zero.
Together with the restriction imposed by the exposure mapping and consistency, it makes the observed baseline outcome for cohort $g$ equal to its untreated and unexposed baseline outcome.
The restriction is confined to the common baseline $t_b$; anticipatory responses may occur at other dates before adoption.

\begin{assumption}[Parallel trends at the counterfactual exposure]\label{assumption:te_parallel_trends}
  For each $(g,l)$, with $t=g+l$, and every $(x,h)$ in the support of $(X_i,H^0_{it})$ conditional on $G_i=g$,
  \begin{align*}
     & \mathbb E\!\left[Y_{it}(\infty,h) - Y_{i,t_b}(\infty,0)\mid G_i = g,\ X_i = x,\ H^0_{it} = h\right]         \\
     & = \mathbb E\!\left[Y_{it}(\infty,h) - Y_{i,t_b}(\infty,0)\mid G_i = \infty,\ X_i = x,\ H^1_{it} = h\right].
  \end{align*}
  For every $(x,h)$ in the support of $(X_i,H^0_{it})$ conditional on $G_i=\infty$,
  \begin{align*}
     & \mathbb E\!\left[Y_{it}(\infty,h) - Y_{i,t_b}(\infty,0)\mid G_i = \infty,\ X_i = x,\ H^0_{it} = h\right]    \\
     & = \mathbb E\!\left[Y_{it}(\infty,h) - Y_{i,t_b}(\infty,0)\mid G_i = \infty,\ X_i = x,\ H^1_{it} = h\right].
  \end{align*}
\end{assumption}

Assumption~\ref{assumption:te_parallel_trends} equates the untreated trend at counterfactual exposure $h$ with the corresponding trend among units that never adopt and are observed at the same covariate value $x$ and exposure state $h$.
The first clause applies to cohort $g$, and the second applies to units that never adopt under the counterfactual scenario for cohort $g$.
Under exposure mapping, consistency, no anticipation where required, and overlap at the relevant exposure state, the observed conditional mean among units that never adopt at $(x,h)$ identifies the missing counterfactual trends in \eqref{eq:te_identification} and \eqref{eq:never_treated_spillover_identification}.
The comparison permits untreated trends to vary across exposure states.

\begin{assumption}[Parallel trends at the realized exposure]\label{assumption:dse_parallel_trends}
  For each $(g,l)$, with $t=g+l$, and every $(x,h)$ in the support of $(X_i,H^1_{it})$ conditional on $G_i=g$,
  \begin{align*}
     & \mathbb E\!\left[Y_{it}(\infty,h) - Y_{i,t_b}(\infty,0)\mid G_i = g,\ X_i = x,\ H^1_{it} = h\right]         \\
     & = \mathbb E\!\left[Y_{it}(\infty,h) - Y_{i,t_b}(\infty,0)\mid G_i = \infty,\ X_i = x,\ H^1_{it} = h\right].
  \end{align*}
\end{assumption}

Assumption~\ref{assumption:dse_parallel_trends} equates cohort $g$'s untreated trend at realized exposure $h$ with the same conditional trend among units that never adopt.
Under exposure mapping, consistency, no anticipation, and overlap at the relevant exposure state, the observed conditional mean at $(x,h)$ identifies the missing untreated trend used for the switching effect.
Untreated trends may vary with $h$.

\begin{assumption}[Overlap]\label{assumption:overlap}
  There exists a constant $\epsilon > 0$, independent of $N$, such that $\Pr (G_i = \infty) > \epsilon$ and $\Pr (G_i = g) > \epsilon$ for every $(g,l)$.
  For every such $(g,l)$, $ \Pr (G_i = \infty,\ H^1_{it} = h\mid X_i = x) > \epsilon$ for almost every $(x,h)$ under the conditional laws $(X_i,H^0_{it})\mid G_i = g$, $(X_i,H^1_{it})\mid G_i = g$, and $(X_i,H^0_{it})\mid G_i = \infty$.
\end{assumption}

Assumption~\ref{assumption:overlap} gives positive mass to cohort $g$ and the group of units that never adopt.
Its conditional support restriction makes the mean outcome change among units that never adopt with $H^1_{it}=h$ available at the $(x,h)$ values over which the total effect, switching effect, and spillover effect among units that never adopt are averaged.

\subsection{Identification results}

The following proposition identifies the total effect from observed outcome changes in cohort $g$ and matched units that never adopt.

\begin{prop}[Total effect identification]\label{prop:dte}
  Fix $(g,l)$ and let $t=g+l$.
  Suppose Assumptions~\ref{assumption:exposure_mapping}, \ref{assumption:consistency}, \ref{assumption:no_anticipation}, \ref{assumption:te_parallel_trends}, and \ref{assumption:overlap} hold.
  Then
  \begin{align}
    \tau^{TE}(g,l) & = \mathbb E_g[Y_{it} - Y_{i,t_b}] - \mathbb E_g\!\left[\sum_{h \in \mathcal H_N}\1\{H^0_{it} = h\}\mathbb E[Y_{it} - Y_{i,t_b}\mid G_i = \infty,\ X_i,\ H^1_{it} = h]\right]. \label{eq:te_identification}
  \end{align}
\end{prop}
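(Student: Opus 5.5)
The plan is to start from the definition of $\tau^{TE}(g,l)$ and split it as
\begin{align*}
\tau^{TE}(g,l) = \mathbb E_g\!\left[Y_{it}(g,H^1_{it}) - Y_{i,t_b}(\infty,0)\right] - \mathbb E_g\!\left[Y_{it}(\infty,H^0_{it}) - Y_{i,t_b}(\infty,0)\right],
\end{align*}
adding and subtracting the untreated, unexposed baseline outcome. The two pieces are then handled separately: the first becomes the observed trend $\mathbb E_g[Y_{it}-Y_{i,t_b}]$, and the second becomes the matched comparison term in \eqref{eq:te_identification}.

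For the first piece, Assumptions~\ref{assumption:exposure_mapping} and \ref{assumption:consistency} give $Y_{it}=Y_{it}(g,H^1_{it})$ on $\{G_i=g\}$. At the baseline they give $Y_{i,t_b}=Y_{i,t_b}(g,H^1_{i,t_b})=Y_{i,t_b}(g,0)$, because $H^1_{i,t_b}=0$ by the choice of the common baseline. Assumption~\ref{assumption:no_anticipation} holds conditional on $X_i=x$ for every $x$ in the target support, so integrating over $X_i\mid G_i=g$ gives $Y_{i,t_b}(g,0)=Y_{i,t_b}(\infty,0)$ almost surely on $\{G_i=g\}$. Together these yield $\mathbb E_g[Y_{it}(g,H^1_{it})-Y_{i,t_b}(\infty,0)]=\mathbb E_g[Y_{it}-Y_{i,t_b}]$.

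For the second piece, I will use iterated expectations and the finiteness of $\mathcal H_N$:
\begin{align*}
\mathbb E_g\!\left[Y_{it}(\infty,H^0_{it}) - Y_{i,t_b}(\infty,0)\right] = \mathbb E_g\!\left[\sum_{h\in\mathcal H_N}\1\{H^0_{it}=h\}\,\mathbb E\!\left[Y_{it}(\infty,h)-Y_{i,t_b}(\infty,0)\mid G_i=g,X_i,H^0_{it}=h\right]\right].
\end{align*}
The first clause of Assumption~\ref{assumption:te_parallel_trends} then replaces each inner conditional expectation by the corresponding one among units with $G_i=\infty$, $X_i$ and $H^1_{it}=h$. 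This replacement is valid for all $(x,h)$ in the support of $(X_i,H^0_{it})\mid G_i=g$, which is exactly where the indicator is nonzero with positive probability. Assumption~\ref{assumption:overlap} guarantees that the conditioning event $\{G_i=\infty,X_i=x,H^1_{it}=h\}$ has positive probability for almost every such $(x,h)$, so the comparison conditional mean is well defined. On $\{G_i=\infty,H^1_{it}=h\}$, consistency and the exposure mapping give $Y_{it}=Y_{it}(\infty,h)$ and $Y_{i,t_b}=Y_{i,t_b}(\infty,0)$; no anticipation is not needed here because $G_i=\infty$. The inner term therefore equals $\mathbb E[Y_{it}-Y_{i,t_b}\mid G_i=\infty,X_i,H^1_{it}=h]$, and combining the two pieces gives \eqref{eq:te_identification}.

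The main obstacle is measure-theoretic bookkeeping rather than algebra. First, the conditional expectations are defined under the averaged law given $\mathcal C_N$. Second, the overlap assumption is stated almost everywhere with respect to $(X_i,H^0_{it})\mid G_i=g$, so the comparison regression must be shown to be uniquely determined on that support. I would argue this by noting that versions of the conditional expectation differ only on null sets of the comparison law. Overlap ensures that such null sets are also null under the target law, since the comparison's conditional mass is bounded below by $\epsilon$. Hence the integral in \eqref{eq:te_identification} does not depend on the version chosen. Integrability of the outcomes is implicitly assumed.
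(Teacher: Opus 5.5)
Your proposal is correct and follows essentially the same route as the paper. Both arguments add and subtract a baseline outcome, use no anticipation, the exposure mapping, and consistency to turn the cohort-$g$ piece into the observed trend, and then apply the first clause of the parallel-trends assumption at $H^0_{it}$, with overlap ensuring the comparison mean is defined. The differences are minor: you add and subtract $Y_{i,t_b}(\infty,0)$ rather than $Y_{i,t_b}(g,0)$, so no anticipation is applied to the first term instead of the second, and you spell out the iterated-expectations step and the version-independence of the comparison regression, which the paper leaves implicit.
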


\begin{proof}
  See Appendix~\ref{app:identification-proofs}.
\end{proof}

The comparison at $H^0_{it}$ identifies the spillover effect among units that never adopt under the counterfactual scenario for cohort $g$.

\begin{prop}[Identification of the spillover effect among units that never adopt]\label{prop:never_treated_spillover}
  Fix $(g,l)$ and let $t=g+l$.
  Suppose Assumptions~\ref{assumption:exposure_mapping}, \ref{assumption:consistency}, \ref{assumption:te_parallel_trends}, and \ref{assumption:overlap} hold.
  Then
  \begin{align}
    \tau_\infty^{SP}(g,l) & = \mathbb E_\infty[Y_{it}-Y_{i,t_b}] - \mathbb E_\infty\!\left[\sum_{h\in\mathcal H_N}\1\{H^0_{it}=h\}\mathbb E[Y_{it}-Y_{i,t_b}\mid G_i=\infty,\ X_i,\ H^1_{it}=h]\right]. \label{eq:never_treated_spillover_identification}
  \end{align}
\end{prop}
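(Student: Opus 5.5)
The plan is to write $\tau_\infty^{SP}(g,l)$ as a difference of two trends, one observed and one counterfactual, and to identify the counterfactual trend with the second clause of Assumption~\ref{assumption:te_parallel_trends}. No anticipation is not needed: units with $G_i=\infty$ never adopt, and at the common baseline $H^1_{i,t_b}=0$.

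First, add and subtract the baseline outcome $Y_{i,t_b}(\infty,0)$ inside the expectation:
\begin{align*}
  \tau_\infty^{SP}(g,l) = \mathbb E_\infty\!\left[Y_{it}(\infty,H^1_{it}) - Y_{i,t_b}(\infty,0)\right] - \mathbb E_\infty\!\left[Y_{it}(\infty,H^0_{it}) - Y_{i,t_b}(\infty,0)\right].
\end{align*}
For units with $G_i=\infty$, Assumption~\ref{assumption:exposure_mapping} and Assumption~\ref{assumption:consistency} give $Y_{it}=Y_{it}(\infty,H^1_{it})$. Since $H^1_{i,t_b}=0$, they also give $Y_{i,t_b}=Y_{i,t_b}(\infty,0)$. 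The first term is therefore $\mathbb E_\infty[Y_{it}-Y_{i,t_b}]$.

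For the second term, I will iterate expectations over $(X_i,H^0_{it})$ given $G_i=\infty$. Because $\mathcal H_N$ is finite, the term can be written as
\[
\mathbb E_\infty\Bigl[\textstyle\sum_{h}\1\{H^0_{it}=h\}\,\mathbb E[Y_{it}(\infty,h)-Y_{i,t_b}(\infty,0)\mid G_i=\infty,X_i,H^0_{it}=h]\Bigr].
\]
On the event $H^0_{it}=h$, the outcome $Y_{it}(\infty,H^0_{it})$ equals $Y_{it}(\infty,h)$, which justifies this step.

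Next, apply the second clause of Assumption~\ref{assumption:te_parallel_trends}. This replaces the inner conditioning event $H^0_{it}=h$ with $H^1_{it}=h$ for every $(x,h)$ in the relevant support. On $\{G_i=\infty,\,H^1_{it}=h\}$, consistency together with the exposure mapping gives $Y_{it}(\infty,h)=Y_{it}$. The inner mean thus becomes the observable $\mathbb E[Y_{it}-Y_{i,t_b}\mid G_i=\infty,X_i,H^1_{it}=h]$. Subtracting the two terms then yields \eqref{eq:never_treated_spillover_identification}.

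The main obstacle is checking that this conditional mean is well defined at every $(x,h)$ where the outer expectation puts mass. That is exactly what the third clause of Assumption~\ref{assumption:overlap} guarantees for the law of $(X_i,H^0_{it})\mid G_i=\infty$. Outside that set the terms have measure zero, so I will handle them with an almost-everywhere statement. A second point to be careful about is that $H^0_{it}$ enters only as a predetermined conditioning variable induced by the counterfactual scenario, not as an observed outcome state. The substitution $Y_{it}(\infty,H^0_{it})=Y_{it}(\infty,h)$ on $\{H^0_{it}=h\}$ is pathwise, so it needs no further justification.
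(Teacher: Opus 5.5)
Your proposal is correct and follows essentially the same route as the paper's proof. Both add and subtract $Y_{i,t_b}(\infty,0)$, identify the first term through exposure mapping and consistency with $H^1_{i,t_b}=0$, and replace the counterfactual trend by the never-adopter conditional mean evaluated at $(X_i,H^0_{it})$ using the second clause of Assumption~\ref{assumption:te_parallel_trends}, with overlap ensuring that mean is defined on the relevant support. One small wording point: $H^0_{it}$ is generally not predetermined, since under a withholding scenario it depends on other cohorts' realized adoption dates, but your argument only uses the pathwise substitution on $\{H^0_{it}=h\}$, so nothing changes.
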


\begin{proof}
  See Appendix~\ref{app:identification-proofs}.
\end{proof}

The second result identifies the switching effect at the realized exposure state.

\begin{prop}[Switching effect identification]\label{prop:dse}
  Fix \((g,l)\) and let \(t=g+l\).
  Suppose Assumptions~\ref{assumption:exposure_mapping}, \ref{assumption:consistency}, \ref{assumption:no_anticipation}, \ref{assumption:dse_parallel_trends}, and \ref{assumption:overlap} hold.
  Then
  \begin{align}
    \tau^{SW}(g,l) & = \mathbb E_g[Y_{it} - Y_{i,t_b}] - \mathbb E_g\!\left[\sum_{h \in \mathcal H_N}\1\{H^1_{it} = h\}\mathbb E[Y_{it} - Y_{i,t_b}\mid G_i = \infty,\ X_i,\ H^1_{it} = h]\right]. \label{eq:sw_identification}
  \end{align}
\end{prop}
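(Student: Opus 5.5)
The plan is to split $\tau^{SW}(g,l)$ into the observed mean for cohort $g$ and a missing counterfactual mean, and to rewrite the latter using the realized-exposure comparison. This parallels Proposition~\ref{prop:dte} with $H^1_{it}$ in place of $H^0_{it}$. Since both terms in the effect share $Y_{i,t_b}(\infty,0)$, add and subtract it:
\begin{align*}
  \tau^{SW}(g,l) = \mathbb E_g\!\left[Y_{it}(g,H^1_{it}) - Y_{i,t_b}(\infty,0)\right] - \mathbb E_g\!\left[Y_{it}(\infty,H^1_{it}) - Y_{i,t_b}(\infty,0)\right].
\end{align*}

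For the first term, Assumptions~\ref{assumption:exposure_mapping} and \ref{assumption:consistency} give $Y_{it}=Y_{it}(g,H^1_{it})$ for units with $G_i=g$. At the baseline they give $Y_{i,t_b}=Y_{i,t_b}(g,H^1_{i,t_b})=Y_{i,t_b}(g,0)$, because $H^1_{i,t_b}=0$ by the choice of common baseline. Assumption~\ref{assumption:no_anticipation}, integrated over $X_i$ given $G_i=g$, then replaces $Y_{i,t_b}(g,0)$ by $Y_{i,t_b}(\infty,0)$. Hence the first term equals $\mathbb E_g[Y_{it}-Y_{i,t_b}]$.

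For the second term, I will use iterated expectations conditioning on $(X_i,H^1_{it})$ within cohort $g$. Because $\mathcal H_N$ is finite, the term can be written as
\begin{align*}
  \mathbb E_g\!\left[\sum_{h\in\mathcal H_N}\1\{H^1_{it}=h\}\,\mathbb E\!\left[Y_{it}(\infty,h)-Y_{i,t_b}(\infty,0)\mid G_i=g,X_i,H^1_{it}=h\right]\right].
\end{align*}
On each event $H^1_{it}=h$ I replace $Y_{it}(\infty,H^1_{it})$ by $Y_{it}(\infty,h)$. Assumption~\ref{assumption:dse_parallel_trends} then swaps the inner conditional expectation for the same expectation among units with $G_i=\infty$, $X_i$, and $H^1_{it}=h$.

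It remains to make that expectation observable. For never-adopters with $H^1_{it}=h$, consistency and the exposure mapping give $Y_{it}=Y_{it}(\infty,h)$. At the baseline, $H^1_{i,t_b}=0$ gives $Y_{i,t_b}=Y_{i,t_b}(\infty,0)$ directly, with no anticipation assumption needed. Substituting yields \eqref{eq:sw_identification}.

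The delicate step is ensuring that the inner conditional expectation among never-adopters is well defined at every $(X_i,h)$ reached under the law of $(X_i,H^1_{it})$ given $G_i=g$. Assumption~\ref{assumption:overlap} supplies this, since it bounds $\Pr(G_i=\infty,H^1_{it}=h\mid X_i=x)$ away from zero for almost every such $(x,h)$. The parallel trends swap is then applied only on this support, and null sets do not affect the outer expectation.
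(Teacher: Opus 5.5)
Your proposal is correct and follows essentially the same route as the paper's proof. You add and subtract a baseline potential outcome, use the exposure mapping, consistency, $H^1_{i,t_b}=0$, and no anticipation to reduce the first term to $\mathbb E_g[Y_{it}-Y_{i,t_b}]$, and then use parallel trends at the realized exposure together with overlap to replace the counterfactual trend by the never-adopter conditional mean. The differences are cosmetic: the paper centers at $Y_{i,t_b}(g,0)$ and applies no anticipation to the counterfactual term, and it shows the never-adopter mean is observable in a preliminary definition of $\mu_t$, whereas you write out the iterated-expectations step explicitly.
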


\begin{proof}
  See Appendix~\ref{app:identification-proofs}.
\end{proof}

Subtracting \eqref{eq:sw_identification} from \eqref{eq:te_identification} identifies $\tau^{SP}(g,l)$.

\begin{prop}[Spillover effect identification]\label{prop:cse}
  Fix $(g,l)$ and let $t=g+l$.
  Suppose the conditions of Propositions~\ref{prop:dte} and \ref{prop:dse} hold.
  Then
  \begin{align}
    \tau^{SP}(g,l) & = \tau^{TE}(g,l)  - \tau^{SW}(g,l) \notag                                                                                                                                                         \\
                   & = \mathbb E_g\!\left[\sum_{h \in \mathcal H_N}\{\1\{H^1_{it} = h\} - \1\{H^0_{it} = h\}\}\mathbb E[Y_{it} - Y_{i,t_b}\mid G_i = \infty,\ X_i,\ H^1_{it} = h]\right]. \label{eq:sp_identification}
  \end{align}
\end{prop}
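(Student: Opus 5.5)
The first equality of Proposition~\ref{prop:cse} is the population decomposition from Section~\ref{sec:policy-effects}. For each unit in cohort $g$ I will evaluate the identity
\[
Y_{it}(g,h^1)-Y_{it}(\infty,h^0)=\{Y_{it}(g,h^1)-Y_{it}(\infty,h^1)\}+\{Y_{it}(\infty,h^1)-Y_{it}(\infty,h^0)\}
\]
at $(h^1,h^0)=(H^1_{it},H^0_{it})$. Then I take $\mathbb E_g$ of both sides, which gives $\tau^{TE}(g,l)=\tau^{SW}(g,l)+\tau^{SP}(g,l)$. Linearity of expectation is justified by integrability of the potential outcomes, which is implicit in the definitions of the three effects. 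Rearranging gives $\tau^{SP}(g,l)=\tau^{TE}(g,l)-\tau^{SW}(g,l)$, and this step uses no identifying assumptions.

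For the second equality, I will invoke Propositions~\ref{prop:dte} and \ref{prop:dse}, whose hypotheses hold by assumption. They express $\tau^{TE}(g,l)$ and $\tau^{SW}(g,l)$ through \eqref{eq:te_identification} and \eqref{eq:sw_identification}. Both share the leading term $\mathbb E_g[Y_{it}-Y_{i,t_b}]$, which cancels in the difference. What remains is
\[
\mathbb E_g\Big[\sum_{h}\1\{H^1_{it}=h\}m(X_i,h)\Big]-\mathbb E_g\Big[\sum_{h}\1\{H^0_{it}=h\}m(X_i,h)\Big],
\]
where $m(x,h)\equiv\mathbb E[Y_{it}-Y_{i,t_b}\mid G_i=\infty,X_i=x,H^1_{it}=h]$. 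I combine the two expectations into a single $\mathbb E_g$ of the difference of the finite sums. This yields \eqref{eq:sp_identification}.

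The only point needing care is that each expectation is finite and that $m(X_i,h)$ is well defined on the relevant supports. Finiteness is needed so that subtracting and merging the expectations is legitimate.

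\begin{itemize}
\item \textbf{$m$ well defined.} Assumption~\ref{assumption:overlap} guarantees positive conditional mass of never-adopting units at $(x,h)$. This holds for almost every $(x,h)$ under the laws of $(X_i,H^1_{it})$ and $(X_i,H^0_{it})$ given $G_i=g$, so $m(X_i,H^1_{it})$ and $m(X_i,H^0_{it})$ are defined $\mathbb E_g$-almost surely.
\item \textbf{Finiteness.} Because $\sup_N|\mathcal H_N|<\infty$, each sum over $h$ is finite. Integrability of $Y_{it}-Y_{i,t_b}$ then makes each term integrable, since both the overlap bound $\epsilon$ and $\Pr(G_i=g)>\epsilon$ bound the relevant conditioning ratios.
\end{itemize}

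I expect no real obstacle here: the work was done in Propositions~\ref{prop:dte} and \ref{prop:dse}, and this result is an algebraic corollary.
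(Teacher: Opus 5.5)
Your proposal is correct and matches the paper's proof: the first equality comes from averaging the unit-level decomposition over cohort $g$, and the second comes from subtracting the representations in Propositions~\ref{prop:dte} and \ref{prop:dse}, so that $\mathbb E_g[Y_{it}-Y_{i,t_b}]$ cancels. Your checks that the conditional means are well defined and integrable, via Assumption~\ref{assumption:overlap} and $\sup_N|\mathcal H_N|<\infty$, are more explicit than the paper, which leaves them implicit, but they do not change the argument.
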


\begin{proof}
  For each target unit,
  \begin{align*}
     & \{Y_{it}(g,H^1_{it}) - Y_{it}(\infty,H^0_{it})\} - \{Y_{it}(g,H^1_{it}) - Y_{it}(\infty,H^1_{it})\} \\
     & = Y_{it}(\infty,H^1_{it}) - Y_{it}(\infty,H^0_{it}).
  \end{align*}
  Averaging over cohort $g$ gives $\tau^{SP}(g,l)=\tau^{TE}(g,l)-\tau^{SW}(g,l)$.
  Subtracting the observed-outcome representations in Propositions~\ref{prop:dte} and \ref{prop:dse} cancels the outcome change for cohort $g$ and gives \eqref{eq:sp_identification}.
\end{proof}

\subsection{Alternative identification through exposure contrasts among never-treated units}\label{sec:alternative-identification}

Fix $(g,l)$ and let $t=g+l$.
The preceding results use parallel trends evaluated at the counterfactual and realized exposure states to identify the total and switching effects.
The alternative strategy in this subsection does not impose parallel trends at the counterfactual exposure.
It first identifies the spillover effect for cohort $g$ from exposure contrasts among units that never adopt and a transportability restriction.
Combining this result with the switching effect in Proposition~\ref{prop:dse} then identifies the total effect.

\begin{assumption}[Parallel trends across exposure states]\label{assumption:exposure_parallel_trends}
  For every pair of relevant states $h$ and $h'$ and covariate $x$,
  \begin{align*}
     & \mathbb E[Y_{it}(\infty,0)-Y_{i,t_b}(\infty,0) \mid G_i=\infty,\ X_i=x,\ H^1_{it}=h]     \\
     & = \mathbb E[Y_{it}(\infty,0)-Y_{i,t_b}(\infty,0) \mid G_i=\infty,\ X_i=x,\ H^1_{it}=h'].
  \end{align*}
\end{assumption}

Conditional on $X_i$, Assumption~\ref{assumption:exposure_parallel_trends} equates the trends at zero exposure for units that never adopt and are observed in different exposure states.
With exposure mapping, consistency, and the common unexposed baseline, differences in their observed changes identify differences in untreated exposure effects.
Trends at zero exposure and exposure effects may vary with $X_i$.

\begin{assumption}[Transportability of untreated exposure effects]\label{assumption:exposure_transportability}
  For every $(x,h^1,h^0)$ in the target support,
  \begin{align*}
     & \mathbb E[Y_{it}(\infty,h^1)-Y_{it}(\infty,h^0) \mid G_i=g,\ X_i=x,\ H^1_{it}=h^1,\ H^0_{it}=h^0] \\
     & = \mathbb E[Y_{it}(\infty,h^1)-Y_{it}(\infty,0) \mid G_i=\infty,\ X_i=x,\ H^1_{it}=h^1]           \\
     & - \mathbb E[Y_{it}(\infty,h^0)-Y_{it}(\infty,0) \mid G_i=\infty,\ X_i=x,\ H^1_{it}=h^0].
  \end{align*}
\end{assumption}

Assumption~\ref{assumption:exposure_transportability} imposes conditional mean homogeneity of the untreated exposure contrast from $h^0$ to $h^1$ between cohort $g$ and units that never adopt, given $(x,h^1,h^0)$.
For units that never adopt, the two exposure effects are evaluated in the respective subpopulations observed at $h^1$ and $h^0$.
The assumption does not equate outcome levels or trends at zero exposure across the two populations.
Under exposure mapping, consistency, and the common unexposed baseline, Assumption~\ref{assumption:exposure_parallel_trends} identifies the difference between these effects from observed outcome changes among units that never adopt.
Transportability assigns that difference to cohort $g$.

Assumption~\ref{assumption:overlap} ensures that both target exposure states occur among units that never adopt at covariate value $x$.
For a target pair $(h^1,h^0)$, the formula evaluates conditional means among units that never adopt at $(x,h^1)$ and $(x,h^0)$.

\begin{prop}[Spillover effect identification from untreated exposure contrasts]\label{prop:alternative-identification}
  Suppose Assumptions~\ref{assumption:exposure_mapping}, \ref{assumption:consistency}, \ref{assumption:overlap}, \ref{assumption:exposure_parallel_trends}, and \ref{assumption:exposure_transportability} hold.
  Then
  \begin{align*}
    \tau^{SP}(g,l) & = \mathbb E_g\!\left[\sum_{h \in \mathcal H_N}\{\1\{H^1_{it} = h\} - \1\{H^0_{it} = h\}\}\mathbb E[Y_{it} - Y_{i,t_b}\mid G_i = \infty,\ X_i,\ H^1_{it} = h]\right].
  \end{align*}
  If the conditions of Proposition~\ref{prop:dse} also hold, then
  \begin{align*}
    \tau^{TE}(g,l) & = \left\{\mathbb E_g[Y_{it} - Y_{i,t_b}] - \mathbb E_g\!\left[\sum_{h \in \mathcal H_N}\1\{H^1_{it} = h\}\mathbb E[Y_{it} - Y_{i,t_b}\mid G_i = \infty,\ X_i,\ H^1_{it} = h]\right]\right\} \\
                   & \mathrel{\phantom{=}} + \mathbb E_g\!\left[\sum_{h \in \mathcal H_N}\{\1\{H^1_{it} = h\} - \1\{H^0_{it} = h\}\}\mathbb E[Y_{it} - Y_{i,t_b}\mid G_i = \infty,\ X_i,\ H^1_{it} = h]\right]   \\
                   & = \mathbb E_g[Y_{it} - Y_{i,t_b}] - \mathbb E_g\!\left[\sum_{h \in \mathcal H_N}\1\{H^0_{it} = h\}\mathbb E[Y_{it} - Y_{i,t_b}\mid G_i = \infty,\ X_i,\ H^1_{it} = h]\right].
  \end{align*}
\end{prop}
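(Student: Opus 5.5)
Condition on $(X_i,H^1_{it},H^0_{it})$ within cohort $g$ and work at a fixed target triple $(x,h^1,h^0)$. By iterated expectations,
\begin{align*}
\tau^{SP}(g,l)=\mathbb E_g\!\left[\mathbb E[Y_{it}(\infty,H^1_{it})-Y_{it}(\infty,H^0_{it})\mid G_i=g,\ X_i,\ H^1_{it},\ H^0_{it}]\right].
\end{align*}
Assumption~\ref{assumption:exposure_transportability} replaces the inner conditional mean by
\begin{align*}
m(x,h^1)-m(x,h^0),\qquad m(x,h)\equiv\mathbb E[Y_{it}(\infty,h)-Y_{it}(\infty,0)\mid G_i=\infty,\ X_i=x,\ H^1_{it}=h].
\end{align*}
This is defined at $(x,h^1)$ and $(x,h^0)$ by Assumption~\ref{assumption:overlap}, since both states carry positive mass among never-adopters at $x$. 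The remaining task is to express $m(x,h)$ through observed outcomes.

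For a never-adopter with $H^1_{it}=h$, Assumptions~\ref{assumption:exposure_mapping} and \ref{assumption:consistency} give $Y_{it}=Y_{it}(\infty,h)$. They also give $Y_{i,t_b}=Y_{i,t_b}(\infty,0)$, because $H^1_{i,t_b}=0$ at the common baseline. Add and subtract $Y_{i,t_b}(\infty,0)$:
\begin{align*}
m(x,h)=\mathbb E[Y_{it}-Y_{i,t_b}\mid G_i=\infty,\ X_i=x,\ H^1_{it}=h]-\mathbb E[Y_{it}(\infty,0)-Y_{i,t_b}(\infty,0)\mid G_i=\infty,\ X_i=x,\ H^1_{it}=h].
\end{align*}
By Assumption~\ref{assumption:exposure_parallel_trends}, the second term does not depend on $h$. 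Call it $\mu_0(x)$. Then
\begin{align*}
m(x,h^1)-m(x,h^0)=\mu(x,h^1)-\mu(x,h^0),\qquad \mu(x,h)\equiv\mathbb E[Y_{it}-Y_{i,t_b}\mid G_i=\infty,\ X_i=x,\ H^1_{it}=h],
\end{align*}
because $\mu_0(x)$ cancels. Since $\mathcal H_N$ is finite, write $\mu(X_i,H^1_{it})-\mu(X_i,H^0_{it})=\sum_h\{\1\{H^1_{it}=h\}-\1\{H^0_{it}=h\}\}\mu(X_i,h)$. Taking $\mathbb E_g$ then yields the first display.

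The main obstacle is that $\mu_0(x)$ involves the unobserved $Y_{it}(\infty,0)$ for exposed never-adopters, and exposure $h=0$ may even be absent at $x$. The key step is therefore to use Assumption~\ref{assumption:exposure_parallel_trends} only to show that this term is constant in $h$, so it cancels without ever being identified. I will also check that overlap covers both $(x,h^1)$ under $(X_i,H^1_{it})\mid G_i=g$ and $(x,h^0)$ under $(X_i,H^0_{it})\mid G_i=g$, as Assumption~\ref{assumption:overlap} states.

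For the total effect, use the definitional decomposition $\tau^{TE}(g,l)=\tau^{SW}(g,l)+\tau^{SP}(g,l)$ from Section~\ref{sec:policy-effects}. Insert \eqref{eq:sw_identification} from Proposition~\ref{prop:dse} and the spillover formula just derived. The terms $\pm\,\mathbb E_g[\sum_h\1\{H^1_{it}=h\}\mu(X_i,h)]$ cancel, which leaves the final expression.
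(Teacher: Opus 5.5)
Your proposal is correct and follows essentially the same route as the paper: you rewrite the never-adopter mean $\mu_t(x,h)$ using consistency, the exposure mapping, and the unexposed baseline, add and subtract $Y_{it}(\infty,0)$ so that Assumption~\ref{assumption:exposure_parallel_trends} cancels the zero-exposure trend, pass to cohort $g$ with Assumption~\ref{assumption:exposure_transportability}, and recover the total effect from $\tau^{TE}=\tau^{SW}+\tau^{SP}$ combined with Proposition~\ref{prop:dse}. The only difference is cosmetic: you run the chain of equalities from the cohort-$g$ conditional mean toward $\mu_t(x,h^1)-\mu_t(x,h^0)$, whereas the paper runs it in the opposite direction.
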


\begin{proof}
  See Appendix~\ref{app:identification-proofs}.
\end{proof}

Proposition~\ref{prop:cse} uses comparisons under parallel trends between cohort $g$ and units that never adopt at $H^0_{it}$ and $H^1_{it}$.
Proposition~\ref{prop:alternative-identification} uses parallel trends across exposure states among units that never adopt and transportability to cohort $g$.
When the switching conditions also hold, the alternative result gives the same representation of the total effect in terms of observed outcomes as Proposition~\ref{prop:dte}.
Proposition~\ref{prop:never_treated_spillover} matches on exposure and averages the counterfactual trend over units that never adopt, whereas Proposition~\ref{prop:alternative-identification} standardizes exposure contrasts among those units to cohort $g$.

\section{Estimation}\label{sec:estimation}

The identification results in Section~\ref{sec:identification} yield regression estimators by cohort and event time.

\subsection{Outcome regression estimators by cohort and event time}

Fix $(g,l)$ and let $t=g+l$.
Write $\Delta Y_{it}=Y_{it}-Y_{i,t_b}$, $N_g=\sum_{i=1}^N\1\{G_i=g\}$, and $N_\infty=\sum_{i=1}^N\1\{G_i=\infty\}$.
The outcome regression among units that never adopt is
\begin{align*}
  m_t(x,h) & = \mathbb E[\Delta Y_{it}\mid G_i=\infty,\ X_i=x,\ H^1_{it}=h].
\end{align*}
For example, a parametric specification can be estimated by least squares after regressing $\Delta Y_{it}$ among units that never adopt on an intercept, functions of $X_i$, indicators for the exposure states, and selected interactions between the covariates and exposure indicators.
We use a fitted value $\widehat m_t^{TE,g,l}$ for the total effect and the spillover effect among units that never adopt and allow a fit $\widehat m_t^{SW,g,l}$ for the switching effect.
Substituting these fitted regressions into the identifying results gives
\begin{align}
  \widehat\tau_{OR}^{TE}(g,l)        & = \dfrac{1}{N_g}\sum_{i=1}^N\1\{G_i=g\}\{\Delta Y_{it}-\widehat m_t^{TE,g,l}(X_i,H^0_{it})\}, \label{eq:te_or_estimator}                      \\
  \widehat\tau_{OR}^{SW}(g,l)        & = \dfrac{1}{N_g}\sum_{i=1}^N\1\{G_i=g\}\{\Delta Y_{it}-\widehat m_t^{SW,g,l}(X_i,H^1_{it})\}, \label{eq:sw_or_estimator}                      \\
  \widehat\tau_{OR}^{SP}(g,l)        & = \widehat\tau_{OR}^{TE}(g,l)-\widehat\tau_{OR}^{SW}(g,l), \label{eq:sp_or_estimator}                                                         \\
  \widehat\tau_{OR,\infty}^{SP}(g,l) & = \dfrac{1}{N_\infty}\sum_{i=1}^N\1\{G_i=\infty\}\{\Delta Y_{it}-\widehat m_t^{TE,g,l}(X_i,H^0_{it})\}. \label{eq:never_treated_spillover_or}
\end{align}

\subsection{Analysis weights and event-time aggregation}\label{sec:analysis-weights}

Analysis weights define the population over which the causal effects are averaged.
Let $\varpi_i > 0$ be predetermined.
For each cohort used in estimation and for units that never adopt, define the positive masses
\begin{align*}
  s_{\varpi,g}      & = \mathbb E[\varpi_i\1\{G_i=g\}] = \Pr (G_i=g)\mathbb E[\varpi_i\mid G_i=g] > 0,                \\
  s_{\varpi,\infty} & = \mathbb E[\varpi_i\1\{G_i=\infty\}] = \Pr (G_i=\infty)\mathbb E[\varpi_i\mid G_i=\infty] > 0.
\end{align*}
Each mass is the cohort probability multiplied by the mean analysis weight in that cohort.
When the weights are constant, it reduces to that constant multiplied by the cohort probability.
For a fixed $(g,l)$ and $t=g+l$, define
\begin{align*}
  \tau_\varpi^{TE}(g,l)          & \equiv \dfrac{1}{s_{\varpi,g}}\mathbb E\!\left[\varpi_i\1\{G_i = g\}\{Y_{it}(g,H^1_{it}) - Y_{it}(\infty,H^0_{it})\}\right],                \\
  \tau_\varpi^{SW}(g,l)          & \equiv \dfrac{1}{s_{\varpi,g}}\mathbb E\!\left[\varpi_i\1\{G_i = g\}\{Y_{it}(g,H^1_{it}) - Y_{it}(\infty,H^1_{it})\}\right],                \\
  \tau_\varpi^{SP}(g,l)          & \equiv \dfrac{1}{s_{\varpi,g}}\mathbb E\!\left[\varpi_i\1\{G_i = g\}\{Y_{it}(\infty,H^1_{it}) - Y_{it}(\infty,H^0_{it})\}\right],           \\
  \tau_{\varpi,\infty}^{SP}(g,l) & \equiv \dfrac{1}{s_{\varpi,\infty}}\mathbb E\!\left[\varpi_i\1\{G_i = \infty\}\{Y_{it}(\infty,H^1_{it}) - Y_{it}(\infty,H^0_{it})\}\right].
\end{align*}
The decomposition gives $\tau_\varpi^{TE}(g,l)=\tau_\varpi^{SW}(g,l)+\tau_\varpi^{SP}(g,l)$.
The parallel trends and overlap conditions in Section~\ref{sec:identification} are imposed after reweighting the relevant populations by $\varpi_i$.
The weighted versions of \eqref{eq:te_or_estimator}, \eqref{eq:sw_or_estimator}, and \eqref{eq:never_treated_spillover_or} use the corresponding fitted weighted outcome regressions, multiply each summand by $\varpi_i$, and replace $N_g$ and $N_\infty$ by $\sum_{i=1}^N\varpi_i\1\{G_i=g\}$ and $\sum_{i=1}^N\varpi_i\1\{G_i=\infty\}$, respectively.
The weighted spillover estimator for cohort $g$ is the difference between the total and switching estimators.

For aggregation by event time, let $\mathcal L$ be the fixed set of event times and let $\mathcal G_l$ be the set of cohorts included at event time $l$.
The event-time effects weight each cohort in proportion to its total analysis weight.
Define
\begin{align*}
  \omega_g(l) & = \dfrac{s_{\varpi,g}}{\sum_{g'\in\mathcal G_l}s_{\varpi,g'}}
\end{align*}
and its sample analogue as
\begin{align*}
  \widehat\omega_g(l) & = \dfrac{\sum_{i=1}^N\varpi_i\1\{G_i=g\}}{\sum_{g'\in\mathcal G_l}\sum_{i=1}^N\varpi_i\1\{G_i=g'\}}.
\end{align*}
For $Q\in\{TE,SW\}$, the population effects by event time are
\begin{align*}
  \tau_\varpi^Q(l)                 & = \sum_{g\in\mathcal G_l}\omega_g(l)\tau_\varpi^Q(g,l),             \\
  \tau_\varpi^{SP}(l)              & = \tau_\varpi^{TE}(l)-\tau_\varpi^{SW}(l),                          \\
  \bar\tau_{\varpi,\infty}^{SP}(l) & = \sum_{g\in\mathcal G_l}\omega_g(l)\tau_{\varpi,\infty}^{SP}(g,l).
\end{align*}
Replacing the cohort masses and effects by their sample counterparts gives
\begin{align*}
  \widehat\tau_\varpi^Q(l)                   & = \sum_{g\in\mathcal G_l}\widehat\omega_g(l)\widehat\tau_\varpi^Q(g,l),             \\
  \widehat\tau_\varpi^{SP}(l)                & = \widehat\tau_\varpi^{TE}(l)-\widehat\tau_\varpi^{SW}(l),                          \\
  \widehat{\bar\tau}_{\varpi,\infty}^{SP}(l) & = \sum_{g\in\mathcal G_l}\widehat\omega_g(l)\widehat\tau_{\varpi,\infty}^{SP}(g,l).
\end{align*}
Here $\widehat\tau_{\varpi,\infty}^{SP}(g,l)$ is the weighted version of \eqref{eq:never_treated_spillover_or}.

\subsection{Doubly robust estimators}

Fix $(g,l)$, with $t=g+l$.
The doubly robust estimators augment the outcome regression estimators with odds-weighted residuals from units that never adopt.
Define
\begin{align*}
  \lambda_{TE}^{g,l}(x,h) & = \dfrac{\Pr (G_i=g,\ H^0_{it}=h\mid X_i=x)}{\Pr (G_i=\infty,\ H^1_{it}=h\mid X_i=x)}, \\
  \lambda_{SW}^{g,l}(x,h) & = \dfrac{\Pr (G_i=g,\ H^1_{it}=h\mid X_i=x)}{\Pr (G_i=\infty,\ H^1_{it}=h\mid X_i=x)}.
\end{align*}
For $Q\in\{TE,SW\}$, extend $\lambda_Q^{g,l}$ and $\widehat\lambda_Q^{g,l}$ by zero outside the support of the numerator of $\lambda_Q^{g,l}$.
Following \citet{sant2020doubly} and \citet{XU2026106304}, the doubly robust estimators are
\begin{align}
  \widehat\tau_{DR}^{TE}(g,l) & = \dfrac{1}{N_g}\sum_{i=1}^N\1\{G_i=g\}\{\Delta Y_{it}-\widehat m_t^{TE,g,l}(X_i,H^0_{it})\} \notag                                                                   \\
                              & - \dfrac{1}{N_g}\sum_{i=1}^N\1\{G_i=\infty\}\widehat\lambda_{TE}^{g,l}(X_i,H^1_{it})\{\Delta Y_{it}-\widehat m_t^{TE,g,l}(X_i,H^1_{it})\}, \label{eq:te_dr_estimator} \\
  \widehat\tau_{DR}^{SW}(g,l) & = \dfrac{1}{N_g}\sum_{i=1}^N\1\{G_i=g\}\{\Delta Y_{it}-\widehat m_t^{SW,g,l}(X_i,H^1_{it})\}   \notag                                                                 \\
                              & - \dfrac{1}{N_g}\sum_{i=1}^N\1\{G_i=\infty\}\widehat\lambda_{SW}^{g,l}(X_i,H^1_{it})\{\Delta Y_{it}-\widehat m_t^{SW,g,l}(X_i,H^1_{it})\}, \label{eq:sw_dr_estimator} \\
  \widehat\tau_{DR}^{SP}(g,l) & = \widehat\tau_{DR}^{TE}(g,l)-\widehat\tau_{DR}^{SW}(g,l). \label{eq:sp_dr_estimator}
\end{align}
For $Q\in\{TE,SW\}$, let $\widetilde m_t^{Q,g,l}$ and $\widetilde\lambda_Q^{g,l}$ denote the conditional mean and odds functions under the models specified by the researcher, and let $\tau_{DR}^Q(g,l)$ be the corresponding population functional.

\begin{prop}[Double robustness]\label{prop:dr_components}
  Suppose Assumptions~\ref{assumption:exposure_mapping}, \ref{assumption:consistency}, \ref{assumption:no_anticipation}, \ref{assumption:te_parallel_trends}, \ref{assumption:dse_parallel_trends}, and \ref{assumption:overlap} hold.
  For each $Q\in\{TE,SW\}$, if either $\widetilde m_t^{Q,g,l}=m_t$ or $\widetilde\lambda_Q^{g,l}=\lambda_Q^{g,l}$, then $\tau_{DR}^Q(g,l)=\tau^Q(g,l)$.
  If this condition holds for both effects, then $\tau_{DR}^{TE}(g,l)-\tau_{DR}^{SW}(g,l)=\tau^{SP}(g,l)$.
\end{prop}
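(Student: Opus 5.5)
The plan is to write the population functional with $N_g$ replaced by $\Pr(G_i=g)$:
\begin{align*}
  \tau_{DR}^Q(g,l) = \frac{1}{\Pr(G_i=g)}\Big\{ & \mathbb E[\1\{G_i=g\}\{\Delta Y_{it}-\widetilde m_t^{Q,g,l}(X_i,H^{Q}_{it})\}]                                                 \\
                                                & - \mathbb E[\1\{G_i=\infty\}\widetilde\lambda_Q^{g,l}(X_i,H^1_{it})\{\Delta Y_{it}-\widetilde m_t^{Q,g,l}(X_i,H^1_{it})\}]\Big\},
\end{align*}
with $H^{TE}_{it}=H^0_{it}$ and $H^{SW}_{it}=H^1_{it}$. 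We then compare it with the observed-outcome representations of Propositions~\ref{prop:dte} and \ref{prop:dse}. Since those propositions already give $\tau^Q(g,l)$ as $\mathbb E_g[\Delta Y_{it}]-\mathbb E_g[m_t(X_i,H^Q_{it})]$ under the stated assumptions, it suffices to show that the functional equals this expression. Each case is reduced to a statement about observables only. The treatment of both effects is identical apart from replacing $H^0$ by $H^1$ in the numerator of the odds, so we treat a generic $Q$.

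\emph{Case 1: $\widetilde m=m_t$.} The second expectation conditions on $(X_i,H^1_{it})$ within $G_i=\infty$. By the definition of $m_t$, $\mathbb E[\Delta Y_{it}-m_t(X_i,H^1_{it})\mid G_i=\infty,X_i,H^1_{it}]=0$. By iterated expectations, the augmentation term is therefore zero for any (bounded, measurable) $\widetilde\lambda$, because $\widetilde\lambda$ is a function of $(X_i,H^1_{it})$ only. The first term is then exactly the identification formula. Overlap (Assumption~\ref{assumption:overlap}) guarantees that $m_t$ is well defined on the relevant support.

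\emph{Case 2: $\widetilde\lambda=\lambda_Q^{g,l}$ with arbitrary $\widetilde m$.} Write $\Delta Y_{it}-\widetilde m = \{\Delta Y_{it}-m_t\} + \{m_t-\widetilde m\}$ in the augmentation term. The first piece vanishes as in Case 1. The remaining piece is
\[
\mathbb E[\1\{G_i=\infty\}\lambda_Q^{g,l}(X_i,H^1_{it})\delta(X_i,H^1_{it})],\qquad \delta=m_t-\widetilde m .
\]
Condition on $X_i=x$ and sum over $h\in\mathcal H_N$; this is legitimate because $\mathcal H_N$ is finite. The piece becomes
\[
\mathbb E_X\Big[\sum_h \Pr(G_i=\infty,H^1_{it}=h\mid X_i)\,\lambda_Q^{g,l}(X_i,h)\,\delta(X_i,h)\Big] = \mathbb E_X\Big[\sum_h \Pr(G_i=g,H^Q_{it}=h\mid X_i)\,\delta(X_i,h)\Big],
\]
which equals $\mathbb E[\1\{G_i=g\}\delta(X_i,H^Q_{it})]$. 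The zero extension of $\lambda$ outside the numerator's support, together with overlap keeping the denominator positive on that support, makes this change of measure valid. Subtracting this from the first term replaces $\widetilde m(X_i,H^Q_{it})$ by $m_t(X_i,H^Q_{it})$, which again yields the identification formula.

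The main obstacle is this change-of-measure step. It needs the odds to be defined with the joint probabilities conditional on $X$, so that the densities of $X$ cancel. It also needs overlap to hold for almost every $(x,h)$ in the support of $(X_i,H^Q_{it})\mid G_i=g$, so that no mass of the target is lost when the zero extension is used. Integrability of $\delta$ and of $\Delta Y$ is also implicitly required. The final claim follows by subtracting the two equalities and applying Proposition~\ref{prop:cse}, which gives $\tau^{SP}=\tau^{TE}-\tau^{SW}$.
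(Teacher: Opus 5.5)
Your proof is correct and uses the same ingredients as the paper: the observed-outcome identification formulas of Propositions~\ref{prop:dte} and \ref{prop:dse}, the conditional-mean-zero property of $\Delta Y_{it}-m_t(X_i,H^1_{it})$ among units that never adopt given $(X_i,H^1_{it})$, and the change-of-measure identity through the odds. The paper only packages your two cases into a single remainder identity (derived in weighted form, with $\varpi_i=1$ giving the proposition), $\tau_{DR}^Q(g,l)-\tau^Q(g,l)=\Pr(G_i=g)^{-1}\mathbb E[\1\{G_i=\infty\}\{\widetilde\lambda_Q^{g,l}(X_i,H^1_{it})-\lambda_Q^{g,l}(X_i,H^1_{it})\}\{\widetilde m_t^{Q,g,l}(X_i,H^1_{it})-m_t(X_i,H^1_{it})\}]$, which vanishes when either nuisance function is correctly specified.
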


\begin{proof}
  See Appendix~\ref{app:estimation-details}.
\end{proof}

The weighted versions of \eqref{eq:te_dr_estimator} and \eqref{eq:sw_dr_estimator} multiply each summand by $\varpi_i$, replace $N_g$ by $\sum_{i=1}^N\varpi_i\1\{G_i=g\}$, and use the weighted outcome regressions and odds defined in Appendix~\ref{app:parametric-first-stages}.
Under the corresponding weighted identification conditions, for each $Q\in\{TE,SW\}$, if either $\widetilde m_{\varpi,t}^{Q,g,l}=m_{\varpi,t}$ or $\widetilde\lambda_{\varpi,Q}^{g,l}=\lambda_{\varpi,Q}^{g,l}$, then $\tau_{\varpi,DR}^Q(g,l)=\tau_\varpi^Q(g,l)$.
The weighted doubly robust estimators are aggregated by event time as in Section~\ref{sec:analysis-weights}.

\subsection{Pre-tests}\label{sec:pre-treatment-diagnostics}

Under no anticipation, parallel trends implies testable restrictions before cohort $g$ adopts.
For $t_b<s<g$, define
\begin{align*}
  P^{TE}(g,s) & = \mathbb E_g[Y_{is} - Y_{i,t_b} - m_s(X_i,H^0_{is})], \\
  P^{SW}(g,s) & = \mathbb E_g[Y_{is} - Y_{i,t_b} - m_s(X_i,H^1_{is})].
\end{align*}
The pre-period analogue of parallel trends for the Switching effect implies $P^{SW}(g,s)=0$ under no anticipation at $s$.
The corresponding implication for the Total effect is $P^{TE}(g,s)=0$ when $H^1_{is}=H^0_{is}$ for cohort $g$ at $s$.
When the two exposure states differ, the observed outcome can contain a spillover effect before own adoption, and $P^{TE}(g,s)$ need not be zero under parallel trends for the Total effect.

Under the corresponding support and nuisance conditions at these dates, sample analogues use the same outcome regressions or doubly robust equations and cohort aggregation as the post-adoption estimators.
Weighted diagnostics apply the analysis weights defined in Section~\ref{sec:analysis-weights}.
For a fixed set of pre-treatment dates, we use the joint covariance to test whether the corresponding pre-treatment parameters are zero.
Remark~\ref{app:pretest_inference} states the joint inference result.
Following the pre-treatment pseudo-effects in \citet{callaway2021difference} and the interference diagnostic in \citet{XU2026106304}, these pre-tests assess the plausibility of parallel trends before own adoption.

\section{Inference}\label{sec:inference}

The estimators in Section~\ref{sec:estimation} do not restrict how the nuisance functions are fitted.
For the inference theory, we use finite-dimensional specifications for the conditional means and the propensity scores that induce the odds.
We stack their smooth estimating equations with equations for the effects at each cohort and event time and for the cohort masses.
The joint linearization yields influence functions for the event-time estimators that incorporate estimation error in all of these quantities.
Under spatial dependence, the event-time estimators are jointly asymptotically normal.
Probability limits are under the joint law, whereas weak convergence is conditional on $\mathcal C_N$.
The joint asymptotic normality and spatial HAC covariance results build on \citet{KOJEVNIKOV2021882} and \citet{XU2026106304}.
Appendix~\ref{app:inference-proofs} gives the derivations and proofs, and Appendix~\ref{app:weighted-inference} states the additional conditions for weighted estimators.

\subsection{Spatial sampling and CLT conditions}\label{sec:inference-conditions}

Let $\rho(i,j)$ denote the geographic or network distance used for spatial covariance estimation.
The moment exponent $p>4$ and the regularity conditions on the scores are stated in Appendix~\ref{app:inference-regularity}.

\begin{assumption}[Support and positive masses]\label{assumption:inf_support}
  There exist constants $c_0>0$ and $C_0<\infty$, independent of $N$, such that:
  \begin{enumerate}
    \item[(i)] $|\mathcal L|<\infty$ and $|\mathcal G_l|<\infty$ for every $l\in\mathcal L$.
    \item[(ii)] Assumption~\ref{assumption:overlap} holds for every $l\in\mathcal L$ and $g\in\mathcal G_l$. When DR estimation is used, $0\le \lambda_{TE}^{g,l}(x,h) \le C_0$ and $0\le \lambda_{SW}^{g,l}(x,h) \le C_0$ on the corresponding supports among units that never adopt.
    \item[(iii)] The cohort probabilities satisfy $\inf_{g\in\bigcup_{l\in\mathcal L}\mathcal G_l}\Pr (G_i = g) \ge c_0$ and $\Pr (G_i = \infty) \ge c_0$.
          For every $l\in\mathcal L$, $g\in\mathcal G_l$, and $h\in\mathcal H_N$, if $\Pr (G_i = g,H^0_{i,g + l} = h)+\Pr (G_i = g,H^1_{i,g + l} = h)+\Pr (G_i = \infty,H^0_{i,g + l} = h)>0$, then $\Pr (G_i = \infty,H^1_{i,g + l} = h)\ge c_0$.
  \end{enumerate}
\end{assumption}

Part (i) keeps the number of event-time and cohort coordinates fixed.
Part (ii) keeps the odds used by the DR correction bounded on the relevant supports.
Part (iii) keeps the cohort masses, comparison masses, and aggregation denominators bounded away from zero.

The restrictions in Assumption~\ref{assumption:inf_spatial} on shell growth, increasing domains, and uniform summability are imposed almost surely along the sequence of $\mathcal C_N$-measurable environments.

\begin{assumption}[Spatial sampling and weak dependence]\label{assumption:inf_spatial}
  There exist $C<\infty$ and an integer $d_s\ge1$ such that the following conditions hold.
  \begin{enumerate}
    \item[(i)] The diameter satisfies $\max_{1\le i,j\le N}\rho(i,j) \to\infty$. For every integer $s\ge0$,
          \begin{align*}
            \sup_{i,N}\left|\{j:s \le \rho(i,j) < s + 1\}\right| & \le C(1 + s)^{d_s - 1}.
          \end{align*}
    \item[(ii)] Under the metric $\rho$, the score array underlying the joint influence contributions is conditionally $\psi$-dependent in the sense of Definition~2.2 and Assumption~2.1(a) in \citet*{KOJEVNIKOV2021882}.
          Its dependence coefficients satisfy $0 \le \eta_N(s) \le 1$, where $\eta_N(0)=1$, and
          \begin{align*}
            \sup_N\sum_{s = 1}^{\infty}s^{d_s - 1}\eta_N(s) & < \infty.
          \end{align*}
    \item[(iii)] The scalar spatial CLT rate conditions in Assumption~\ref{assumption:inf_spatial_rates} hold.
  \end{enumerate}
\end{assumption}

Part (i) imposes an increasing-domain sequence while controlling the number of units in each distance shell.
Part (ii) requires dependence to decay sufficiently fast with distance for the accumulated dependence across shells to remain bounded.
Part (iii) imposes the scalar rate conditions used to apply the spatial CLT; these rates are stated in Appendix~\ref{app:spatial-clt-rates}.
Assumption~\ref{assumption:inf_regular} gives consistency and the expansion of the estimating equations.

\subsection{Asymptotic linear representation and joint normality}\label{sec:joint-asymptotic-distribution}

The population influence contributions $\varphi_{i,N}^{TE}(l)$, $\varphi_{i,N}^{SW}(l)$, and $\varphi_{i,N}^{\infty,SP}(l)$ are defined in Appendix~\ref{app:inference-regularity}.
The following theorem establishes the asymptotic linear representation before imposing the spatial CLT.

\begin{theorem}[Asymptotic linear representation]\label{thm:event_time_distribution}
  Suppose Assumptions~\ref{assumption:inf_support} and \ref{assumption:inf_regular} hold, and the population counterparts of the total estimator, the switching estimator, and the spillover estimator among units that never adopt equal their corresponding event-time targets.
  Then, for $Q\in\{\mathrm{TE},\mathrm{SW}\}$ and $l\in\mathcal L$, the corresponding coordinate satisfies
  \begin{align}
    \sqrt N\{\widehat\tau^Q(l) - \tau^Q(l)\} & = \dfrac{1}{\sqrt N}\sum_{i=1}^N\varphi_{i,N}^Q(l) + o_p(1). \label{eq:joint_influence_representation}
  \end{align}
  The coordinate for the spillover effect among units that never adopt satisfies
  \begin{align*}
    \sqrt N\{\widehat{\bar\tau}_{\infty}^{SP}(l) - \bar\tau_{\infty}^{SP}(l)\} & = \dfrac{1}{\sqrt N}\sum_{i=1}^N\varphi_{i,N}^{\infty,SP}(l) + o_p(1).
  \end{align*}
\end{theorem}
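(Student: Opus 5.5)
The plan is to treat every estimator in the event-time vector as a smooth function of the solution to one stacked system of finite-dimensional estimating equations. After stacking, the usual M-estimation (Z-estimation) linearization gives the representation \eqref{eq:joint_influence_representation}. The conditions on consistency and on the expansion of the equations are supplied by Assumption~\ref{assumption:inf_regular}.

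The stacked parameter vector $\theta$ collects four groups of quantities for each $l\in\mathcal L$ and $g\in\mathcal G_l$.
\begin{enumerate}
  \item[(a)] The coefficients $\beta$ of the outcome regressions $\widetilde m_t^{TE,g,l}$ and $\widetilde m_t^{SW,g,l}$, fitted among units that never adopt.
  \item[(b)] When DR estimation is used, the propensity-score coefficients $\gamma$ that induce $\widetilde\lambda_Q^{g,l}$.
  \item[(c)] The cohort masses $s_{g}$ and $s_\infty$.
  \item[(d)] The cohort-time effects $\tau^Q(g,l)$ and $\tau^{SP}_\infty(g,l)$.
\end{enumerate}
Each effect estimator can be written as the root of a sample moment. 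For example, $\widehat\tau^{TE}_{OR}(g,l)$ solves
\begin{align*}
  \frac1N\sum_i\Big[\1\{G_i=g\}\{\Delta Y_{it}-m(X_i,H^0_{it};\widehat\beta)\}-\tau\,\1\{G_i=g\}\Big]=0.
\end{align*}
The DR estimators \eqref{eq:te_dr_estimator} and \eqref{eq:sw_dr_estimator} have analogous moments that include the odds-weighted residual term.

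Writing $\widehat\Psi_N(\theta)=N^{-1}\sum_i\psi_{i}(\theta)$ for the stacked moments and $\theta_N$ for the population root, I proceed in three steps.
\begin{enumerate}
  \item[(i)] Show $\widehat\theta\to_p\theta_N$.
  \item[(ii)] Expand $0=\widehat\Psi_N(\widehat\theta)$ around $\theta_N$ by the mean-value theorem. This gives
  \begin{align*}
    \sqrt N(\widehat\theta-\theta_N)=-A_N^{-1}N^{-1/2}\sum_i\psi_i(\theta_N)+o_p(1),
  \end{align*}
  where $A_N$ is the Jacobian of $\mathbb E\widehat\Psi_N$ at $\theta_N$. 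Two facts make this work. Because the system is block-triangular (nuisances first, then masses, then effects), $A_N$ is invertible with bounded inverse. The lower bounds in Assumption~\ref{assumption:inf_support}(iii) keep the relevant diagonal blocks away from zero, and part (ii) bounds the odds.
  \item[(iii)] Use the hypothesis that the population counterparts equal the targets, i.e.\ that $\theta_N$ contains $\tau^Q(g,l)$. This lets me replace the estimand by the target.
\end{enumerate}

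Event-time aggregation is then a delta-method step. The map $\tau^Q(l)=\sum_g\omega_g(l)\tau^Q(g,l)$, with $\omega_g=s_g/\sum_{g'}s_{g'}$, is smooth in $(s,\tau)$, and its denominators are bounded below by $c_0$. Because $|\mathcal L|$ and $|\mathcal G_l|$ are finite, the gradient has a bounded, finite number of entries. Linearizing therefore produces
\begin{align*}
  \varphi_{i,N}^Q(l)=\sum_{g}\omega_g(l)\varphi^{Q}_{i,N}(g,l)+\sum_g\frac{\tau^Q(g,l)-\tau^Q(l)}{\sum_{g'}s_{g'}}\big(\varpi_i\1\{G_i=g\}-s_g\big).
\end{align*}
Here $\varphi^Q_{i,N}(g,l)$ is the effect row of $-A_N^{-1}\psi_i$, which includes the nuisance-estimation correction $-\partial_\beta\tau\cdot A_{\beta\beta}^{-1}\psi^\beta_i$ and, for DR, the corresponding $\gamma$ term. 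The representation for $\widehat{\bar\tau}^{SP}_\infty(l)$ follows identically with $N_\infty$ in place of $N_g$. I would check that these expressions match the definitions in Appendix~\ref{app:inference-regularity}.

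The main obstacle is controlling the remainder under dependence, with no independence across units available. Two pieces are needed.
\begin{enumerate}
  \item[(a)] A uniform law of large numbers and stochastic equicontinuity of $\widehat\Psi_N$, so that $\partial\widehat\Psi_N(\bar\theta)\to_p A_N$ and $\sqrt N\,\widehat\Psi_N(\theta_N)=O_p(1)$.
  \item[(b)] Handling the dependence itself.
\end{enumerate}
For (b), I would not invoke a CLT. Instead I would bound second moments directly: $\mathrm{Var}(N^{-1/2}\sum_i\psi_i)$ is bounded by a covariance inequality for $\psi$-dependent arrays. That inequality applies if Assumption~\ref{assumption:inf_regular} contains such a condition; otherwise it comes from Assumption~\ref{assumption:inf_spatial}, using $p>4$ moments. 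This gives $O_p(1)$ by Chebyshev. For the Jacobian I would rely on the smoothness and envelope conditions in Assumption~\ref{assumption:inf_regular}, which is where I expect most of the verification to lie.

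One further detail needs care. The odds are extended by zero outside the support, which creates non-smooth points in the DR moments. Since $\mathcal H_N$ is finite, these reduce to indicators of discrete states. The indicators do not depend on $\theta$, so differentiability holds within each state.
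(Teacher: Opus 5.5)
Your proposal is correct and follows essentially the same route as the paper. Both arguments stack the nuisance, mass, and effect equations into one Z-estimator, obtain consistency from Assumption~\ref{assumption:inf_regular}(ii) and (iv), linearize through an (integral-form) mean-value expansion around the Jacobian $J_N$, use the target premise to get $a(\theta_N^0)=\boldsymbol\zeta$, and finish with a delta-method step through the aggregation map.

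The one simplification is that you need neither the block-triangular invertibility argument nor a $\psi$-dependence covariance inequality, and hence not Assumption~\ref{assumption:inf_spatial}, which is not a hypothesis of this theorem. The paper takes nonsingularity of $J_N$ directly from Assumption~\ref{assumption:inf_regular}(v). It gets $N^{-1/2}\sum_i\Psi_{i,N}(\theta_N^0)=O_p(1)$ from $\Psi_N(\theta_N^0)=0$ together with Assumption~\ref{assumption:inf_regular}(vii), since the conditional second moment of that sum equals $\operatorname{tr}(\Omega_N)$.
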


Let $\boldsymbol\zeta$ stack, in order, the event-time Total, Switching, and Spillover effects for the target cohorts and the event-time Spillover effects among units that never adopt.
Let $\widehat{\boldsymbol\zeta}$ stack their estimators in the same order.
Let $\Gamma$ denote the covariance matrix of the Gaussian vector in the following corollary.
For $Q,Q'\in\{\mathrm{TE},\mathrm{SW},\mathrm{SP}\}$, write $\Gamma_{QQ'}(l,l')$ for its entry corresponding to effects $Q$ and $Q'$ at event times $l$ and $l'$.
Write $\Gamma_{\infty}^{SP}(l,l')$ for the entry corresponding to two spillover effects among units that never adopt.

\begin{cor}[Joint asymptotic normality]\label{cor:joint_asymptotic_normality}
  Suppose the conditions of Theorem~\ref{thm:event_time_distribution} and Assumption~\ref{assumption:inf_spatial} hold.
  Then
  \begin{align}
    \sqrt N(\widehat{\boldsymbol\zeta} - \boldsymbol\zeta) & \xrightarrow{d}\mathcal N(0,\Gamma). \label{eq:joint_effect_limit}
  \end{align}
\end{cor}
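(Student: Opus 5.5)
The plan is to combine the asymptotic linear representation in Theorem~\ref{thm:event_time_distribution} with a multivariate spatial CLT applied via the Cramér--Wold device. Stack the influence contributions coordinate by coordinate, in the order used for $\boldsymbol\zeta$:
\[
\boldsymbol\varphi_{i,N}=\bigl(\varphi_{i,N}^{TE}(l),\ \varphi_{i,N}^{SW}(l),\ \varphi_{i,N}^{TE}(l)-\varphi_{i,N}^{SW}(l),\ \varphi_{i,N}^{\infty,SP}(l)\bigr)_{l\in\mathcal L}.
\]
The spillover coordinate for the treated cohort is defined as the difference of the total and switching estimators, both at the cohort level and after aggregation. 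Subtracting the two representations in \eqref{eq:joint_influence_representation} therefore gives its linear representation with influence $\varphi^{TE}-\varphi^{SW}$ and an $o_p(1)$ remainder. Since $|\mathcal L|<\infty$ by Assumption~\ref{assumption:inf_support}(i), the stacked vector has fixed dimension, and the finitely many remainders add up to a single vector remainder that is $o_p(1)$. Hence
\[
\sqrt N(\widehat{\boldsymbol\zeta}-\boldsymbol\zeta)=N^{-1/2}\sum_{i=1}^N\boldsymbol\varphi_{i,N}+o_p(1).
\]

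Next I would fix a nonzero vector $a$ and consider the scalar array $a^\top\boldsymbol\varphi_{i,N}$. This array is a fixed linear combination of the scores underlying the joint influence contributions. Mean zero follows because the population counterparts equal their targets and the estimating equations hold at the pseudo-true values. Conditional $\psi$-dependence is preserved under fixed linear combinations: the dependence coefficients of Definition~2.2 in \citet{KOJEVNIKOV2021882} carry over with Lipschitz constants scaled by $\|a\|$, so the summability condition in Assumption~\ref{assumption:inf_spatial}(ii) continues to hold. The $p>4$ moment bounds from Appendix~\ref{app:inference-regularity} also transfer by Minkowski's inequality. Together with the shell-growth condition (i) and the scalar rate conditions (iii), the CLT of \citet{KOJEVNIKOV2021882} (their Theorem~3.2) then gives
\[
\sigma_{a,N}^{-1}N^{-1/2}\sum_i a^\top\boldsymbol\varphi_{i,N}\to\mathcal N(0,1)
\]
conditionally on $\mathcal C_N$, where $\sigma_{a,N}^2=\operatorname{Var}(N^{-1/2}\sum_i a^\top\boldsymbol\varphi_{i,N}\mid\mathcal C_N)$.

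The main obstacle is passing from this self-normalized scalar statement to a fixed limit $\mathcal N(0,a^\top\Gamma a)$. The approach is to define $\Gamma$ as the limit of $\Gamma_N=\operatorname{Var}(N^{-1/2}\sum_i\boldsymbol\varphi_{i,N}\mid\mathcal C_N)$, which I expect the rate conditions in Appendix~\ref{app:spatial-clt-rates} to supply, and to require $a^\top\Gamma a>0$ for every $a\neq0$ so that the normalization is nondegenerate. Boundedness of $\Gamma_N$ follows from the covariance inequality for $\psi$-dependent arrays combined with $\sup_N\sum_s s^{d_s-1}\eta_N(s)<\infty$ and the shell bound. If $\Gamma$ were singular in some direction, I would handle that direction separately: there $a^\top\boldsymbol\varphi$ has vanishing variance, so Chebyshev's inequality gives convergence to $0$, which is consistent with $\mathcal N(0,0)$. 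Slutsky's lemma then gives $a^\top\sqrt N(\widehat{\boldsymbol\zeta}-\boldsymbol\zeta)\to\mathcal N(0,a^\top\Gamma a)$ for every $a$, and the Cramér--Wold device yields \eqref{eq:joint_effect_limit}.

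Finally, the $o_p(1)$ remainders hold under the joint law, while weak convergence is conditional on $\mathcal C_N$. Since the remainders are unconditionally $o_p(1)$, they are also conditionally $o_p(1)$ in probability, so Slutsky's lemma applies conditionally.
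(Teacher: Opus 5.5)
Your overall structure matches the paper's proof: the linear representation from Theorem~\ref{thm:event_time_distribution}, Cram\'er--Wold, a scalar spatial CLT, conditional Slutsky, Chebyshev in degenerate directions, and Markov's inequality to make the unconditional remainders conditionally negligible. The step that fails as written is where you treat $a^\top\boldsymbol\varphi_{i,N}$ as a mean-zero, fixed linear combination of the scores. The equation $\Psi_N(\theta_N^0)=0$ only says that the \emph{average} of the conditional means $\mathbb E[\Psi_{i,N}(\theta_N^0)\mid\mathcal C_N]$ vanishes. Individual $\mu_{i,N}=\mathbb E[\varphi_{i,N}\mid\mathcal C_N]$ may be nonzero, and the paper deliberately allows this; it is why the spatial HAC estimator in Proposition~\ref{prop:shac_covariance} is only conservative. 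The CLT of \citet{KOJEVNIKOV2021882} needs a conditionally centered array. You therefore have to use that $A_N$ and $J_N$ are $\mathcal C_N$-measurable and $\sum_i\mathbb E[\Psi_{i,N}(\theta_N^0)\mid\mathcal C_N]=0$, so the sum of influence terms equals the sum of the centered terms $Z_{i,N}$. Moreover, $a^\top\boldsymbol\varphi_{i,N}=v_N^\top\Psi_{i,N}(\theta_N^0)$ with $v_N=-J_N^{-\top}A_N^\top a$, which is random and varies with $N$. The rate conditions in Assumption~\ref{assumption:inf_spatial_rates}, however, are imposed only for fixed directions $v$. The paper closes this by passing to $v=-J^{-\top}A^\top a$. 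Assumption~\ref{assumption:inf_regular}(v)--(vi) give $v_N\xrightarrow{p}v$, and the centered score sum has conditional second moment $\operatorname{tr}(\Omega_N)=O_p(1)$ by part (vii), so the replacement error is $o_p(1)$.

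What you call the main obstacle is also misattributed: the rate conditions in Appendix~\ref{app:spatial-clt-rates} say nothing about convergence of the variance. That convergence comes from Assumption~\ref{assumption:inf_regular}(v)--(vii), which are already hypotheses of the theorem. They give $\operatorname{Var}(N^{-1/2}\sum_i\boldsymbol\varphi_{i,N}\mid\mathcal C_N)=A_NJ_N^{-1}\Omega_NJ_N^{-\top}A_N^\top\xrightarrow{p}AJ^{-1}\Omega J^{-\top}A^\top$, which is exactly how $\Gamma$ is defined in Appendix~\ref{app:inference-regularity}. In particular $\sigma_N^2(v)/N=v^\top\Omega_Nv\xrightarrow{p}v^\top\Omega v=a^\top\Gamma a$. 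So there is nothing left to define or to expect, and the case split is simply $v^\top\Omega v>0$ versus $v^\top\Omega v=0$. Finally, these limits and the rate conditions hold only in probability, while the conditional CLT is applied along a fixed sequence of environments. You therefore need the subsequence step the paper uses (almost sure convergence along further subsequences) before invoking the spatial adaptation in \citet{XU2026106304} of the CLT in \citet{KOJEVNIKOV2021882}. With these repairs your argument coincides with the paper's proof.
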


The influence contribution for the spillover effect for the target cohorts and its covariance follow from the difference between the total and switching effects:
\begin{align*}
  \varphi_{i,N}^{SP}(l) & = \varphi_{i,N}^{TE}(l)-\varphi_{i,N}^{SW}(l),                                               \\
  \Gamma_{SP,SP}(l,l')  & = \Gamma_{TE,TE}(l,l') + \Gamma_{SW,SW}(l,l') - \Gamma_{TE,SW}(l,l') - \Gamma_{SW,TE}(l,l').
\end{align*}
The asymptotic arguments also apply to a fixed cohort and event-time pair by selecting the corresponding cohort coordinate instead of aggregating over cohorts.
This cohort-specific result uses the support, centering, spatial CLT, and spatial HAC conditions for that coordinate, as stated in Appendix~\ref{app:cohort_specific_inference}.
Inference for cohort and group-time effects before event-time aggregation is also considered by \citet{callaway2021difference} and \citet{sun2021estimating}.

\subsection{Spatial HAC covariance and confidence intervals}\label{sec:spatial-hac}

Appendix~\ref{app:inference-regularity} defines the feasible influence contribution $\widehat\varphi_{i,N}^Q(l)$ for each jointly estimated effect.
For any pair formed from the total effect, the switching effect, and the spillover effect among units that never adopt, define
\begin{align}
  \widehat\Gamma_{QQ'}(l,l') & = \dfrac{1}{N}\sum_{i=1}^N\sum_{j=1}^N K\!\left(\dfrac{\rho(i,j)}{b_N}\right)\widehat\varphi_{i,N}^Q(l)\widehat\varphi_{j,N}^{Q'}(l'). \label{eq:shac_event_cov}
\end{align}
For the spillover effect for the target cohorts under no own adoption, the same formula uses $\widehat\varphi_{i,N}^{SP}(l)=\widehat\varphi_{i,N}^{TE}(l)-\widehat\varphi_{i,N}^{SW}(l)$.
Write $\widehat\Gamma_{\infty}^{SP}(l,l')$ for the covariance estimator obtained from \eqref{eq:shac_event_cov} using two influence coordinates for the spillover effect among units that never adopt.
Stacking these covariance estimators in the order defining $\widehat{\boldsymbol\zeta}$ gives $\widehat\Gamma$.
Here $K$ is the spatial kernel and $b_N$ is the bandwidth.
This estimator follows the spatial HAC construction of \citet{conley1999gmm}; \citet{KOJEVNIKOV2021882} gives a corresponding HAC result for network dependence.
For geographic locations embedded in $\mathbb R^3$, one choice is the Wendland $C^2$ kernel, $K(u)=(1-u)^4(1+4u)$ for $0\le u<1$ and $K(u)=0$ for $u\ge1$.
With chordal distance, this radial kernel produces a positive-semidefinite weight matrix \citep{wendland1995piecewise}.
The same conclusion does not generally follow for graph or great-circle distance.
The Bartlett kernel used in the Monte Carlo simulation is paired with a block distance construction for which the kernel matrix is positive semidefinite; a Bartlett kernel need not produce positive-semidefinite weights for an arbitrary graph distance.

\begin{assumption}[Spatial HAC covariance estimation]\label{assumption:inf_shac}
  The kernel, bandwidth, and score dependence satisfy the following conditions:
  \begin{enumerate}
    \item[(i)] $K(0)=1$, $\sup_u|K(u)|<\infty$, and $K(u)=0$ for $u>1$.
    \item[(ii)] The bandwidth satisfies $ b_N \to\infty$ and $b_N  =o\!\left(N^{1/(2d_s)}\right)$.
    \item[(iii)] The kernel truncation error satisfies
          \begin{align*}
            \sum_{s = 0}^{\infty} \sup_{u\in[s,s + 1)}|K(u/b_N) - 1| (1 + s)^{d_s - 1}\eta_N(s)^{1 - 2/p} & \xrightarrow{p} 0.
          \end{align*}
    \item[(iv)] The kernel weight matrix is positive semidefinite with probability approaching one.
    \item[(v)] The covariance rate and the conditions for influence replacement and centering in Assumption~\ref{assumption:inf_shac_technical} hold.
  \end{enumerate}
\end{assumption}

Part (i) normalizes, bounds, and truncates the kernel.
Part (ii) allows the bandwidth to increase while keeping each spatial neighborhood small relative to the sample.
Part (iii) makes truncation asymptotically negligible under the dependence decay in Assumption~\ref{assumption:inf_spatial}.
Part (iv) is the condition used for conservative variance estimation.
Part (v) controls estimation of the influence contributions and the remaining covariance terms.
The covariance rate used in the proof is stated in Appendix~\ref{app:shac-technical}.
The kernel and bandwidth conditions in that assumption are distinct from the positive-semidefiniteness condition in part (iv).

\begin{prop}[Spatial HAC covariance]\label{prop:shac_covariance}
  Suppose Assumptions~\ref{assumption:inf_support}, \ref{assumption:inf_regular}, \ref{assumption:inf_spatial}, and \ref{assumption:inf_shac} hold.
  For every fixed $l\in\mathcal L$ and $Q\in\{\mathrm{TE},\mathrm{SW},\mathrm{SP}\}$, if $\Gamma_{QQ}(l,l)>0$, then, for every $\epsilon>0$,
  \begin{align*}
    \Pr (\widehat\Gamma_{QQ}(l,l)\ge\Gamma_{QQ}(l,l)-\epsilon\mid\mathcal C_N) & \xrightarrow{p} 1.
  \end{align*}
  For every fixed $l\in\mathcal L$, if $\Gamma_{\infty}^{SP}(l,l)>0$, then, for every $\epsilon>0$,
  \begin{align*}
    \Pr (\widehat\Gamma_{\infty}^{SP}(l,l)\ge\Gamma_{\infty}^{SP}(l,l)-\epsilon\mid\mathcal C_N) & \xrightarrow{p} 1.
  \end{align*}
  If $\mathbb E[\varphi_{i,N}^Q(l)\mid\mathcal C_N]=0$ for every $i=1,\ldots,N$, $l\in\mathcal L$, and $Q\in\{\mathrm{TE},\mathrm{SW}\}$, and $\mathbb E[\varphi_{i,N}^{\infty,SP}(l)\mid\mathcal C_N]=0$ for every $i=1,\ldots,N$ and $l\in\mathcal L$, then $\|\widehat\Gamma-\Gamma\|=o_p(1)$.
\end{prop}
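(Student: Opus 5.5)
The plan is to follow the structure of the network HAC argument in \citet{KOJEVNIKOV2021882} and its spatial version in \citet{XU2026106304}. The first step splits $\widehat\Gamma-\Gamma$ into a replacement error and an infeasible HAC error. For each coordinate $(Q,l)$, let $\Gamma^\ast_{QQ'}(l,l')$ be the infeasible estimator that puts the population contributions $\varphi^Q_{i,N}(l)$ into \eqref{eq:shac_event_cov}. Then
\begin{align*}
  \widehat\Gamma_{QQ'}(l,l')-\Gamma_{QQ'}(l,l')
  &=\{\widehat\Gamma_{QQ'}(l,l')-\Gamma^\ast_{QQ'}(l,l')\}+\{\Gamma^\ast_{QQ'}(l,l')-\widetilde\Gamma_{QQ'}(l,l')\}+\{\widetilde\Gamma_{QQ'}(l,l')-\Gamma_{QQ'}(l,l')\}.
\end{align*}
Here $\widetilde\Gamma_{QQ'}(l,l')=N^{-1}\sum_{i,j}K(\rho(i,j)/b_N)\,\mathbb E[\varphi^Q_{i,N}(l)\varphi^{Q'}_{j,N}(l')\mid\mathcal C_N]$. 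For the first term, write $\widehat\varphi_i=\varphi_i+r_i$, where $r_i$ comes from the estimated nuisance parameters and effects. The finite-dimensional expansion in Assumption~\ref{assumption:inf_regular} gives $\max_i|r_i|$ of order $\|\widehat\theta-\theta\|=O_p(N^{-1/2})$ times an envelope. Bounding $\sum_j|K(\rho(i,j)/b_N)|\le C b_N^{d_s}$ by the shell growth in Assumption~\ref{assumption:inf_spatial}(i) makes this term $O_p(b_N^{d_s}N^{-1/2})=o_p(1)$ under Assumption~\ref{assumption:inf_shac}(ii). I will invoke the influence-replacement condition in Assumption~\ref{assumption:inf_shac}(v) rather than redo this calculation.

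The second term is a variance bound. Using $\psi$-dependence and the covariance inequality of \citet{KOJEVNIKOV2021882} with moment exponent $p>4$, one gets $\mathrm{Var}(\Gamma^\ast-\widetilde\Gamma\mid\mathcal C_N)\to0$. This uses the covariance rate stated in Assumption~\ref{assumption:inf_shac}(v), together with the summability $\sum_s s^{d_s-1}\eta_N(s)<\infty$. The third term is the kernel truncation bias. Grouping pairs by distance shells, it is bounded by
\[
\sum_s\sup_{u\in[s,s+1)}|K(u/b_N)-1|(1+s)^{d_s-1}\eta_N(s)^{1-2/p}
\]
times a moment constant, and this vanishes by Assumption~\ref{assumption:inf_shac}(iii).

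The main obstacle is that the contributions need not be centered conditional on $\mathcal C_N$. Heterogeneous unit-level means $\mu_i=\mathbb E[\varphi_i\mid\mathcal C_N]$ have $N^{-1}\sum_i\mu_i=0$ but not $\mu_i=0$. The limit $\Gamma$ of $\mathrm{Var}(N^{-1/2}\sum_i\varphi_i\mid\mathcal C_N)$ involves centered products, while the estimator targets uncentered ones. Hence $\widetilde\Gamma-\Gamma\approx N^{-1}\sum_{i,j}K(\rho(i,j)/b_N)\mu_i\mu_j$, plus a truncation term. For diagonal entries $Q=Q'$, $l=l'$, this extra quadratic form is nonnegative by the positive semidefiniteness in Assumption~\ref{assumption:inf_shac}(iv). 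Combined with the other $o_p(1)$ terms, this gives $\widehat\Gamma_{QQ}(l,l)\ge\Gamma_{QQ}(l,l)-\epsilon$ with conditional probability tending to one. The same argument applies to the never-treated spillover coordinate.

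The cohort spillover coordinate uses $\widehat\varphi^{SP}=\widehat\varphi^{TE}-\widehat\varphi^{SW}$, which is itself an influence coordinate, so the diagonal argument applies directly. The assumption $\Gamma_{QQ}(l,l)>0$ is only needed to state the bound meaningfully. Finally, when all $\mu_i=0$, the quadratic term disappears. Every entry, including off-diagonal ones, then converges, and because the number of coordinates is fixed by Assumption~\ref{assumption:inf_support}(i), $\|\widehat\Gamma-\Gamma\|=o_p(1)$.
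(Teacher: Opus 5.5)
Your proposal is correct and follows essentially the paper's argument. The common steps are: replace the feasible influence rows by population rows; obtain the centered spatial HAC limit from \citet{KOJEVNIKOV2021882} and \citet{XU2026106304}; and note that the kernel-weighted $\mu_{i,N}\mu_{j,N}^\top$ term is nonnegative on the diagonal when the kernel matrix is positive semidefinite and vanishes when the conditional means are zero. One bookkeeping point: your term $\Gamma^\ast-\widetilde\Gamma$ also contains the cross terms $N^{-1}\sum_{i,j}K(\rho(i,j)/b_N)u_{i,N}\mu_{j,N}^\top$ and their transpose, with $u_{i,N}=\varphi_{i,N}-\mu_{i,N}$, and these are not part of the centered quadratic form that the covariance rate controls. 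The paper writes $\varphi_{i,N}=u_{i,N}+\mu_{i,N}$ and disposes of them with Assumption~\ref{assumption:inf_shac_technical}(iii), which is available to you through Assumption~\ref{assumption:inf_shac}(v).
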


The covariance decomposition is given in Appendix~\ref{app:shac-technical}, and the proof of Proposition~\ref{prop:shac_covariance} is given in Appendix~\ref{app:inference-proofs}.
We construct pointwise confidence intervals for each effect.
For $Q\in\{\mathrm{TE},\mathrm{SW},\mathrm{SP}\}$, the pointwise interval is
\begin{align*}
  \widehat C_Q^{pt}(l) & = \left[\widehat\tau^Q(l) \pm z_{1 - \alpha/2}\sqrt{\dfrac{\widehat\Gamma_{QQ}(l,l)}{N}}\right].
\end{align*}
The interval for the spillover effect among units that never adopt is
\begin{align*}
  \widehat C_{\infty,SP}^{pt}(l) & = \left[\widehat{\bar\tau}_{\infty}^{SP}(l) \pm z_{1 - \alpha/2}\sqrt{\dfrac{\widehat\Gamma_{\infty}^{SP}(l,l)}{N}}\right].
\end{align*}
\begin{cor}[Pointwise confidence intervals]\label{cor:pointwise_ci}
  Fix $Q\in\{\mathrm{TE},\mathrm{SW},\mathrm{SP}\}$ and $l\in\mathcal L$.
  Under the conditions of Corollary~\ref{cor:joint_asymptotic_normality} and Proposition~\ref{prop:shac_covariance}, suppose $\Gamma_{QQ}(l,l)>0$.
  Then $\Pr (\tau^Q(l)\in\widehat C_Q^{pt}(l)\mid\mathcal C_N)\ge 1-\alpha+o_p(1)$.
  If $\mathbb E[\varphi_{i,N}^Q(l)\mid\mathcal C_N]=0$ for every $i$, then $\Pr (\tau^Q(l)\in\widehat C_Q^{pt}(l)\mid\mathcal C_N)=1-\alpha+o_p(1)$.
  If $\Gamma_{\infty}^{SP}(l,l)>0$, then $\Pr (\bar\tau_{\infty}^{SP}(l)\in\widehat C_{\infty,SP}^{pt}(l)\mid\mathcal C_N)\ge 1-\alpha+o_p(1)$.
  If, in addition, $\mathbb E[\varphi_{i,N}^{\infty,SP}(l)\mid\mathcal C_N]=0$ for every $i$, then $\Pr (\bar\tau_{\infty}^{SP}(l)\in\widehat C_{\infty,SP}^{pt}(l)\mid\mathcal C_N)=1-\alpha+o_p(1)$.
\end{cor}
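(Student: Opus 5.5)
The plan is to combine the joint limit in Corollary~\ref{cor:joint_asymptotic_normality} with the one-sided consistency of the spatial HAC variance in Proposition~\ref{prop:shac_covariance}, working throughout conditionally on $\mathcal C_N$. I treat the coordinate $Q\in\{\mathrm{TE},\mathrm{SW},\mathrm{SP}\}$ at a fixed $l$; the never-treated spillover case is identical with $\widehat{\bar\tau}_\infty^{SP}(l)$, $\Gamma_\infty^{SP}(l,l)$ and $\varphi^{\infty,SP}_{i,N}(l)$ in place of the $Q$ objects. Selecting the relevant row of $\widehat{\boldsymbol\zeta}$ (a continuous linear map, so the continuous mapping theorem applies) gives $T_N\equiv\sqrt N\{\widehat\tau^Q(l)-\tau^Q(l)\}\xrightarrow{d}\mathcal N(0,\Gamma_{QQ}(l,l))$ conditionally on $\mathcal C_N$. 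For $Q=\mathrm{SP}$, the coordinate is the TE row minus the SW row, and its variance is the expression $\Gamma_{SP,SP}$ displayed after the corollary.

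Coverage is the event $|T_N|\le z_{1-\alpha/2}\widehat\Gamma_{QQ}(l,l)^{1/2}$. Write $\sigma^2=\Gamma_{QQ}(l,l)>0$. Fix $\delta\in(0,\sigma^2)$ and let $E_N=\{\widehat\Gamma_{QQ}(l,l)\ge\sigma^2-\delta\}$. On $E_N$ the coverage event contains $\{|T_N|\le z_{1-\alpha/2}(\sigma^2-\delta)^{1/2}\}$. Therefore
\begin{align*}
\Pr(\tau^Q(l)\in\widehat C_Q^{pt}(l)\mid\mathcal C_N)\ge \Pr\!\left(|T_N|\le z_{1-\alpha/2}(\sigma^2-\delta)^{1/2}\mid\mathcal C_N\right)-\Pr(E_N^c\mid\mathcal C_N).
\end{align*}
By Proposition~\ref{prop:shac_covariance}, the last term is $o_p(1)$. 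Because the Gaussian limit is continuous, the first term converges to $2\Phi(z_{1-\alpha/2}\sqrt{1-\delta/\sigma^2})-1$. I then let $\delta\downarrow0$, which gives the lower bound $1-\alpha+o_p(1)$.

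The step I expect to be the main obstacle is making this $\delta$-argument rigorous in the "conditional weak convergence in probability" sense. I would handle it through a Pólya-type uniform convergence of conditional CDFs to a continuous limit. The error is then $\sup_x|\Pr(T_N\le x\mid\mathcal C_N)-\Phi(x/\sigma)|=o_p(1)$, and a diagonal choice $\delta_N\to0$ slowly absorbs both error terms.

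For the exact statement under the centering condition $\mathbb E[\varphi^Q_{i,N}(l)\mid\mathcal C_N]=0$, Proposition~\ref{prop:shac_covariance} gives $\|\widehat\Gamma-\Gamma\|=o_p(1)$, and hence $\widehat\Gamma_{QQ}(l,l)\to\sigma^2$. For $Q=\mathrm{SP}$, this follows from the TE/SW blocks through the difference formula. Conditional Slutsky then yields $T_N/\widehat\Gamma_{QQ}(l,l)^{1/2}\xrightarrow{d}\mathcal N(0,1)$, and the coverage equals $2\Phi(z_{1-\alpha/2})-1+o_p(1)=1-\alpha+o_p(1)$, again using continuity of $\Phi$ and uniform convergence of the CDFs.
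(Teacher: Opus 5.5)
Your lower-bound argument is correct, but it takes a different route from the paper. The paper does not rely on the one-sided statement of Proposition~\ref{prop:shac_covariance}. It uses the intermediate result from that proposition's proof, $\|\widehat\Gamma-\Gamma^+\|=o_p(1)$ with $\Gamma^+=\Gamma+\Gamma^E$ and $\Gamma^E$ positive semidefinite. Slutsky then gives the explicit limiting coverage $2\Phi\bigl(z_{1-\alpha/2}\sqrt{\{\Gamma_{QQ}(l,l)+\Gamma^E_{QQ}(l,l)\}/\Gamma_{QQ}(l,l)}\bigr)-1\ge 1-\alpha$.

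Your $\delta$-sandwich uses only the published one-sided conclusion and never needs $\widehat\Gamma_{QQ}(l,l)$ to have a limit. The price is that it yields only a bound, not the limiting coverage. You also do not need a diagonal $\delta_N$ or a P\'olya step. For each fixed $\delta$ you get coverage at least $2\Phi(z_{1-\alpha/2}\sqrt{1-\delta/\sigma^2})-1-o_p(1)$. Since $\delta$ is arbitrary, this already gives $\{1-\alpha-\Pr(\tau^Q(l)\in\widehat C_Q^{pt}(l)\mid\mathcal C_N)\}_+=o_p(1)$. Convergence of the conditional distribution function at one fixed point suffices, and it follows from bounded-Lipschitz convergence in probability by the subsequence argument.

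The exactness part has a genuine gap. The corollary assumes only that the $(Q,l)$ row of $\mu_{i,N}=\mathbb E[\varphi_{i,N}\mid\mathcal C_N]$ vanishes. The consistency conclusion $\|\widehat\Gamma-\Gamma\|=o_p(1)$ in Proposition~\ref{prop:shac_covariance} requires more: the conditional means must vanish for every TE and SW row at every $l'\in\mathcal L$ and for every never-treated spillover row. So you cannot invoke it.

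For $Q=\mathrm{SP}$ the problem is sharper. The hypothesis only says that the TE and SW rows of $\mu_{i,N}$ coincide at $l$. Then the TE and SW blocks of $\widehat\Gamma$ need not be consistent, so your difference-formula step fails. The stated one-sided bound cannot rule out over-coverage either, so it cannot give the equality.

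What is missing is the two-sided, coordinatewise limit established inside the proof of the proposition. By Assumption~\ref{assumption:inf_shac_technical}(ii)--(iv),
\[
\widehat\Gamma_{QQ}(l,l)=\Gamma_{QQ}(l,l)+\Gamma^E_{QQ}(l,l)+o_p(1),
\]
where $\Gamma^E_{QQ}(l,l)$ is the limit of the $(Q,l)$ diagonal entry of $\Gamma_N^E$. That entry equals $N^{-1}\sum_{i,j}K(\rho(i,j)/b_N)$ times the product of the $(Q,l)$ rows of $\mu_{i,N}$ and $\mu_{j,N}$, so it depends only on that row. For SP, the row is the difference of the TE and SW rows. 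The entry is therefore identically zero under the coordinate-specific hypothesis. Hence $\widehat\Gamma_{QQ}(l,l)\xrightarrow{p}\Gamma_{QQ}(l,l)$, and your Slutsky step goes through. The never-treated spillover coordinate needs the same repair, using $\Gamma_\infty^{E,SP}(l,l)$.
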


\section{Monte Carlo simulation}\label{sec:simulation}

This section evaluates the finite-sample performance of the outcome regression and doubly robust estimators for the total, switching, and spillover effects by event time.
Each estimator is evaluated relative to its target for the finite population under the counterfactual in which no unit adopts.
DGP 1 and DGP 2 evaluate the outcome regression estimators under correctly specified outcome regressions and distinct network and exposure structures.
DGP 3 misspecifies the switching outcome regression while correctly specifying the relevant odds; comparing the outcome regression and doubly robust estimators therefore isolates the contribution of the residual correction.
We report bias, RMSE, and coverage of pointwise 95 percent confidence intervals separately at event times $l=0,1,2$ for sample sizes $N\in\{200,500\}$, using 1,000 replications and equal analysis weights.

\subsection{Design}

The three designs use periods $t=1,\ldots,6$, treated cohorts $3$, $4$, and $5$, and a group that never adopts.
All units are untreated and unexposed in period 1.
Both primitive estimators therefore use the common baseline $t_b=1$.
The event-time averages use cohorts $\{3,4,5\}$ at $l=0,1$ and cohorts $\{3,4\}$ at $l=2$.
We stop at $l=2$ because only cohort 3 would contribute to the event-time target at $l=3$.

The raw exposure index is the share of adopted neighbors, with each row of weights normalized to one: $\widetilde H_{it} = \sum_{j\neq i}w_{ij}\1\{G_j\le t\}$.
DGP 1 and DGP 3 classify exposure as zero, low, or high according as $\widetilde H_{it}=0$, $0<\widetilde H_{it}\le0.5$, or $\widetilde H_{it}>0.5$.
DGP 2 uses $0.55$ in place of $0.5$.
The corresponding dose scores are $q(0)=0$, $q(\mathrm{low})=1$, and $q(\mathrm{high})=2$.

Let $B_i=\1\{X_i>0\}$.
The potential outcomes are
\begin{align*}
  Y_{it}(\infty,h) & = \phi_i + 0.2(t - 1) + 0.5X_i + \{\xi_t + \xi_BB_i\}q(h) + \varepsilon_{it}, \\
  Y_{it}(g,h)      & = Y_{it}(\infty,h) + \tau_l + 0.4q(h), \quad t = g + l \ge g.
\end{align*}
The unit effects $\phi_i$ and disturbances $\varepsilon_{it}$ independently follow $N(0,1)$.
For DGP 1 and DGP 2, $X_i$ is also independently drawn from $N(0,1)$.
We set $\xi_t=0$ for $t\le2$ and $\xi_t=0.3$ for $t\ge3$, and set $\tau_0=1$, $\tau_1=1.5$, and $\tau_l=2$ for $l\ge2$.
The coefficient on the interaction between $B_i$ and exposure is $\xi_B=0$ in DGP 1 and DGP 2 and $\xi_B=0.8$ in DGP 3.

DGP 1 and DGP 3 use blocks of 20 units containing five units from each adoption group.
Every unit has four directed neighbors within its block, each with weight $1/4$.
For exposure classes A, B, E, and H, the neighbor counts from cohorts $(3,4,5,\infty)$ are, respectively, $(0,0,0,4)$, $(1,0,0,3)$, $(1,1,0,2)$, and $(1,1,1,1)$.
DGP 1 assigns one quarter of each exposure class to each adoption group.
The class counts $(\mathrm{A},\mathrm{B},\mathrm{E},\mathrm{H})$ are $(60,60,40,40)$ at $N=200$ and $(140,120,120,120)$ at $N=500$.
For every cohort and event time entering the averages, these assignments provide units that never adopt both in the pure control state and at the exposure states realized in the target cohort.

DGP 2 partitions units into complete regions of size 20, with no links across regions.
Four fifths of the regions contain policy adoption, with five units from each treated cohort and five units that never adopt.
The remaining regions contain 20 units that never adopt.
Thus, the cohort shares are $(0.20,0.20,0.20,0.40)$ for $(3,4,5,\infty)$, and the inactive regions supply pure controls.

DGP 3 retains the DGP 1 neighbor compositions but sets the class shares $(\mathrm{A},\mathrm{B},\mathrm{E},\mathrm{H})$ to $(0.40,0.20,0.20,0.20)$ at both sample sizes.
Conditional on cohort and independently of exposure class, $B_i$ follows a Bernoulli distribution with probability $0.60$ for treated cohorts and $0.40$ for units that never adopt, and $X_i=2B_i-1$.

In DGP 1 and DGP 2, the total effect uses the sample mean of $Y_{it}-Y_{i1}$ in the pure control group, and the switching regression is saturated in $H^1_{it}$ among units that never adopt.
Both outcome regressions are correctly specified under the outcome model.
In DGP 3, the total regression includes an intercept and $B_i$.
The switching regression is saturated in $H^1_{it}$ but omits its interaction with $B_i$.
The DR estimator uses separate total and switching odds of the form $\exp(\gamma_0 + \gamma_1B_i)$.
These odds are correctly specified, so DGP 3 isolates misspecification of the switching outcome regression.
The total regression and DR estimates coincide by construction.

For $Q\in\{\mathrm{TE},\mathrm{SW},\mathrm{SP}\}$, let $\tau_r^Q(l)$ denote the realized target for the finite population in replication $r$ and let $\widehat C_{r,Q}^{pt}(l)$ denote its pointwise 95 percent interval.
With $R = 1000$ replications, the performance measures are
\begin{align*}
  \operatorname{Bias}_Q(l)     & = \dfrac{1}{R}\sum_{r = 1}^R \{\widehat\tau_r^Q(l) - \tau_r^Q(l)\},                        \\
  \operatorname{RMSE}_Q(l)     & = \left[ \dfrac{1}{R}\sum_{r = 1}^R \{\widehat\tau_r^Q(l) - \tau_r^Q(l)\}^2 \right]^{1/2}, \\
  \operatorname{Coverage}_Q(l) & = \dfrac{1}{R}\sum_{r = 1}^R \1\{\tau_r^Q(l) \in \widehat C_{r,Q}^{pt}(l)\}.
\end{align*}
The spillover estimate is the total estimate minus the switching estimate in the same cells defined by cohort and event time.
Pointwise intervals use the joint GMM influence contributions with a Bartlett spatial HAC kernel, block distance, and a bandwidth of three.

\subsection{Results}

Table~\ref{tab:sim_proposed_componentwise_v2} reports the outcome regression estimators in DGP 1 and DGP 2.
Across the three event times, absolute bias does not exceed 0.018, and every RMSE is smaller at $N=500$ than at $N=200$.
Pointwise coverage ranges from 0.883 to 0.943; the largest shortfall occurs for the DGP 2 total effect at $l=0$ and $N=200$.

\begin{table}[h!]
  \centering
  \caption{Finite-sample performance of the outcome regression estimators by event time}
  \label{tab:sim_proposed_componentwise_v2}
  \begin{threeparttable}
\small
\begin{tabular}{llcccccc}
\toprule
Effect & $l$ & \multicolumn{3}{c}{$N=200$} & \multicolumn{3}{c}{$N=500$} \\
\cmidrule(lr){3-5}\cmidrule(lr){6-8}
 & & Bias & RMSE & Coverage & Bias & RMSE & Coverage \\
\midrule
\multicolumn{8}{l}{Panel A. DGP 1} \\
Total & 0 & 0.015 & 0.321 & 0.904 & 0.003 & 0.211 & 0.934 \\
 & 1 & 0.013 & 0.333 & 0.903 & 0.001 & 0.211 & 0.933 \\
 & 2 & 0.014 & 0.365 & 0.910 & -0.002 & 0.231 & 0.933 \\
\addlinespace
Switching & 0 & 0.008 & 0.193 & 0.928 & -0.001 & 0.127 & 0.934 \\
 & 1 & 0.005 & 0.201 & 0.922 & -0.002 & 0.125 & 0.937 \\
 & 2 & 0.008 & 0.228 & 0.927 & -0.006 & 0.144 & 0.943 \\
\addlinespace
Spillover & 0 & 0.006 & 0.260 & 0.902 & 0.005 & 0.170 & 0.921 \\
 & 1 & 0.008 & 0.255 & 0.911 & 0.003 & 0.167 & 0.930 \\
 & 2 & 0.007 & 0.264 & 0.915 & 0.004 & 0.175 & 0.942 \\
\midrule
\multicolumn{8}{l}{Panel B. DGP 2} \\
Total & 0 & 0.007 & 0.223 & 0.883 & 0.003 & 0.145 & 0.921 \\
 & 1 & -0.002 & 0.222 & 0.886 & 0.004 & 0.143 & 0.919 \\
 & 2 & 0.003 & 0.248 & 0.895 & 0.004 & 0.160 & 0.922 \\
\addlinespace
Switching & 0 & 0.004 & 0.220 & 0.924 & 0.006 & 0.146 & 0.928 \\
 & 1 & 0.005 & 0.214 & 0.931 & 0.006 & 0.146 & 0.929 \\
 & 2 & 0.018 & 0.246 & 0.910 & 0.004 & 0.161 & 0.941 \\
\addlinespace
Spillover & 0 & 0.003 & 0.246 & 0.912 & -0.004 & 0.165 & 0.935 \\
 & 1 & -0.007 & 0.245 & 0.915 & -0.002 & 0.164 & 0.930 \\
 & 2 & -0.015 & 0.263 & 0.911 & 0.000 & 0.176 & 0.924 \\
\bottomrule
\end{tabular}
\begin{tablenotes}[flushleft]
\footnotesize
\item Notes: The table reports the outcome regression estimators. Bias, RMSE, and pointwise coverage are computed separately at each event time relative to the realized target for the finite population. Confidence intervals use spatial HAC covariance estimates. Each design, sample size, and event time uses 1,000 Monte Carlo replications.
\end{tablenotes}
\end{threeparttable}

\end{table}

Table~\ref{tab:sim_d3_dr_robustness_v2} compares the outcome regression and DR estimators in DGP 3.
For the switching effect, the bias of the outcome regression estimator ranges from 0.109 to 0.138 at $N=200$ and from 0.107 to 0.130 at $N=500$ across $l=0,1,2$.
The largest absolute DR bias over these event times is 0.015 and 0.004, respectively.
For the spillover effect, the corresponding largest absolute DR biases are 0.007 and 0.006.
The DR correction also moves pointwise coverage closer to 0.95 at each $l\in\{0,1,2\}$ for both effects.
The total effect regression is correctly specified, so the two total effect estimates coincide.

\begin{table}[h!]
  \centering
  \caption{Nuisance misspecification and DR correction in DGP 3}
  \label{tab:sim_d3_dr_robustness_v2}
  \begin{threeparttable}
\small
\begin{tabular}{lllcccccc}
\toprule
Effect & $l$ & Estimator & \multicolumn{3}{c}{$N=200$} & \multicolumn{3}{c}{$N=500$} \\
\cmidrule(lr){4-6}\cmidrule(lr){7-9}
 & & & Bias & RMSE & Coverage & Bias & RMSE & Coverage \\
\midrule
Total & 0 & Outcome regression & 0.008 & 0.317 & 0.902 & 0.006 & 0.192 & 0.930 \\
 &  & DR & 0.008 & 0.317 & 0.902 & 0.006 & 0.192 & 0.930 \\
 & 1 & Outcome regression & 0.012 & 0.325 & 0.895 & 0.006 & 0.194 & 0.935 \\
 &  & DR & 0.012 & 0.325 & 0.895 & 0.006 & 0.194 & 0.935 \\
 & 2 & Outcome regression & 0.008 & 0.349 & 0.908 & 0.006 & 0.206 & 0.942 \\
 &  & DR & 0.008 & 0.349 & 0.908 & 0.006 & 0.206 & 0.942 \\
\addlinespace
Switching & 0 & Outcome regression & 0.109 & 0.238 & 0.893 & 0.107 & 0.167 & 0.867 \\
 &  & DR & 0.006 & 0.223 & 0.912 & 0.000 & 0.129 & 0.952 \\
 & 1 & Outcome regression & 0.118 & 0.247 & 0.878 & 0.120 & 0.176 & 0.852 \\
 &  & DR & 0.008 & 0.227 & 0.916 & 0.002 & 0.131 & 0.951 \\
 & 2 & Outcome regression & 0.138 & 0.276 & 0.872 & 0.130 & 0.192 & 0.863 \\
 &  & DR & 0.015 & 0.249 & 0.920 & 0.004 & 0.144 & 0.955 \\
\addlinespace
Spillover & 0 & Outcome regression & -0.100 & 0.263 & 0.878 & -0.101 & 0.185 & 0.861 \\
 &  & DR & 0.002 & 0.226 & 0.921 & 0.006 & 0.146 & 0.935 \\
 & 1 & Outcome regression & -0.106 & 0.275 & 0.873 & -0.114 & 0.195 & 0.848 \\
 &  & DR & 0.004 & 0.238 & 0.908 & 0.004 & 0.149 & 0.933 \\
 & 2 & Outcome regression & -0.130 & 0.304 & 0.862 & -0.124 & 0.209 & 0.848 \\
 &  & DR & -0.007 & 0.257 & 0.906 & 0.002 & 0.158 & 0.939 \\
\bottomrule
\end{tabular}
\begin{tablenotes}[flushleft]
\footnotesize
\item Notes: DGP 3 misspecifies the switching outcome regression while keeping the corresponding odds correctly specified. Bias, RMSE, and pointwise coverage are computed separately at each event time relative to the realized target for the finite population. Each sample size and event time uses 1,000 Monte Carlo replications. The outcome regression for the total effect is correctly specified, and the outcome regression and DR total estimates coincide by construction.
\end{tablenotes}
\end{threeparttable}

\end{table}

\section{Application}\label{sec:application}

We study the rollout of Community Health Centers.
The outcome is the mortality rate from all causes among adults aged 50 and older, adjusted for age.
We estimate the total, switching, and spillover effects and compare the total effect with Standard DID estimates that do not adjust for exposure.

\subsection{Data and exposure}

Community Health Centers may affect nearby counties through travel, provider networks, and referrals.
Early Community Health Center programs operated through multiple clinics, satellite stations, mobile units, home care, and transportation \citep{bailey2015war}.
We estimate the total, switching, and spillover effects for treated counties and the spillover effect among never-treated counties.

We use the panel of counties observed from 1959 through 1978 in the replication files for \citet{bailey2015war}.
The estimation sample contains 114 counties first treated in 1965--1974 and 2,423 never-treated counties.
The outcome is the mortality rate from all causes among adults aged 50 and older, adjusted for age and expressed per 100,000 residents.
The 1960 population in this age group defines the analysis weights.
Following the replication sample, we exclude Los Angeles, Cook, and New York counties.

Treatment is absorbing, event time is $l=t-g$, and both primitive comparisons use 1959 as the common baseline.
The exposure universe is the full county panel after these exclusions; counties first treated after 1974 contribute to exposure but not to the target cohorts or the never-treated comparison group.
Let $d_{ij}$ denote the distance between representative interior points of counties, constructed from the 2015 Census TIGER/Line shapefile.
A county is exposed when another county within 75 miles has adopted:
\begin{align}
  H^1_{it} & = \1\left\{\sum_{j \ne i}\1\{d_{ij} \le 75\}\1\{t \ge G_j\} > 0\right\}. \label{eq:application_exposure}
\end{align}

Figure~\ref{fig:application_spatial_spillover_exposure} displays realized states of own adoption and exposure under the mapping with a 75-mile radius.

\begin{figure}[h]
  \centering
  \begin{minipage}{0.48\textwidth}
    \centering
    \includegraphics[width=\linewidth]{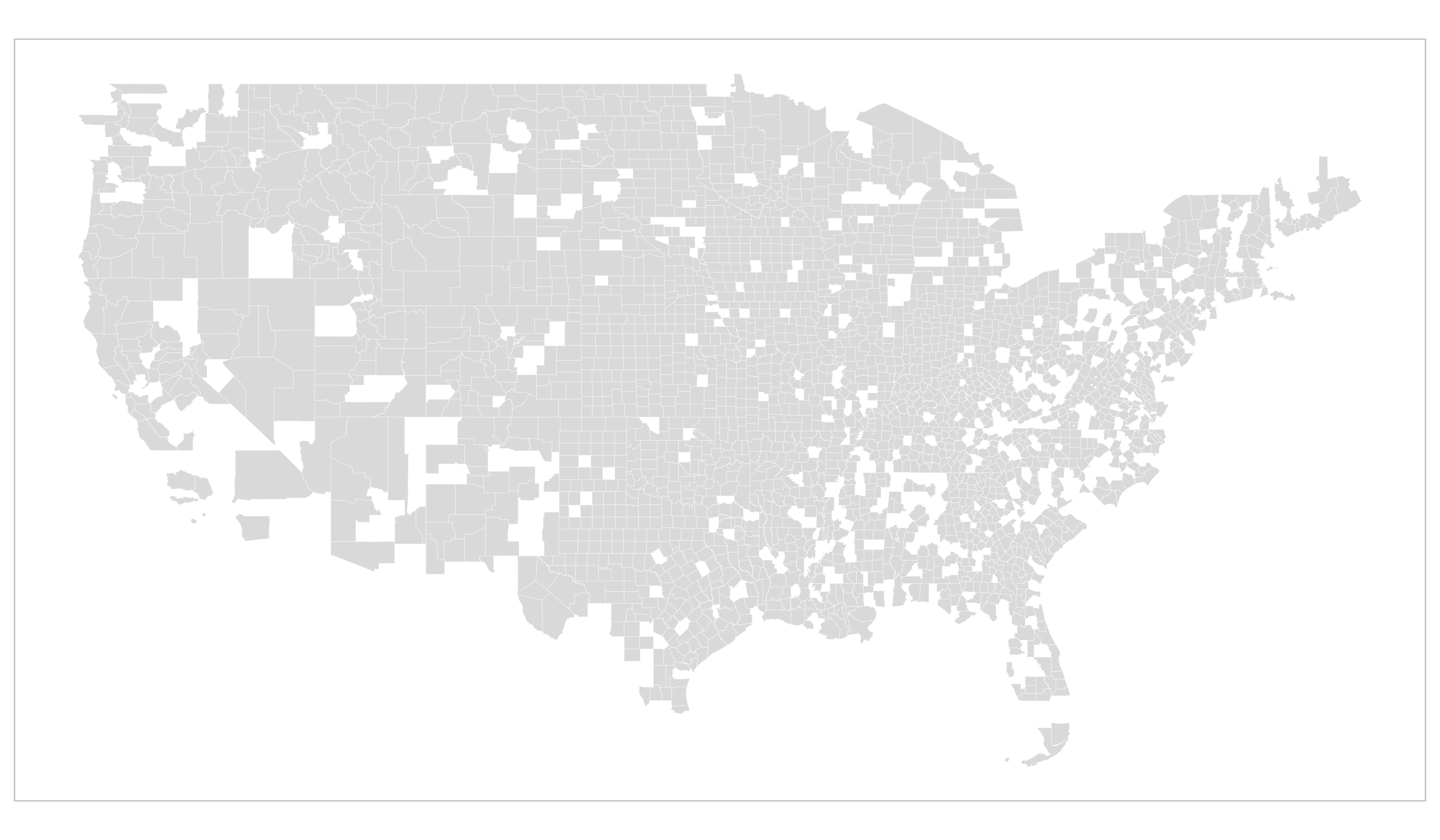}
    {\footnotesize (a) 1964}
  \end{minipage}
  \hfill
  \begin{minipage}{0.48\textwidth}
    \centering
    \includegraphics[width=\linewidth]{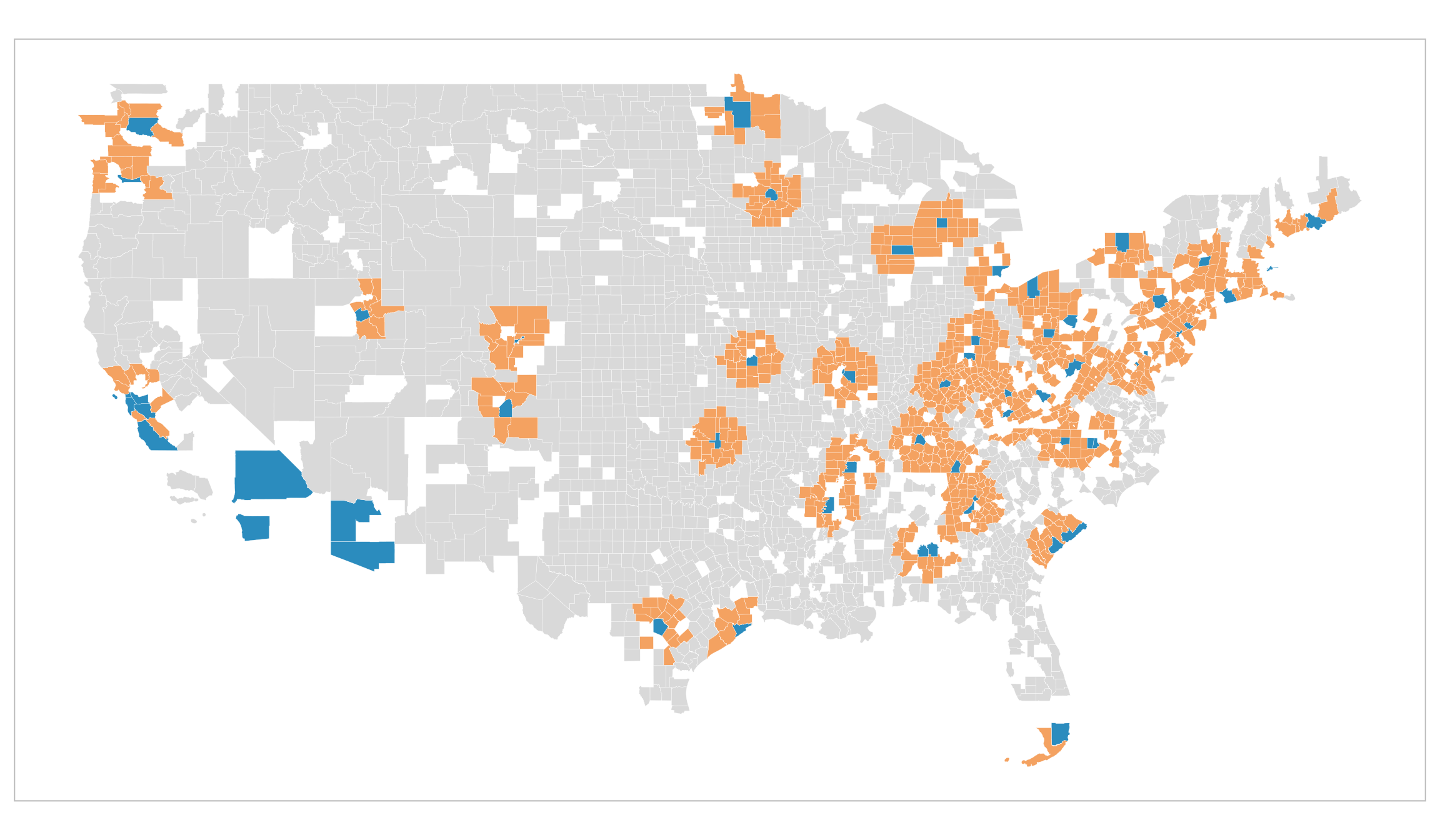}
    {\footnotesize (b) 1970}
  \end{minipage}
  \par\medskip
  \begin{minipage}{0.48\textwidth}
    \centering
    \includegraphics[width=\linewidth]{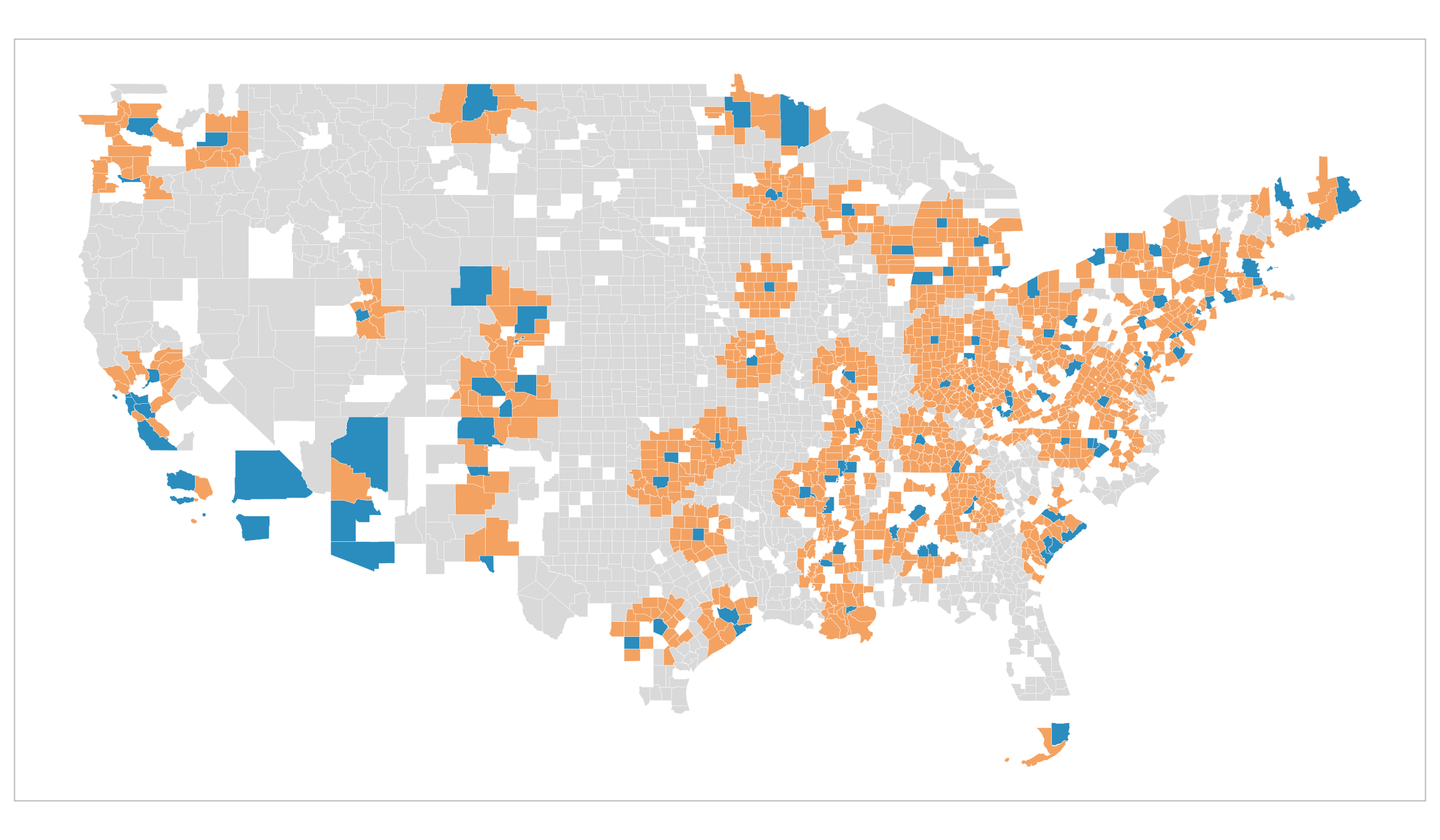}
    {\footnotesize (c) 1975}
  \end{minipage}
  \hfill
  \begin{minipage}{0.48\textwidth}
    \centering
    \includegraphics[width=\linewidth]{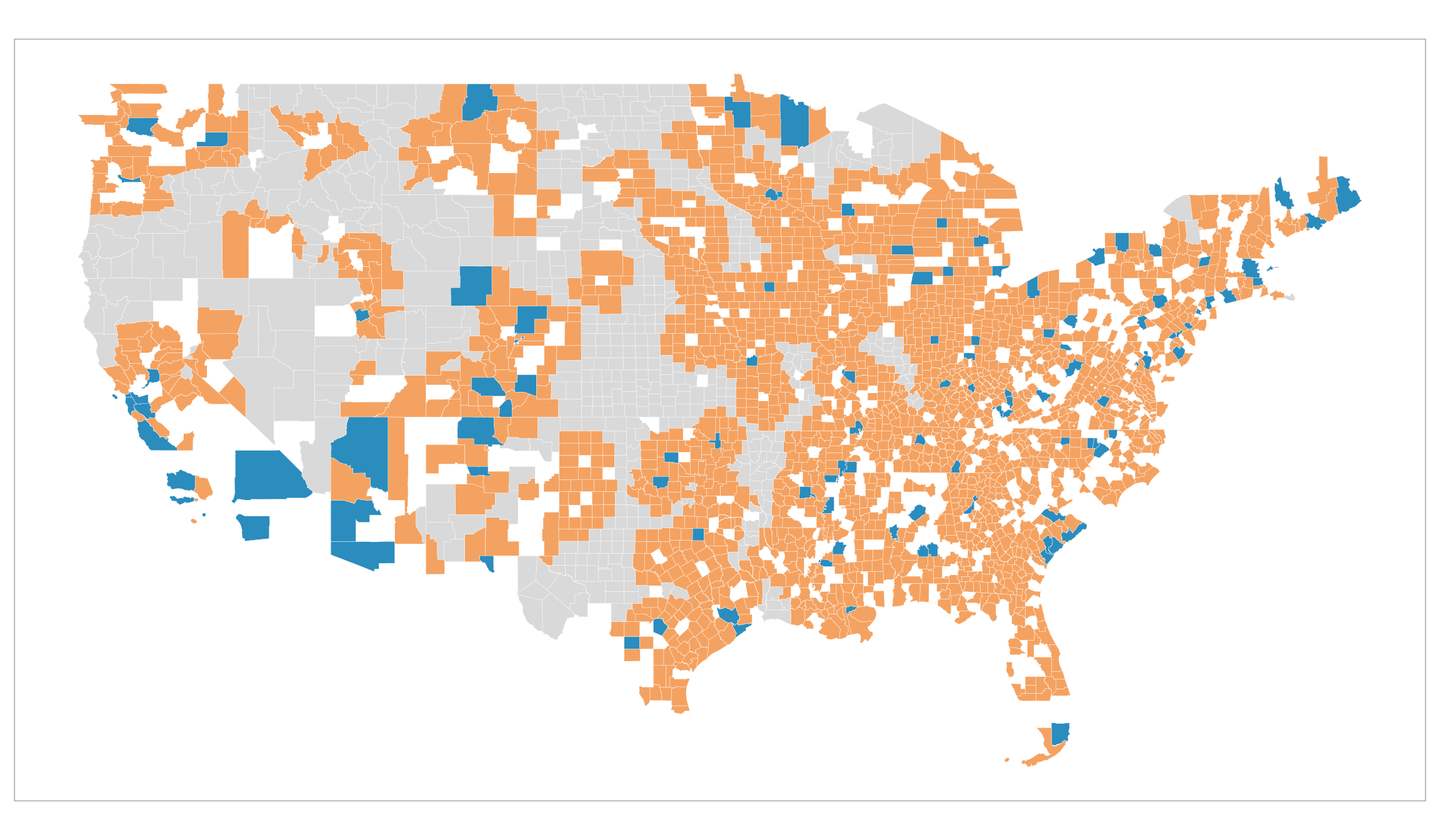}
    {\footnotesize (d) 1978}
  \end{minipage}
  \caption{Exposure under the Community Health Center rollout}
  \label{fig:application_spatial_spillover_exposure}
  \begin{minipage}{0.92\linewidth}
    \footnotesize Notes: Gray counties are unexposed and untreated, orange counties are exposed but untreated, and blue counties are treated. A county is exposed if at least one other county within 75 miles has adopted. The panels classify counties by realized adoption and exposure status in the indicated calendar year.
  \end{minipage}
\end{figure}

\subsection{Empirical estimation}

For each $(g,l)$, let $t=g+l$ and set $t_b=1959$ for both primitive comparisons.
The counterfactual sets every county's adoption time to $\infty$, so $H^0_{it}=0$.
The comparison group for the total effect consists of counties with $G_i=\infty$ and $H^1_{it}=0$.
For covariate adjustment, $X_i$ contains four indicators for the five 1960 urban categories and ten continuous baseline characteristics used by \citet{bailey2015war}: urban share; rural or farm share; shares aged 0--4 and 65 or older; nonwhite share; shares with at least 12 years and fewer than four years of schooling; shares of families with 1959 income below 3,000 dollars and above 10,000 dollars; and active physicians per 1,000 residents in 1960.
The schooling share for Bonneville County is corrected from 5908 to 59.08.
Restricting the sample to counties with complete baseline data excludes 17 counties, all of which are never-treated counties. Every county in the estimation sample has a finite physician rate, so no rate is imputed.
For each cohort and event time, we estimate the outcome regression for the total effect at $h=0$ by least squares with population weights. The regressors are an intercept and the 14 covariates in $X_i$.
The version of Equation~\eqref{eq:te_or_estimator} with analysis weights, defined in Section~\ref{sec:analysis-weights}, then standardizes the fitted trend among never-treated counties that are unexposed to the target cohort.

For switching, we estimate the outcome regression among never-treated counties by least squares with population weights.
The regressors include a saturated set of indicators for $H^1_{it}$ and the 14 covariates in $X_i$.
The indicators for exposure states supply the intercepts, while the covariate coefficients are common across exposure states.
The version of Equation~\eqref{eq:sw_or_estimator} with analysis weights standardizes the fitted trend among never-treated counties to the same treated cohort and population weights used for the total effect.
The spillover estimate is the total estimate minus the switching estimate for the same nine cohorts at event times $l = 0$ through $l = 4$.
For the spillover effect among never-treated counties, we evaluate the regression fitted among unexposed counties for each of these counties and apply the weighted analogue of \eqref{eq:never_treated_spillover_or}.
The estimate for each event time averages the resulting effects across calendar times with the same cohort weights.

We estimate the total and switching effects, the spillover effect among never-treated counties, the regression coefficients, and the cohort masses jointly.
The joint covariance uses a Wendland $C^2$ spatial kernel with a bandwidth of 100 miles.
The delta method for the effects by event time includes estimation of the weights used to aggregate across cohorts.
The Standard DID benchmark is estimated separately.
For each cohort and event time, it compares changes from $g-1$ to $g+l$, weighted by population, between cohort $g$ and all never-treated counties, without exposure adjustment. It then uses the same cohort weights to aggregate by event time.
The Standard DID benchmark estimates the contrast in Section~\ref{sec:conventional-did} with population weights.
Under interference, that contrast generally differs from both the total and switching effects.

\subsection{Pre-tests}

We apply the diagnostics in Section~\ref{sec:pre-treatment-diagnostics}.
No county is exposed under the 75-mile mapping in 1960--1964, so $H^1_{is}=H^0_{is}=0$ in these years.
For the Total effect, we estimate $P^{TE}(g,s)$ for each of the nine cohorts and $s=1960,\ldots,1964$, then aggregate the cohort estimates using the same analysis weights as the estimates after adoption.
For the Switching effect, we estimate $P^{SW}(g,g+k)$ at $k=-5,\ldots,-1$ using the realized exposure state and the matched comparison among never-treated counties.
Both diagnostics use the same covariates, population weights, and 100-mile spatial HAC rule as the main estimates.

Figure~\ref{fig:application_pretrend_v2} reports the estimates and pointwise 95 percent confidence intervals.
Every pointwise interval includes zero for both diagnostics.
The joint tests do not reject at the 5 percent level for either the Total effect ($p=0.308$) or the Switching effect ($p=0.303$).
The joint tests assess the plausibility of the corresponding parallel trends assumptions.

\begin{figure}[h]
  \centering
  \begin{subfigure}[t]{0.485\textwidth}
    \centering
    \includegraphics[width=\linewidth]{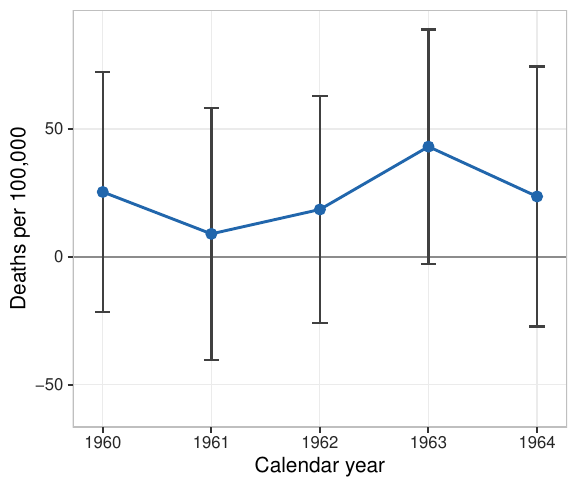}
    \caption{Parallel trends for the Total effect}
    \label{fig:application_pretrend_total_v2}
  \end{subfigure}
  \hfill
  \begin{subfigure}[t]{0.485\textwidth}
    \centering
    \includegraphics[width=\linewidth]{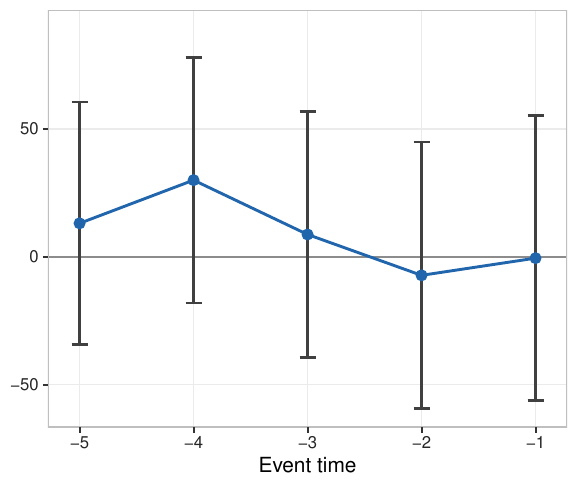}
    \caption{Parallel trends for the Switching effect}
    \label{fig:application_pretrend_switching_v2}
  \end{subfigure}
  \caption{Pre-treatment diagnostics for the Community Health Center rollout}
  \label{fig:application_pretrend_v2}
  \begin{minipage}{0.94\linewidth}
    \footnotesize Notes: Panel (a) aggregates the Total diagnostic by calendar year over the nine cohorts. Panel (b) aggregates the Switching diagnostic by event time over the same cohorts. Points are estimates, and vertical intervals are pointwise 95 percent confidence intervals from the 100-mile spatial HAC covariance. The separate joint Wald tests use all five estimates in the corresponding panel and the full covariance among them.
  \end{minipage}
\end{figure}

\subsection{Results}

Figure~\ref{fig:application_spatial_spillover_exposure} shows that exposure becomes common as the rollout proceeds.

Figure~\ref{fig:application_event_time_v2} and Table~\ref{tab:application_did_comparison} report negative total effect estimates at event times $l = 0$ through $l = 4$.
The estimates range from $-67.1$ at $l = 0$ to $-232.8$ at $l = 4$.
The pointwise 95 percent interval includes zero at $l = 0$ and excludes zero from $l = 1$ onward.

The switching estimates are also negative at every event time.
Their pointwise intervals include zero through $l = 2$ and exclude zero at $l = 3$ and $l = 4$.
The spillover estimates equal $-49.4$, $-64.4$, $-89.9$, $-113.1$, and $-159.1$.
Every pointwise 95 percent interval for the spillover effect excludes zero.
The spillover estimates are the differences between the total and switching estimates under the exposure mapping with a 75-mile radius.
The corresponding estimand compares realized exposure with zero exposure under no own adoption.
At event times $l = 0$ through $l = 4$, the estimated spillover effects among never-treated counties are $-13.9$, $-18.5$, $-25.3$, $-33.1$, and $-45.7$, respectively.
The pointwise 95 percent intervals exclude zero at $l = 3$ and $l = 4$.
Under the mapping with a 75-mile radius, these effects compare realized exposure with zero exposure among counties that never adopt.
The Standard DID estimates are also negative at event times $l = 0$ through $l = 4$.
\citet{bailey2015war} report an estimate of $-41.1$ with a standard error of $9.6$ for event years 0 through 4 in their baseline specification (their Table 2, Panel B, column 2).
Appendix Figure~\ref{fig:application_mapping_sensitivity_v2} reports the estimates under alternative exposure mappings.

\begin{table}[!htbp]
  \centering
  \caption{Estimated effects of Community Health Centers on mortality}
  \label{tab:application_did_comparison}
  \footnotesize
\setlength{\tabcolsep}{4pt}
\begin{threeparttable}
\begin{tabular}{lccccc}
\toprule
 & & & \multicolumn{2}{c}{Spillover} & \\
\cmidrule(lr){4-5}
Event time & Total & Switching & Treated counties & Never-treated counties & Standard DID \\
\midrule
$l = 0$ & \shortstack[c]{-67.1\\{\scriptsize [-147.3, 13.0]}} & \shortstack[c]{-17.8\\{\scriptsize [-77.0, 41.4]}} & \shortstack[c]{-49.4\\{\scriptsize [-95.2, -3.6]}} & \shortstack[c]{-13.9\\{\scriptsize [-30.2, 2.5]}} & \shortstack[c]{-24.1\\{\scriptsize [-48.1, -0.1]}} \\
$l = 1$ & \shortstack[c]{-108.5\\{\scriptsize [-188.7, -28.3]}} & \shortstack[c]{-44.1\\{\scriptsize [-98.7, 10.5]}} & \shortstack[c]{-64.4\\{\scriptsize [-117.8, -11.0]}} & \shortstack[c]{-18.5\\{\scriptsize [-39.1, 2.2]}} & \shortstack[c]{-36.9\\{\scriptsize [-59.9, -14.0]}} \\
$l = 2$ & \shortstack[c]{-140.0\\{\scriptsize [-225.9, -54.1]}} & \shortstack[c]{-50.2\\{\scriptsize [-103.9, 3.5]}} & \shortstack[c]{-89.9\\{\scriptsize [-153.8, -25.9]}} & \shortstack[c]{-25.3\\{\scriptsize [-51.1, 0.5]}} & \shortstack[c]{-62.7\\{\scriptsize [-87.4, -38.0]}} \\
$l = 3$ & \shortstack[c]{-170.0\\{\scriptsize [-254.1, -86.0]}} & \shortstack[c]{-56.9\\{\scriptsize [-107.5, -6.3]}} & \shortstack[c]{-113.1\\{\scriptsize [-185.3, -40.9]}} & \shortstack[c]{-33.1\\{\scriptsize [-61.9, -4.4]}} & \shortstack[c]{-64.1\\{\scriptsize [-89.5, -38.7]}} \\
$l = 4$ & \shortstack[c]{-232.8\\{\scriptsize [-324.7, -140.9]}} & \shortstack[c]{-73.7\\{\scriptsize [-124.0, -23.5]}} & \shortstack[c]{-159.1\\{\scriptsize [-246.9, -71.3]}} & \shortstack[c]{-45.7\\{\scriptsize [-81.3, -10.2]}} & \shortstack[c]{-74.9\\{\scriptsize [-99.8, -50.1]}} \\
\bottomrule
\end{tabular}
\begin{tablenotes}[flushleft]
\footnotesize
\item Notes: The outcome is the mortality rate from all causes among adults aged 50 and older, adjusted for age and expressed per 100,000 residents. Entries are point estimates with pointwise 95 percent confidence intervals in brackets. The intervals for Total, Switching, and both Spillover effects use their joint spatial HAC covariance with a 100-mile Wendland $C^2$ kernel. Standard DID is estimated separately with the same kernel and bandwidth.
\end{tablenotes}
\end{threeparttable}

\end{table}

\begin{figure}[h]
  \centering
  \includegraphics[width=0.95\textwidth]{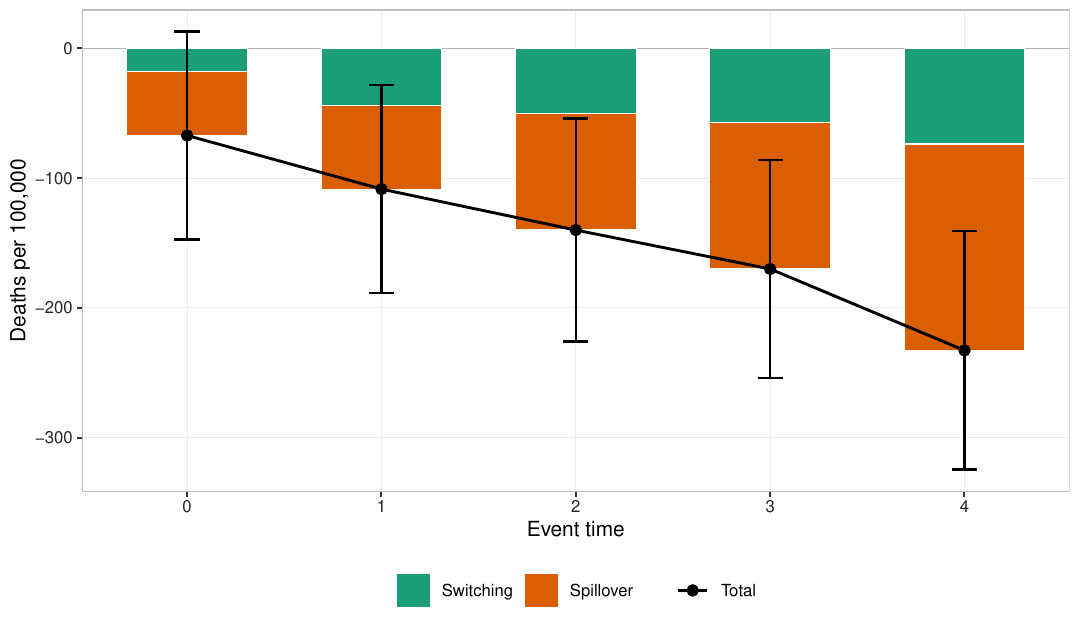}
  \caption{Decomposition of Community Health Center mortality effects by event time}
  \label{fig:application_event_time_v2}
  \begin{minipage}{0.92\linewidth}
    \footnotesize Notes: The black line and vertical intervals report the Total effect and its pointwise 95 percent confidence interval from the jointly estimated spatial HAC covariance. The green and orange segments report the Switching and Spillover point estimates, respectively, where Spillover equals Total minus Switching. All three series average the same nine cohorts with the same population weights. Pointwise intervals for Switching and Spillover appear in Table~\ref{tab:application_did_comparison}. The outcome is the mortality rate from all causes among adults aged 50 and older, adjusted for age and expressed per 100,000 residents.
  \end{minipage}
\end{figure}

\section{Conclusion}\label{sec:conclusion}

This paper defines total, switching, and spillover effects from realized and counterfactual exposure levels under an exposure mapping specified by the researcher.
Conditional parallel trends at the counterfactual exposure $H^0_{it}$ identifies the total effect directly.
The corresponding comparison at the realized exposure $H^1_{it}$ identifies the switching effect, and their difference identifies the spillover effect for the same target population.
The comparison at $H^0_{it}$ identifies the spillover effect among units that never adopt relative to the counterfactual exposure for the cohort.
Parallel trends across exposure states and transportability provide an alternative route to the spillover effect for the target cohort under no own adoption.

The regression and DR estimators treat the total and switching effects for cohorts that adopt as primitive and obtain their spillover effect by subtraction.
The spillover effect among never-treated counties is estimated directly by evaluating the fitted outcome regression for these counties at their counterfactual exposure.
The joint representation retains the covariance between the total and switching estimators required for inference on their spillover difference.
The Monte Carlo results show small bias under correct regression specification and substantial bias reduction from the DR correction when the switching outcome regression is misspecified but the odds are correctly specified.
RMSE generally decreases with sample size, and pointwise coverage is closer to 0.95 at $N=500$ in most designs.

In the empirical application, the estimated total effect on mortality is negative at event times $l = 0$ through $l = 4$.
The pointwise intervals for the total effect exclude zero from $l = 1$ onward.
The spillover estimates are also negative, and their pointwise intervals exclude zero at every event time.
The spillover estimates among never-treated counties are negative, and their pointwise intervals exclude zero at $l = 3$ and $l = 4$.
The Standard DID estimates are also negative at event times $l = 0$ through $l = 4$.
\bibliographystyle{aer}
\bibliography{ref}

@article{butts2024difference,
  title   = {Difference-in-differences with spatial spillovers},
  author  = {Butts, Kyle},
  journal = {arXiv preprint arXiv:2105.03737},
  year    = {2024}
}

@techreport{lasso2026spillover,
  author      = {Lasso-Jaramillo, Daniel},
  title       = {Spillover Gridlock: Revisiting Interference in Difference-in-Differences},
  institution = {Universidad de los Andes, Facultad de Econom{\'i}a, CEDE},
  type        = {Documentos CEDE},
  number      = {2026-26},
  year        = {2026},
  month       = {May},
  url         = {https://ideas.repec.org/p/col/000089/022570.html}
}

@article{VAZQUEZBARE2023105237,
  title    = {Identification and estimation of spillover effects in randomized experiments},
  journal  = {Journal of Econometrics},
  volume   = {237},
  number   = {1},
  pages    = {105237},
  year     = {2023},
  issn     = {0304-4076},
  doi      = {https://doi.org/10.1016/j.jeconom.2021.10.014},
  url      = {https://www.sciencedirect.com/science/article/pii/S0304407621003067},
  author   = {Gonzalo Vazquez-Bare}
}

@article{borusyak2024revisiting,
  title   = {Revisiting Event-Study Designs: Robust and Efficient Estimation},
  author  = {Borusyak, Kirill and Jaravel, Xavier and Spiess, Jann},
  journal = {The Review of Economic Studies},
  volume  = {91},
  number  = {6},
  pages   = {3253--3285},
  year    = {2024},
  doi     = {10.1093/restud/rdae007},
  url     = {https://doi.org/10.1093/restud/rdae007}
}

@article{berg2021spillover,
  title   = {Spillover Effects in Empirical Corporate Finance},
  author  = {Berg, Tobias and Reisinger, Markus and Streitz, Daniel},
  journal = {Journal of Financial Economics},
  volume  = {142},
  number  = {3},
  pages   = {1109--1127},
  year    = {2021},
  doi     = {10.1016/j.jfineco.2021.04.039},
  url     = {https://doi.org/10.1016/j.jfineco.2021.04.039}
}

@article{goodman2021difference,
  title     = {Difference-in-differences with variation in treatment timing},
  author    = {Goodman-Bacon, Andrew},
  journal   = {Journal of econometrics},
  volume    = {225},
  number    = {2},
  pages     = {254--277},
  year      = {2021},
  publisher = {Elsevier}
}

@article{delgado2015difference,
  title     = {Difference-in-differences techniques for spatial data: Local autocorrelation and spatial interaction},
  author    = {Delgado, Michael S and Florax, Raymond JGM},
  journal   = {Economics Letters},
  volume    = {137},
  pages     = {123--126},
  year      = {2015},
  doi       = {10.1016/j.econlet.2015.10.035},
  url       = {https://doi.org/10.1016/j.econlet.2015.10.035},
  publisher = {Elsevier}
}

@techreport{clarke2017estimating,
  title       = {Estimating Difference-in-Differences in the Presence of Spillovers},
  author      = {Clarke, Damian},
  year        = {2017},
  type        = {MPRA Paper},
  number      = {81604},
  institution = {University Library of Munich},
  url         = {https://mpra.ub.uni-muenchen.de/81604/}
}

@article{huber2021framework,
  title     = {A framework for separating individual-level treatment effects from spillover effects},
  author    = {Huber, Martin and Steinmayr, Andreas},
  journal   = {Journal of Business \& Economic Statistics},
  volume    = {39},
  number    = {2},
  pages     = {422--436},
  year      = {2021},
  publisher = {Taylor \& Francis}
}

@article{sant2020doubly,
  title     = {Doubly robust difference-in-differences estimators},
  author    = {Sant'Anna, Pedro HC and Zhao, Jun},
  journal   = {Journal of econometrics},
  volume    = {219},
  number    = {1},
  pages     = {101--122},
  year      = {2020},
  publisher = {Elsevier}
}

@article{10.1214/16-AOAS1005,
  author    = {Peter M. Aronow and Cyrus Samii},
  title     = {{Estimating average causal effects under general interference, with application to a social network experiment}},
  volume    = {11},
  journal   = {The Annals of Applied Statistics},
  number    = {4},
  publisher = {Institute of Mathematical Statistics},
  pages     = {1912 -- 1947},
  year      = {2017},
  doi       = {10.1214/16-AOAS1005},
  url       = {https://doi.org/10.1214/16-AOAS1005}
}

@article{savje2024causal,
  title     = {Causal inference with misspecified exposure mappings: separating definitions and assumptions},
  author    = {S{\"a}vje, Fredrik},
  journal   = {Biometrika},
  volume    = {111},
  number    = {1},
  pages     = {1--15},
  year      = {2024},
  publisher = {Oxford University Press}
}

@article{leung2022causal,
  title     = {Causal inference under approximate neighborhood interference},
  author    = {Leung, Michael P},
  journal   = {Econometrica},
  volume    = {90},
  number    = {1},
  pages     = {267--293},
  year      = {2022},
  publisher = {Wiley Online Library}
}

@techreport{fiorini2025simple,
  title       = {A Simple Approach to Staggered Difference-in-Differences in the Presence of Spillovers},
  author      = {Fiorini, Mario and Lee, Wooyong and Pfeifer, Gregor},
  year        = {2024},
  institution = {CESifo},
  type        = {CESifo Working Paper},
  number      = {11011},
  note        = {March 2024}
}

@article{ROTH20232218,
  title    = {What’s trending in difference-in-differences? A synthesis of the recent econometrics literature},
  journal  = {Journal of Econometrics},
  volume   = {235},
  number   = {2},
  pages    = {2218-2244},
  year     = {2023},
  issn     = {0304-4076},
  doi      = {https://doi.org/10.1016/j.jeconom.2023.03.008},
  url      = {https://www.sciencedirect.com/science/article/pii/S0304407623001318},
  author   = {Jonathan Roth and Pedro H.C. Sant’Anna and Alyssa Bilinski and John Poe}
}

@article{callaway2021difference,
  title     = {Difference-in-differences with multiple time periods},
  author    = {Callaway, Brantly and Sant'Anna, Pedro HC},
  journal   = {Journal of econometrics},
  volume    = {225},
  number    = {2},
  pages     = {200--230},
  year      = {2021},
  publisher = {Elsevier}
}

@article{sun2021estimating,
  title     = {Estimating dynamic treatment effects in event studies with heterogeneous treatment effects},
  author    = {Sun, Liyang and Abraham, Sarah},
  journal   = {Journal of econometrics},
  volume    = {225},
  number    = {2},
  pages     = {175--199},
  year      = {2021},
  publisher = {Elsevier}
}

@article{gardner2025two,
  title   = {Two-Stage Differences in Differences},
  author  = {Gardner, John and Thakral, Neil and T{\^o}, Linh T and Yap, Luther},
  journal = {Mimeo},
  year    = {2025},
  url     = {https://neilthakral.github.io/files/papers/2sdd.pdf}
}

@article{bailey2015war,
  title     = {The War on Poverty's experiment in public medicine: Community health centers and the mortality of older Americans},
  author    = {Bailey, Martha J and Goodman-Bacon, Andrew},
  journal   = {American Economic Review},
  volume    = {105},
  number    = {3},
  pages     = {1067--1104},
  year      = {2015},
  publisher = {American Economic Association 2014 Broadway, Suite 305, Nashville, TN 37203}
}

@misc{sun2025differenceindifferencesnetworkinterference,
  title         = {Difference-in-Differences Under Network Interference},
  author        = {Kuan Sun and Zhiguo Xiao},
  year          = {2025},
  eprint        = {2509.24259},
  archiveprefix = {arXiv},
  primaryclass  = {stat.ME},
  url           = {https://arxiv.org/abs/2509.24259}
}

@article{conley1999gmm,
  author  = {Conley, Timothy G.},
  title   = {GMM Estimation with Cross Sectional Dependence},
  journal = {Journal of Econometrics},
  volume  = {92},
  number  = {1},
  pages   = {1--45},
  year    = {1999},
  doi     = {10.1016/S0304-4076(98)00084-0},
  url     = {https://www.sciencedirect.com/science/article/pii/S0304407698000840}
}

@incollection{newey1994large,
  author    = {Newey, Whitney K. and McFadden, Daniel},
  title     = {Large Sample Estimation and Hypothesis Testing},
  booktitle = {Handbook of Econometrics},
  volume    = {4},
  chapter   = {36},
  pages     = {2111--2245},
  year      = {1994},
  publisher = {Elsevier},
  doi       = {10.1016/S1573-4412(05)80005-4},
  url       = {https://www.sciencedirect.com/science/article/pii/S1573441205800054}
}

@article{KOJEVNIKOV2021882,
  title    = {Limit theorems for network dependent random variables},
  journal  = {Journal of Econometrics},
  volume   = {222},
  number   = {2},
  pages    = {882-908},
  year     = {2021},
  issn     = {0304-4076},
  doi      = {https://doi.org/10.1016/j.jeconom.2020.05.019},
  url      = {https://www.sciencedirect.com/science/article/pii/S0304407620302402},
  author   = {Denis Kojevnikov and Vadim Marmer and Kyungchul Song}
}

@article{wendland1995piecewise,
  author  = {Wendland, Holger},
  title   = {Piecewise Polynomial, Positive Definite and Compactly Supported Radial Functions of Minimal Degree},
  journal = {Advances in Computational Mathematics},
  volume  = {4},
  pages   = {389--396},
  year    = {1995},
  doi     = {10.1007/BF02123482},
  url     = {https://doi.org/10.1007/BF02123482}
}

@article{10.1257/pandp.20261108,
  author  = {Xu, Ruonan},
  title   = {Dynamic Difference-in-Differences with Interference},
  journal = {AEA Papers and Proceedings},
  volume  = {116},
  year    = {2026},
  month   = {May},
  pages   = {58-63},
  doi     = {10.1257/pandp.20261108},
  url     = {https://www.aeaweb.org/articles?id=10.1257/pandp.20261108}
}

@article{XU2026106304,
title = {Difference-in-differences with interference},
journal = {Journal of Econometrics},
volume = {257},
pages = {106304},
year = {2026},
issn = {0304-4076},
doi = {10.1016/j.jeconom.2026.106304},
url = {https://www.sciencedirect.com/science/article/pii/S0304407626001259},
author = {Ruonan Xu}
}

\newpage

\appendix

\section{Identification proofs}\label{app:identification-proofs}

This appendix proves the identification results in Section~\ref{sec:identification}.

To shorten the derivations, define the observable conditional mean among units that never adopt
\begin{align*}
  \mu_t(x,h) & \equiv \mathbb E[Y_{it} - Y_{i,t_b}\mid G_i = \infty,\ X_i = x,\ H^1_{it} = h]                                                                                 \\
             & = \mathbb E[Y_{it}(\mathbf G) - Y_{i,t_b}(\mathbf G)\mid G_i = \infty,\ X_i = x,\ H^1_{it} = h] \tag*{\text{(by Assumption~\ref{assumption:consistency})}}     \\
             & = \mathbb E[Y_{it}(\infty,h) - Y_{i,t_b}(\infty,0)\mid G_i = \infty,\ X_i = x,\ H^1_{it} = h]. \tag*{\text{(by Assumption~\ref{assumption:exposure_mapping})}}
\end{align*}

\begin{proof}[Proof of Proposition~\ref{prop:dte}]
  Fix $(g,l)$ and set $t=g+l$.
  \begin{align*}
    \tau^{TE}(g,l) & = \mathbb E_g\!\left[Y_{it}(g,H^1_{it}) - Y_{it}(\infty,H^0_{it})\right]                                                                                                                                 \\
                   & = \mathbb E_g\!\left[Y_{it}(g,H^1_{it}) - Y_{i,t_b}(g,0)\right] - \mathbb E_g\!\left[Y_{it}(\infty,H^0_{it}) - Y_{i,t_b}(g,0)\right]                                                                     \\
                   & = \mathbb E_g\!\left[Y_{it}(g,H^1_{it}) - Y_{i,t_b}(g,0)\right] - \mathbb E_g\!\left[Y_{it}(\infty,H^0_{it}) - Y_{i,t_b}(\infty,0)\right] \tag*{\text{(by Assumption~\ref{assumption:no_anticipation})}} \\
                   & = \mathbb E_g[Y_{it}(\mathbf G) - Y_{i,t_b}(\mathbf G)] - \mathbb E_g\!\left[Y_{it}(\infty,H^0_{it}) - Y_{i,t_b}(\infty,0)\right] \tag*{\text{(by Assumption~\ref{assumption:exposure_mapping})}}        \\
                   & = \mathbb E_g[Y_{it} - Y_{i,t_b}] - \mathbb E_g\!\left[Y_{it}(\infty,H^0_{it}) - Y_{i,t_b}(\infty,0)\right] \tag*{\text{(by Assumption~\ref{assumption:consistency})}}                                   \\
                   & = \mathbb E_g[Y_{it} - Y_{i,t_b}] - \mathbb E_g[\mu_t(X_i,H^0_{it})] \tag*{\text{(by Assumption~\ref{assumption:te_parallel_trends})}}                                                                   \\
                   & = \mathbb E_g[Y_{it} - Y_{i,t_b}] - \mathbb E_g\!\left[ \sum_{h \in \mathcal H_N}\1\{H^0_{it} = h\} \mathbb E[Y_{it} - Y_{i,t_b}\mid G_i = \infty,\ X_i,\ H^1_{it} = h] \right].
  \end{align*}
  Assumption~\ref{assumption:overlap} ensures that $\mu_t(X_i,H^0_{it})$ is defined on the support over which the final expectation is taken.
\end{proof}

\begin{proof}[Proof of Proposition~\ref{prop:dse}]
  Fix \((g,l)\) and set \(t=g+l\).
  \begin{align*}
    \tau^{SW}(g,l) & = \mathbb E_g\!\left[Y_{it}(g,H^1_{it}) - Y_{it}(\infty,H^1_{it})\right]                                                                                                                                 \\
                   & = \mathbb E_g\!\left[Y_{it}(g,H^1_{it}) - Y_{i,t_b}(g,0)\right] - \mathbb E_g\!\left[Y_{it}(\infty,H^1_{it}) - Y_{i,t_b}(g,0)\right]                                                                     \\
                   & = \mathbb E_g\!\left[Y_{it}(g,H^1_{it}) - Y_{i,t_b}(g,0)\right] - \mathbb E_g\!\left[Y_{it}(\infty,H^1_{it}) - Y_{i,t_b}(\infty,0)\right] \tag*{\text{(by Assumption~\ref{assumption:no_anticipation})}} \\
                   & = \mathbb E_g[Y_{it}(\mathbf G) - Y_{i,t_b}(\mathbf G)] - \mathbb E_g\!\left[Y_{it}(\infty,H^1_{it}) - Y_{i,t_b}(\infty,0)\right] \tag*{\text{(by Assumption~\ref{assumption:exposure_mapping})}}        \\
                   & = \mathbb E_g[Y_{it} - Y_{i,t_b}] - \mathbb E_g\!\left[Y_{it}(\infty,H^1_{it}) - Y_{i,t_b}(\infty,0)\right] \tag*{\text{(by Assumption~\ref{assumption:consistency})}}                                   \\
                   & = \mathbb E_g[Y_{it} - Y_{i,t_b}] - \mathbb E_g[\mu_t(X_i,H^1_{it})] \tag*{\text{(by Assumption~\ref{assumption:dse_parallel_trends})}}                                                                  \\
                   & = \mathbb E_g[Y_{it} - Y_{i,t_b}] - \mathbb E_g\!\left[ \sum_{h \in \mathcal H_N}\1\{H^1_{it} = h\} \mathbb E[Y_{it} - Y_{i,t_b}\mid G_i = \infty,\ X_i,\ H^1_{it} = h] \right].
  \end{align*}
  Assumption~\ref{assumption:overlap} ensures that $\mu_t(X_i,H^1_{it})$ is defined on the support over which the final expectation is taken.
\end{proof}

\begin{proof}[Proof of Proposition~\ref{prop:never_treated_spillover}]
  Fix $(g,l)$ and set $t=g+l$.
  \begin{align*}
    \tau_\infty^{SP}(g,l) & = \mathbb E_\infty\!\left[Y_{it}(\infty,H^1_{it}) - Y_{it}(\infty,H^0_{it})\right]                                                                                                                          \\
                          & = \mathbb E_\infty\!\left[Y_{it}(\infty,H^1_{it}) - Y_{i,t_b}(\infty,0)\right] - \mathbb E_\infty\!\left[Y_{it}(\infty,H^0_{it}) - Y_{i,t_b}(\infty,0)\right]                                               \\
                          & = \mathbb E_\infty[Y_{it}(\mathbf G) - Y_{i,t_b}(\mathbf G)] - \mathbb E_\infty\!\left[Y_{it}(\infty,H^0_{it}) - Y_{i,t_b}(\infty,0)\right] \tag*{\text{(by Assumption~\ref{assumption:exposure_mapping})}} \\
                          & = \mathbb E_\infty[Y_{it} - Y_{i,t_b}] - \mathbb E_\infty\!\left[Y_{it}(\infty,H^0_{it}) - Y_{i,t_b}(\infty,0)\right] \tag*{\text{(by Assumption~\ref{assumption:consistency})}}                            \\
                          & = \mathbb E_\infty[Y_{it} - Y_{i,t_b}] - \mathbb E_\infty[\mu_t(X_i,H^0_{it})] \tag*{\text{(by Assumption~\ref{assumption:te_parallel_trends})}}                                                            \\
                          & = \mathbb E_\infty[Y_{it} - Y_{i,t_b}] - \mathbb E_\infty\!\left[ \sum_{h \in \mathcal H_N}\1\{H^0_{it} = h\} \mathbb E[Y_{it} - Y_{i,t_b}\mid G_i = \infty,\ X_i,\ H^1_{it} = h] \right].
  \end{align*}
  Assumption~\ref{assumption:overlap} ensures that $\mu_t(X_i,H^0_{it})$ is defined on the support among units that never adopt over which the final expectation is taken.
\end{proof}

\begin{proof}[Proof of Proposition~\ref{prop:alternative-identification}]
  Fix $(g,l)$ and set $t=g+l$.
  For every $(x,h^1,h^0)$ in the target support, the representation of $\mu_t$ above gives
  \begin{align*}
    \mu_t(x,h^1) - \mu_t(x,h^0) & = \mathbb E\!\left[Y_{it}(\infty,h^1) - Y_{i,t_b}(\infty,0) \mid G_i = \infty,\ X_i = x,\ H^1_{it} = h^1\right]                                                                                            \\
                                & \mathrel{\phantom{=}} - \mathbb E\!\left[Y_{it}(\infty,h^0) - Y_{i,t_b}(\infty,0) \mid G_i = \infty,\ X_i = x,\ H^1_{it} = h^0\right]                                                                      \\
                                & = \mathbb E\!\left[Y_{it}(\infty,h^1) - Y_{it}(\infty,0) \mid G_i = \infty,\ X_i = x,\ H^1_{it} = h^1\right]                                                                                               \\
                                & \mathrel{\phantom{=}} - \mathbb E\!\left[Y_{it}(\infty,h^0) - Y_{it}(\infty,0) \mid G_i = \infty,\ X_i = x,\ H^1_{it} = h^0\right]                                                                         \\
                                & \mathrel{\phantom{=}} + \mathbb E\!\left[Y_{it}(\infty,0) - Y_{i,t_b}(\infty,0) \mid G_i = \infty,\ X_i = x,\ H^1_{it} = h^1\right]                                                                        \\
                                & \mathrel{\phantom{=}} - \mathbb E\!\left[Y_{it}(\infty,0) - Y_{i,t_b}(\infty,0) \mid G_i = \infty,\ X_i = x,\ H^1_{it} = h^0\right]                                                                        \\
                                & = \mathbb E\!\left[Y_{it}(\infty,h^1) - Y_{it}(\infty,0) \mid G_i = \infty,\ X_i = x,\ H^1_{it} = h^1\right]                                                                                               \\
                                & \mathrel{\phantom{=}} - \mathbb E\!\left[Y_{it}(\infty,h^0) - Y_{it}(\infty,0) \mid G_i = \infty,\ X_i = x,\ H^1_{it} = h^0\right] \tag*{\text{(by Assumption~\ref{assumption:exposure_parallel_trends})}} \\
                                & = \mathbb E\!\left[Y_{it}(\infty,h^1) - Y_{it}(\infty,h^0) \mid G_i = g,\ X_i = x,\ H^1_{it} = h^1,\ H^0_{it} = h^0\right]. \tag*{\text{(by Assumption~\ref{assumption:exposure_transportability})}}
  \end{align*}
  Averaging the conditional equality over the target cohort gives
  \begin{align*}
    \tau^{SP}(g,l) & = \mathbb E_g[\mu_t(X_i,H^1_{it}) - \mu_t(X_i,H^0_{it})]                                                                                                                \\
                   & = \mathbb E_g\!\left[ \sum_{h \in \mathcal H_N}\{\1\{H^1_{it} = h\} - \1\{H^0_{it} = h\}\} \mathbb E[Y_{it} - Y_{i,t_b}\mid G_i = \infty,\ X_i,\ H^1_{it} = h] \right].
  \end{align*}
  Assumption~\ref{assumption:overlap} ensures that both conditional means are defined on the target support.
  If the conditions of Proposition~\ref{prop:dse} also hold, then
  \begin{align*}
    \hspace{-10mm}\tau^{TE}(g,l) & = \tau^{SW}(g,l) + \tau^{SP}(g,l)                                                                                                                                                                                                     \\
                                 & = \mathbb E_g[Y_{it} - Y_{i,t_b}] - \mathbb E_g\!\left[\sum_{h \in \mathcal H_N}\1\{H^1_{it} = h\}\mu_t(X_i,h)\right]  + \mathbb E_g\!\left[ \sum_{h \in \mathcal H_N}\{\1\{H^1_{it} = h\} - \1\{H^0_{it} = h\}\}\mu_t(X_i,h) \right] \\
                                 & = \mathbb E_g[Y_{it} - Y_{i,t_b}] - \mathbb E_g\!\left[\sum_{h \in \mathcal H_N}\1\{H^0_{it} = h\}\mu_t(X_i,h)\right].
  \end{align*}
  Substituting the definition of $\mu_t$ gives the formula for the total effect.
\end{proof}

\begin{proof}[Proof of Proposition~\ref{prop:did_decomposition}]
  Fix $(g,l)$ and set $t=g+l$.
  \begin{align*}
    \tau^{DID}(g,l) & = \mathbb E_g[Y_{it} - Y_{i,g-1}] - \mathbb E_\infty[Y_{it} - Y_{i,g-1}]                                                                                                                                                           \\
                    & = \mathbb E_g[Y_{it}(\mathbf G) - Y_{i,g-1}(\mathbf G)] - \mathbb E_\infty[Y_{it}(\mathbf G) - Y_{i,g-1}(\mathbf G)] \tag*{\text{(by Assumption~\ref{assumption:consistency})}}                                                    \\
                    & = \mathbb E_g\!\left[Y_{it}(g,H^1_{it}) - Y_{i,g-1}(g,H^1_{i,g-1})\right] - \mathbb E_\infty\!\left[Y_{it}(\infty,H^1_{it}) - Y_{i,g-1}(\infty,H^1_{i,g-1})\right] \tag*{\text{(by Assumption~\ref{assumption:exposure_mapping})}} \\
                    & = \mathbb E_g\!\left[Y_{it}(g,H^1_{it}) - Y_{i,g-1}(\infty,H^1_{i,g-1})\right]  - \mathbb E_\infty\!\left[Y_{it}(\infty,H^1_{it}) - Y_{i,g-1}(\infty,H^1_{i,g-1})\right]. \tag*{\text{(by no anticipation at $g-1$)}}
  \end{align*}
  Adding and subtracting $Y_{it}(\infty,H^0_{it})$ and $Y_{i,g-1}(\infty,H^0_{i,g-1})$ within the two expectations gives
  \begin{align*}
    \tau^{DID}(g,l) & = \mathbb E_g\!\left[Y_{it}(g,H^1_{it}) - Y_{it}(\infty,H^0_{it})\right]  - \mathbb E_g\!\left[Y_{i,g-1}(\infty,H^1_{i,g-1}) - Y_{i,g-1}(\infty,H^0_{i,g-1})\right]                                                     \\
                    & \mathrel{\phantom{=}} - \bigg\{ \mathbb E_\infty\!\left[Y_{it}(\infty,H^1_{it}) - Y_{it}(\infty,H^0_{it})\right]  - \mathbb E_\infty\!\left[Y_{i,g-1}(\infty,H^1_{i,g-1}) - Y_{i,g-1}(\infty,H^0_{i,g-1})\right]\bigg\} \\
                    & \mathrel{\phantom{=}} + \bigg\{\mathbb E_g\!\left[Y_{it}(\infty,H^0_{it}) - Y_{i,g-1}(\infty,H^0_{i,g-1})\right] - \mathbb E_\infty\!\left[Y_{it}(\infty,H^0_{it}) - Y_{i,g-1}(\infty,H^0_{i,g-1})\right]\bigg\}.
  \end{align*}
\end{proof}

\section{Estimation details}\label{app:estimation-details}

This appendix gives the outcome specifications, the robustness argument, and the joint estimating equations.

\subsection{Population functionals and nuisance specifications}\label{app:parametric-first-stages}

For a predetermined analysis weight $\varpi_i$, define
\begin{align*}
  m_{\varpi,t}(x,h)              & = \dfrac{\mathbb E[\varpi_i\1\{G_i=\infty\}\Delta Y_{it}\mid X_i=x,\ H^1_{it}=h]}{\mathbb E[\varpi_i\1\{G_i=\infty\}\mid X_i=x,\ H^1_{it}=h]}, \\
  \lambda_{\varpi,TE}^{g,l}(x,h) & = \dfrac{\mathbb E[\varpi_i\1\{G_i=g,\ H^0_{it}=h\}\mid X_i=x]}{\mathbb E[\varpi_i\1\{G_i=\infty,\ H^1_{it}=h\}\mid X_i=x]},                   \\
  \lambda_{\varpi,SW}^{g,l}(x,h) & = \dfrac{\mathbb E[\varpi_i\1\{G_i=g,\ H^1_{it}=h\}\mid X_i=x]}{\mathbb E[\varpi_i\1\{G_i=\infty,\ H^1_{it}=h\}\mid X_i=x]}.
\end{align*}
Let $\widetilde m_{\varpi,t}^{TE,g,l}$, $\widetilde m_{\varpi,t}^{SW,g,l}$, $\widetilde\lambda_{\varpi,TE}^{g,l}$, and $\widetilde\lambda_{\varpi,SW}^{g,l}$ denote the conditional mean and odds functions under the weighted specifications chosen by the researcher.
For $Q\in\{TE,SW\}$, define $\lambda_{\varpi,Q}^{g,l}$, $\widetilde\lambda_{\varpi,Q}^{g,l}$, and their fitted analogues to be zero outside the support of the numerator of $\lambda_{\varpi,Q}^{g,l}$.
The weighted doubly robust functionals are
\begin{align*}
  \tau_{\varpi,DR}^{TE}(g,l) & = \dfrac{1}{s_{\varpi,g}}\mathbb E\!\left[\varpi_i\1\{G_i=g\}\{\Delta Y_{it}-\widetilde m_{\varpi,t}^{TE,g,l}(X_i,H^0_{it})\}\right]                                                                            \\
                             & \mathrel{\phantom{=}}-\dfrac{1}{s_{\varpi,g}}\mathbb E\!\left[\varpi_i\1\{G_i=\infty\}\widetilde\lambda_{\varpi,TE}^{g,l}(X_i,H^1_{it})\{\Delta Y_{it}-\widetilde m_{\varpi,t}^{TE,g,l}(X_i,H^1_{it})\}\right], \\
  \tau_{\varpi,DR}^{SW}(g,l) & = \dfrac{1}{s_{\varpi,g}}\mathbb E\!\left[\varpi_i\1\{G_i=g\}\{\Delta Y_{it}-\widetilde m_{\varpi,t}^{SW,g,l}(X_i,H^1_{it})\}\right]                                                                            \\
                             & \mathrel{\phantom{=}}-\dfrac{1}{s_{\varpi,g}}\mathbb E\!\left[\varpi_i\1\{G_i=\infty\}\widetilde\lambda_{\varpi,SW}^{g,l}(X_i,H^1_{it})\{\Delta Y_{it}-\widetilde m_{\varpi,t}^{SW,g,l}(X_i,H^1_{it})\}\right], \\
  \tau_{\varpi,DR}^{SP}(g,l) & = \tau_{\varpi,DR}^{TE}(g,l)-\tau_{\varpi,DR}^{SW}(g,l).
\end{align*}
Setting $\varpi_i=1$ yields the functionals in Proposition~\ref{prop:dr_components}.
Under the weighted identification conditions, the spillover effect among units that never adopt has the following representation in terms of observed outcomes:
\begin{align*}
  \tau_{\varpi,\infty}^{SP}(g,l) & = \dfrac{1}{s_{\varpi,\infty}}\mathbb E\!\left[\varpi_i\1\{G_i=\infty\}\{\Delta Y_{it}-m_{\varpi,t}(X_i,H^0_{it})\}\right].
\end{align*}

For $Q\in\{TE,SW\}$, let $m_{\varpi,t}^{Q,g,l}(x,h;\beta_{g,l}^Q)$ be a conditional mean specification and let $\lambda_{\varpi,Q}^{g,l}(x,h;\alpha_{g,l}^Q)$ denote the odds induced by a propensity score model, both indexed by finite-dimensional parameters.
For the Total effect and the spillover effect among units that never adopt, the conditional mean specification is estimated using units that never adopt whose $(X_i,H^1_{it})$ values belong to the union of the supports of $(X_i,H^0_{it})\mid G_i=g$ and $(X_i,H^0_{it})\mid G_i=\infty$.
For the Switching effect, it is estimated using units that never adopt whose $(X_i,H^1_{it})$ values belong to the support of $(X_i,H^1_{it})\mid G_i=g$.
The odds specifications use pooled samples that combine units in cohort $g$ with units that never adopt.
For the Total effect, observations in cohort $g$ are evaluated at $(X_i,H^0_{it})$, and observations for units that never adopt are evaluated at $(X_i,H^1_{it})$, on the support of $(X_i,H^0_{it})\mid G_i=g$.
For the Switching effect, both groups are evaluated at $(X_i,H^1_{it})$ on the support of $(X_i,H^1_{it})\mid G_i=g$.

Let $q_{m,Q,i}^{g,l}(\beta_{g,l}^Q)$ and $q_{\lambda,Q,i}^{g,l}(\alpha_{g,l}^Q)$ denote smooth unit-level estimating functions, of the same dimensions as their parameters, formed from these samples and weighted by $\varpi_i$.
Their sample equations define $\widehat\beta_{g,l}^Q$ and $\widehat\alpha_{g,l}^Q$.
The fitted conditional mean and odds functions are obtained by evaluating the specifications at these estimates.
Weighted least-squares scores for the conditional mean and weighted logistic likelihood scores for the corresponding propensity score are leading examples.
At the population roots $\beta_{g,l}^{Q,0}$ and $\alpha_{g,l}^{Q,0}$, define
\begin{align*}
  \widetilde m_{\varpi,t}^{Q,g,l}(x,h)    & = m_{\varpi,t}^{Q,g,l}(x,h;\beta_{g,l}^{Q,0}),      \\
  \widetilde\lambda_{\varpi,Q}^{g,l}(x,h) & = \lambda_{\varpi,Q}^{g,l}(x,h;\alpha_{g,l}^{Q,0}).
\end{align*}
Correct specification of a conditional mean or odds specification means that the corresponding function at its population root equals $m_{\varpi,t}$ or $\lambda_{\varpi,Q}^{g,l}$ on the relevant support.
The specified odds and their fitted values are set to zero outside the numerator support.
With finite covariate support, saturated odds equal ratios of weighted cell frequencies.

\begin{proof}[Proof of Proposition~\ref{prop:dr_components}]
  Subtracting the weighted identification formulas from the weighted DR functionals and applying the change-of-measure identity gives
  \begin{align*}
    \tau_{\varpi,DR}^{TE}(g,l)-\tau_\varpi^{TE}(g,l) & = \dfrac{1}{\mathbb E[\varpi_i\1\{G_i=g\}]}\mathbb E\!\left[\varpi_i\1\{G_i=\infty\}\right. \\
                                                                                                               & \mathrel{\phantom{=}}\left.\times\{\widetilde\lambda_{\varpi,TE}^{g,l}(X_i,H^1_{it})-\lambda_{\varpi,TE}^{g,l}(X_i,H^1_{it})\}\{\widetilde m_{\varpi,t}^{TE,g,l}(X_i,H^1_{it})-m_{\varpi,t}(X_i,H^1_{it})\}\right], \\
    \tau_{\varpi,DR}^{SW}(g,l)-\tau_\varpi^{SW}(g,l) & = \dfrac{1}{\mathbb E[\varpi_i\1\{G_i=g\}]}\mathbb E\!\left[\varpi_i\1\{G_i=\infty\}\right. \\
                                                                                                               & \mathrel{\phantom{=}}\left.\times\{\widetilde\lambda_{\varpi,SW}^{g,l}(X_i,H^1_{it})-\lambda_{\varpi,SW}^{g,l}(X_i,H^1_{it})\}\{\widetilde m_{\varpi,t}^{SW,g,l}(X_i,H^1_{it})-m_{\varpi,t}(X_i,H^1_{it})\}\right].
  \end{align*}
  For $Q\in\{TE,SW\}$, the corresponding remainder is zero if either $\widetilde m_{\varpi,t}^{Q,g,l}=m_{\varpi,t}$ or $\widetilde\lambda_{\varpi,Q}^{g,l}=\lambda_{\varpi,Q}^{g,l}$ on the support of the numerator of $\lambda_{\varpi,Q}^{g,l}$.
  Setting $\varpi_i=1$ proves the two claims in the proposition.
  If the result holds for both $TE$ and $SW$, subtracting the two identities gives $\tau^{TE}(g,l)-\tau^{SW}(g,l)=\tau^{SP}(g,l)$.
\end{proof}

\subsection{Joint estimating equations}\label{app:estimating-equation-array}

Let $\mathcal A = \{(g,l):g\in\mathcal G_l,\ l\in\mathcal L\}$.
For each $(g,l)\in\mathcal A$ and $Q\in\{TE,SW\}$, the outcome regression system includes
\begin{align*}
  0 & = \sum_{i=1}^N q_{m,Q,i}^{g,l}(\beta_{g,l}^Q),
\end{align*}
using the conditional mean estimating functions described in Appendix~\ref{app:parametric-first-stages}.
The effect equation for the Total effect is
\begin{align*}
  0 & = \sum_{i=1}^N\left[\dfrac{\varpi_i\1\{G_i=g\}}{s_{\varpi,g}}\{Y_{it}-Y_{i,t_b}-m_{\varpi,t}^{TE,g,l}(X_i,H^0_{it};\beta_{g,l}^{TE})\}-\tau_\varpi^{TE}(g,l)\right].
\end{align*}
The effect equation for the Switching effect is
\begin{align*}
  0 & = \sum_{i=1}^N\left[\dfrac{\varpi_i\1\{G_i=g\}}{s_{\varpi,g}}\{Y_{it}-Y_{i,t_b}-m_{\varpi,t}^{SW,g,l}(X_i,H^1_{it};\beta_{g,l}^{SW})\}-\tau_\varpi^{SW}(g,l)\right].
\end{align*}
For the spillover effect among units that never adopt, let $\kappa_{\varpi,\infty}(g,l)$ denote the root of the equation for a given $(g,l)$. Using the conditional mean parameter for the Total effect, its effect equation is
\begin{align*}
  0 & = \sum_{i=1}^N\varpi_i\1\{G_i=\infty\}\{Y_{it}-Y_{i,t_b}-m_{\varpi,t}^{TE,g,l}(X_i,H^0_{it};\beta_{g,l}^{TE})-\kappa_{\varpi,\infty}(g,l)\}.
\end{align*}
Under the weighted identification conditions and correct specification, $\kappa_{\varpi,\infty}(g,l)=\tau_{\varpi,\infty}^{SP}(g,l)$.
The DR system stacks the same conditional mean equations and, for each $Q\in\{TE,SW\}$, the odds equation
\begin{align*}
  0 & = \sum_{i=1}^N q_{\lambda,Q,i}^{g,l}(\alpha_{g,l}^Q),
\end{align*}
together with the following effect equations:
\begin{align*}
  0 & = \sum_{i=1}^N\bigg[ \dfrac{\varpi_i\1\{G_i=g\}}{s_{\varpi,g}}\{Y_{it}-Y_{i,t_b}-m_{\varpi,t}^{TE,g,l}(X_i,H^0_{it};\beta_{g,l}^{TE})\}\\
                      &-\dfrac{\varpi_i\1\{G_i=\infty\}}{s_{\varpi,g}}\lambda_{\varpi,TE}^{g,l}(X_i,H^1_{it};\alpha_{g,l}^{TE})\{Y_{it}-Y_{i,t_b}-m_{\varpi,t}^{TE,g,l}(X_i,H^1_{it};\beta_{g,l}^{TE})\} -\tau_\varpi^{TE}(g,l) \bigg], \\
  0 & = \sum_{i=1}^N\bigg[ \dfrac{\varpi_i\1\{G_i=g\}}{s_{\varpi,g}}\{Y_{it}-Y_{i,t_b}-m_{\varpi,t}^{SW,g,l}(X_i,H^1_{it};\beta_{g,l}^{SW})\}\\
                      &-\dfrac{\varpi_i\1\{G_i=\infty\}}{s_{\varpi,g}}\lambda_{\varpi,SW}^{g,l}(X_i,H^1_{it};\alpha_{g,l}^{SW})\{Y_{it}-Y_{i,t_b}-m_{\varpi,t}^{SW,g,l}(X_i,H^1_{it};\beta_{g,l}^{SW})\} -\tau_\varpi^{SW}(g,l) \bigg].
\end{align*}
The equation for the spillover effect among units that never adopt is the same in the outcome regression and DR systems.
For each $g\in\bigcup_{l\in\mathcal L}\mathcal G_l$, the mass equation is
\begin{align*}
  0 & = \sum_{i=1}^N\{\varpi_i\1\{G_i=g\}-s_{\varpi,g}\}.
\end{align*}
The stacked representation is used to derive the joint linearization and covariance.
Because the nuisance and mass equations do not depend on the effect parameters, the equations may be solved jointly or by solving the nuisance and mass equations before the effect equations.
Let $\theta_{\varpi,N}$ collect the nuisance parameters, effect parameters, and cohort masses in these equations, and let $\Psi_{i,N}^\varpi(\theta)$ be unit $i$'s contribution to the stack.
Define
\begin{align*}
  \widehat\Psi_N^\varpi(\theta) & = \dfrac{1}{N}\sum_{i=1}^N\Psi_{i,N}^\varpi(\theta),                            \\
  \Psi_N^\varpi(\theta)         & = \dfrac{1}{N}\sum_{i=1}^N\mathbb E[\Psi_{i,N}^\varpi(\theta)\mid\mathcal C_N].
\end{align*}
Let $\widehat\theta_{\varpi,N}$ denote the estimator defined by the sample equations, and let $\theta_{\varpi,N}^0$ solve $\Psi_N^\varpi(\theta_{\varpi,N}^0)=0$.
The maps $a_{\varpi,l}^{TE}$, $a_{\varpi,l}^{SW}$, and $a_{\varpi,l}^{\infty,SP}$ apply the cohort weights in Section~\ref{sec:analysis-weights} to the corresponding cell effects and masses at event time $l$.
Define $a_{\varpi,l}^{SP}=a_{\varpi,l}^{TE}-a_{\varpi,l}^{SW}$.
Stack these maps over $l\in\mathcal L$ in $a_\varpi(\theta)$, and define
\begin{align*}
  \boldsymbol\zeta_\varpi & = \left((\tau_\varpi^{TE}(l))_{l\in\mathcal L}^\top,(\tau_\varpi^{SW}(l))_{l\in\mathcal L}^\top,(\tau_\varpi^{SP}(l))_{l\in\mathcal L}^\top,(\bar\tau_{\varpi,\infty}^{SP}(l))_{l\in\mathcal L}^\top\right)^\top.
\end{align*}
Thus $\theta_{\varpi,N}$ retains the nuisance parameters used in the joint linearization, whereas $a_\varpi$ maps the full parameter vector to the event-time effect quantities collected in $\boldsymbol\zeta_\varpi$.
The system is exactly identified and has fixed dimension when $\mathcal L$, $\mathcal A$, and the nuisance specifications are fixed.

\section{Inference details and proofs}\label{app:inference-proofs}

This appendix states the conditions for the finite-dimensional estimating equations and proves the results in Section~\ref{sec:inference}.
All weak-convergence statements below are conditional on $\mathcal C_N$.
Conditional weak convergence means that the bounded-Lipschitz distance between the conditional law of a statistic and its stated limiting law converges in probability to zero.
Set $\varpi_i=1$ in Appendix~\ref{app:estimating-equation-array} and denote the resulting score array, population root, aggregation map, and target vector by $\Psi_{i,N}$, $\theta_N^0$, $a$, and $\boldsymbol\zeta$.

\subsection{Finite-dimensional regularity}\label{app:inference-regularity}

Define the population Jacobian, aggregation derivative, and long-run score covariance by
\begin{align*}
  J_N & = \nabla_\theta\Psi_N(\theta_N^0),\   A_N = \left.\dfrac{\partial a(\theta)}{\partial\theta^\top}\right|_{\theta=\theta_N^0},\ \Omega_N = \dfrac{1}{N}\operatorname{Var}\!\left\{\sum_{i=1}^N\Psi_{i,N}(\theta_N^0)\mathrel{\Big|}\mathcal C_N\right\}.
\end{align*}
The influence contribution for the stacked target and its covariance limit are
\begin{align*}
  \varphi_{i,N} & = -A_NJ_N^{-1}\Psi_{i,N}(\theta_N^0),\ \Gamma = AJ^{-1}\Omega J^{-\top}A^\top.
\end{align*}
For $Q\in\{\mathrm{TE},\mathrm{SW},\mathrm{SP}\}$, let $\varphi_{i,N}^Q(l)$ denote the row of $\varphi_{i,N}$ corresponding to effect $Q$ at event time $l$.
Let $\varphi_{i,N}^{\infty,SP}(l)$ denote the row for the spillover effect among units that never adopt.
The row for the spillover effect for the target cohorts satisfies $\varphi_{i,N}^{SP}(l)=\varphi_{i,N}^{TE}(l)-\varphi_{i,N}^{SW}(l)$.
The covariance notation in Section~\ref{sec:inference} refers to the corresponding entries of $\Gamma$.

For feasible covariance estimation, define
\begin{align*}
  \widehat J_N & = \nabla_\theta\widehat\Psi_N(\widehat\theta_N), \ \widehat A_N = \left.\dfrac{\partial a(\theta)}{\partial\theta^\top}\right|_{\theta=\widehat\theta_N},\ \widehat\varphi_{i,N} = -\widehat A_N\widehat J_N^{-1}\Psi_{i,N}(\widehat\theta_N).
\end{align*}
The coordinate $\widehat\varphi_{i,N}^Q(l)$ is the corresponding row of $\widehat\varphi_{i,N}$, and $\widehat\varphi_{i,N}^{\infty,SP}(l)$ is its row for the spillover effect among units that never adopt.
The feasible contribution for the spillover effect for the target cohorts satisfies $\widehat\varphi_{i,N}^{SP}(l)=\widehat\varphi_{i,N}^{TE}(l)-\widehat\varphi_{i,N}^{SW}(l)$.

\begin{assumption}[GMM and dependence regularity]\label{assumption:inf_regular}
  There exist constants \(c_0 > 0\) and \(C_0 < \infty\), independent of \(N\), such that:
  \begin{enumerate}
    \item[(i)] The stacked equations and $\theta_N$ have the same fixed dimension, and the parameter space $\Theta$ is compact and convex.
          A measurable $\widehat\theta_N\in\Theta$ exists and satisfies
          \begin{align*}
            \|\widehat\Psi_N(\widehat\theta_N)\| & = o_p(N^{-1/2}).
          \end{align*}
    \item[(ii)] The solution $\theta_N^0$ is in the interior of $\Theta$ and, for every fixed $\epsilon > 0$,
          \begin{align*}
            \Pr (\inf_{\theta\in\Theta:\|\theta - \theta_N^0\|\ge\epsilon}\|\Psi_N(\theta)\|\ge c_\epsilon) & \to 1
          \end{align*}
          for some \(c_\epsilon > 0\).
    \item[(iii)] The equations are continuously differentiable on an open neighborhood of $\Theta$. There are random envelopes $B_{i,N}$ and $L_{i,N}$ such that
          \begin{align*}
            \sup_{\theta\in\Theta} \{\|\Psi_{i,N}(\theta)\| + \|\nabla_\theta\Psi_{i,N}(\theta)\|\}                                                 & \le B_{i,N},                              \\
            \|\Psi_{i,N}(\theta) - \Psi_{i,N}(\widetilde\theta)\| + \|\nabla_\theta\Psi_{i,N}(\theta) - \nabla_\theta\Psi_{i,N}(\widetilde\theta)\| & \le L_{i,N}\|\theta - \widetilde\theta\|,
          \end{align*}
          and, for some \(p > 4\),
          \begin{align*}
            \sup_{i,N} \mathbb E[B_{i,N}^p + L_{i,N}^p\mid\mathcal C_N] & \le C_0.
          \end{align*}
    \item[(iv)] The equations and their derivative arrays satisfy
          \begin{align*}
            \sup_{\theta\in\Theta} \|\widehat\Psi_N(\theta) - \Psi_N(\theta)\|                           & = o_p(1), \\
            \sup_{\theta\in\Theta} \|\nabla_\theta\widehat\Psi_N(\theta) - \nabla_\theta\Psi_N(\theta)\| & = o_p(1).
          \end{align*}
    \item[(v)] The population Jacobians are uniformly nonsingular and converge to a deterministic nonsingular matrix $J$: $ c_0 \le \sigma_{\min}(J_N) \le \sigma_{\max}(J_N) \le C_0$ and $J_N \xrightarrow{p}J$.
    \item[(vi)] For every $Q\in\{\mathrm{TE},\mathrm{SW},\mathrm{SP}\}$ and $l\in\mathcal L$, the maps $a_l^Q(\theta)$ and $a_l^{\infty,SP}(\theta)$ are continuously differentiable near $\theta_N^0$, with uniformly bounded and Lipschitz gradients. The stacked derivative satisfies $A_N \xrightarrow{p}A$ for a deterministic finite matrix $A$.
    \item[(vii)] The long-run score covariance satisfies $\Omega_N \xrightarrow{p}\Omega$ for a deterministic finite positive-semidefinite matrix $\Omega$.
  \end{enumerate}
\end{assumption}

Parts (i)--(iv) give fixed-dimensional identification, smoothness, moment bounds, and uniform convergence of the sample equations and their derivatives for the exactly identified system \citep{newey1994large}.
Parts (v)--(vii) give the deterministic Jacobian, aggregation, and covariance limits.

\begin{remark}\label{app:cohort_specific_inference}
  Fix a cohort and event-time pair $(g,l)$ and replace the corresponding aggregation row of $a(\theta)$ by the coordinate for the effect at $(g,l)$.
  Suppose that this population coordinate equals the cohort-specific target.
  If Assumptions~\ref{assumption:inf_support}, \ref{assumption:inf_regular}, \ref{assumption:inf_spatial}, \ref{assumption:inf_shac}, and \ref{assumption:inf_shac_technical} hold for its influence contribution, the results in Section~\ref{sec:inference} apply to that cell.
  The number of selected cohort and event-time pairs remains fixed.
\end{remark}

\begin{remark}[Joint inference for pre-tests]\label{app:pretest_inference}
  For any fixed collection of pre-treatment periods, augment the aggregation map with the rows corresponding to the selected pre-treatment parameters.
  Under the corresponding conditions, the same linearization, joint normality, and spatial HAC conclusions apply to their estimators.
  If the conditional means of their influence contributions vanish and the limiting covariance matrix is positive definite, the usual Wald statistic for the joint null that these parameters are zero converges to a chi-square distribution with degrees of freedom equal to the number of selected parameters.
\end{remark}

\subsection{Spatial CLT rates}\label{app:spatial-clt-rates}

For any real $r$, define
\begin{align*}
  \mathcal N_N(i;r) & = \begin{cases} \{j\in\{1,\ldots,N\}:\rho(i,j)\le r\}, & r\ge 0, \\
                                      \varnothing,                           & r<0.
                        \end{cases}
\end{align*}
For a fixed scalar direction $v$ of the score vector, define
\begin{align*}
  Z_{i,N}(v)    & = v^\top\!\left[\Psi_{i,N}(\theta_N^0)-\mathbb E[\Psi_{i,N}(\theta_N^0)\mid\mathcal C_N]\right], \\
  \sigma_N^2(v) & = \operatorname{Var}\!\left\{\sum_{i=1}^N Z_{i,N}(v)\mathrel{\Big|}\mathcal C_N\right\}.
\end{align*}

\begin{assumption}[Spatial CLT rates]\label{assumption:inf_spatial_rates}
  There is a sequence of positive integers $r_N\to\infty$ such that, for every fixed $v$ for which $\sigma_N^2(v)/N$ converges in probability to a positive deterministic limit, the following conditions hold.
  \begin{enumerate}
    \item[(i)] For $k=1,2$,
          \begin{align*}
            \sum_{i=1}^N\sum_{s=0}^{\infty}\sum_{\substack{1\le j\le N\\s\le\rho(i,j)<s+1}}\dfrac{\left|\mathcal N_N(i;r_N)\setminus\mathcal N_N(j;s-1)\right|^k\eta_N(s)^{1-(2+k)/p}}{\sigma_N(v)^{2+k}} & \xrightarrow{p}0.
          \end{align*}
    \item[(ii)] $N^2\eta_N(r_N)^{1-1/p}/\sigma_N(v)\xrightarrow{p}0$.
  \end{enumerate}
\end{assumption}

The sequence $r_N$ is a theoretical truncation radius for the scalar CLT and is distinct from the feasible spatial HAC bandwidth $b_N$.

\subsection{Spatial HAC conditions}\label{app:shac-technical}

Let
\begin{align*}
  \mu_{i,N} & = \mathbb E[\varphi_{i,N}\mid\mathcal C_N],\ u_{i,N} = \varphi_{i,N}-\mu_{i,N},\ \Gamma_N^E = \dfrac{1}{N}\sum_{i=1}^N\sum_{j=1}^N K\!\left(\dfrac{\rho(i,j)}{b_N}\right)\mu_{i,N}\mu_{j,N}^\top.
\end{align*}
\begin{assumption}[Spatial HAC conditions]\label{assumption:inf_shac_technical}
  The following conditions hold.
  \begin{enumerate}
    \item[(i)] The rate condition for covariance estimation is
          \begin{align*}
            \dfrac{1}{N^2}\sum_{i=1}^N\sum_{s=0}^{\infty}\sum_{\substack{1\le j\le N\\s\le\rho(i,j)<s+1}}\left|\mathcal N_N(i;b_N)\setminus\mathcal N_N(j;s-1)\right|^2\eta_N(s)^{1-4/p} & \xrightarrow{p}0.
          \end{align*}
    \item[(ii)] The feasible influence contributions satisfy
          \begin{align*}
            \left\|\dfrac{1}{N}\sum_{i=1}^N\sum_{j=1}^N K\!\left(\dfrac{\rho(i,j)}{b_N}\right)\left\{\widehat\varphi_{i,N}\widehat\varphi_{j,N}^\top-\varphi_{i,N}\varphi_{j,N}^\top\right\}\right\| & = o_p(1).
          \end{align*}
    \item[(iii)] The cross terms involving the centered contribution and its conditional mean satisfy
          \begin{align*}
            \dfrac{1}{N}\sum_{i=1}^N\sum_{j=1}^N K\!\left(\dfrac{\rho(i,j)}{b_N}\right)u_{i,N}\mu_{j,N}^\top & = o_p(1), \\
            \dfrac{1}{N}\sum_{i=1}^N\sum_{j=1}^N K\!\left(\dfrac{\rho(i,j)}{b_N}\right)\mu_{i,N}u_{j,N}^\top & = o_p(1).
          \end{align*}
    \item[(iv)] The centering matrix satisfies $\Gamma_N^E \xrightarrow{p}\Gamma^E$ for a finite deterministic matrix $\Gamma^E$.
  \end{enumerate}
\end{assumption}

Define the full uncentered SHAC target by $\Gamma^+ = \Gamma+\Gamma^E$.
For $Q,Q'\in\{\mathrm{TE},\mathrm{SW},\mathrm{SP}\}$, let $\Gamma_{QQ'}^E(l,l')$ denote the corresponding element of $\Gamma^E$.
The centering covariance for the spillover effect for the target cohorts under no own adoption is
\begin{align*}
  \Gamma_{SP,SP}^E(l,l') & = \Gamma_{TE,TE}^E(l,l')+\Gamma_{SW,SW}^E(l,l')-\Gamma_{TE,SW}^E(l,l')-\Gamma_{SW,TE}^E(l,l').
\end{align*}
Let $\Gamma_{\infty}^{E,SP}(l,l')$ denote the element of $\Gamma^E$ corresponding to two rows for spillover effects among units that never adopt.
Parts (ii) and (iii) state the requirements for replacement by feasible influence rows and for the cross terms used by the uncentered SHAC estimator.
They are technical conditions for covariance estimation and do not restrict the identifying comparisons.

\subsection{Weighted inference}\label{app:weighted-inference}

Define the weighted analogues of the objects in Appendix~\ref{app:inference-regularity} using $\widehat\Psi_N^\varpi$, $\Psi_N^\varpi$, $\widehat\theta_{\varpi,N}$, $\theta_{\varpi,N}^0$, and $a_\varpi$.
Denote the population Jacobian, aggregation derivative, and long-run score covariance by $J_{\varpi,N}$, $A_{\varpi,N}$, and $\Omega_{\varpi,N}$.
Denote the population and feasible influence contributions by $\varphi_{i,N}^\varpi$ and $\widehat\varphi_{i,N}^\varpi$.

\begin{assumption}[Weighted score array]\label{assumption:weighted-inference}
  There exist constants $c_0>0$ and $C_0<\infty$, independent of $N$, such that the following conditions hold.
  \begin{enumerate}
    \item[(i)] The identifying conditions in Section~\ref{sec:identification} hold after reweighting the relevant populations by $\varpi_i$, and $a_\varpi(\theta_{\varpi,N}^0)=\boldsymbol\zeta_\varpi$. When DR estimation is used, $0\le\lambda_{\varpi,TE}^{g,l}(x,h)\le C_0$ and $0\le\lambda_{\varpi,SW}^{g,l}(x,h)\le C_0$ on the corresponding supports among units that never adopt.
    \item[(ii)] The weighted masses satisfy $\inf_{g\in\bigcup_{l\in\mathcal L}\mathcal G_l}s_{\varpi,g} \ge c_0$ and $s_{\varpi,\infty}\ge c_0$.
          For every $l\in\mathcal L$, $g\in\mathcal G_l$, and $h\in\mathcal H_N$, whenever
          \begin{align*}
            \mathbb E[\varpi_i\1\{G_i = g,H^0_{i,g + l} = h\}] + \mathbb E[\varpi_i\1\{G_i = g,H^1_{i,g + l} = h\}] + \mathbb E[\varpi_i\1\{G_i = \infty,H^0_{i,g + l} = h\}] & > 0.
          \end{align*}
          the corresponding weighted mass among units that never adopt satisfies
          \begin{align*}
            \mathbb E[\varpi_i\1\{G_i = \infty,H^1_{i,g + l} = h\}] & \ge c_0.
          \end{align*}
    \item[(iii)] Assumption~\ref{assumption:inf_regular} holds for the weighted system. The deterministic limits of $J_{\varpi,N}$, $A_{\varpi,N}$, and $\Omega_{\varpi,N}$ are denoted by $J_\varpi$, $A_\varpi$, and $\Omega_\varpi$.
    \item[(iv)] Assumption~\ref{assumption:inf_spatial} holds for the centered weighted score array. In Assumption~\ref{assumption:inf_spatial_rates}, $Z_{i,N}(v)$ and $\sigma_N^2(v)$ are formed from $\Psi_{i,N}^\varpi(\theta_{\varpi,N}^0)$, and the rate conditions hold with its conditional dependence coefficients.
    \item[(v)] Assumptions~\ref{assumption:inf_shac} and \ref{assumption:inf_shac_technical} hold for $\varphi_{i,N}^\varpi$ and $\widehat\varphi_{i,N}^\varpi$. The conditions for feasible replacement, the cross terms, and convergence of the centering matrix are imposed on these weighted contributions.
  \end{enumerate}
\end{assumption}

Under Assumption~\ref{assumption:weighted-inference}, the joint asymptotic normality, spatial HAC conclusions, and pointwise coverage statements in Section~\ref{sec:inference} hold for the weighted estimators.

\subsection{Proofs}

\begin{proof}[Proof of Theorem~\ref{thm:event_time_distribution}]
  Assumption~\ref{assumption:inf_support} makes the effect equations and aggregation maps well defined.
  For every fixed $\epsilon>0$, Assumption~\ref{assumption:inf_regular}(ii) and (iv) give, with probability approaching one,
  \begin{align*}
    \inf_{\theta\in\Theta:\|\theta-\theta_N^0\|\ge\epsilon}\|\widehat\Psi_N(\theta)\| & \ge \inf_{\theta\in\Theta:\|\theta-\theta_N^0\|\ge\epsilon}\|\Psi_N(\theta)\| -\sup_{\theta\in\Theta}\|\widehat\Psi_N(\theta)-\Psi_N(\theta)\| \\
                                                                                      & \ge c_\epsilon-o_p(1).
  \end{align*}
  The approximate-root condition therefore implies $\widehat\theta_N\xrightarrow{p}\theta_N^0$.
  Since $\Psi_N(\theta_N^0)=0$, Assumption~\ref{assumption:inf_regular}(vii) gives
  \begin{align*}
    \mathbb E\!\left[ \left\|\dfrac{1}{\sqrt N}\sum_{i=1}^N\Psi_{i,N}(\theta_N^0)\right\|^2 \mathrel{\Big|}\mathcal C_N \right] & = \operatorname{tr}(\Omega_N)=O_p(1).
  \end{align*}
  Hence $N^{-1/2}\sum_{i=1}^N\Psi_{i,N}(\theta_N^0)=O_p(1)$.
  A mean-value expansion of the estimating equations gives
  \begin{align*}
    \widehat\Psi_N(\widehat\theta_N) & = \widehat\Psi_N(\theta_N^0) +\left\{\int_0^1\nabla_\theta\widehat\Psi_N\!\left(\theta_N^0+s(\widehat\theta_N-\theta_N^0)\right)\,ds\right\} (\widehat\theta_N-\theta_N^0).
  \end{align*}
  Assumption~\ref{assumption:inf_regular}(iii)--(v) implies
  \begin{align*}
    \int_0^1\nabla_\theta\widehat\Psi_N\!\left(\theta_N^0+s(\widehat\theta_N-\theta_N^0)\right)\,ds & = J_N+o_p(1).
  \end{align*}
  Since $\widehat\Psi_N(\theta_N^0)=N^{-1}\sum_{i=1}^N\Psi_{i,N}(\theta_N^0)$, the preceding two displays and the approximate-root condition yield
  \begin{align*}
    \sqrt N(\widehat\theta_N-\theta_N^0) & = -\{J_N+o_p(1)\}^{-1}\dfrac{1}{\sqrt N}\sum_{i=1}^N\Psi_{i,N}(\theta_N^0)+o_p(1) \\
                                         & = -J_N^{-1}\dfrac{1}{\sqrt N}\sum_{i=1}^N\Psi_{i,N}(\theta_N^0)+o_p(1).
  \end{align*}
  It follows that $\widehat\theta_N-\theta_N^0=O_p(N^{-1/2})$.
  The target premise gives the corresponding equalities for the three primitive rows.
  The identity $\tau^{SP}=\tau^{TE}-\tau^{SW}$ gives the equality for the Spillover row, so $a(\theta_N^0)=\boldsymbol\zeta$.
  Assumption~\ref{assumption:inf_regular}(vi) gives
  \begin{align*}
    \sqrt N\left\|a(\widehat\theta_N)-a(\theta_N^0)-A_N(\widehat\theta_N-\theta_N^0)\right\| & = O_p\!\left(\sqrt N\|\widehat\theta_N-\theta_N^0\|^2\right)=o_p(1).
  \end{align*}
  Therefore,
  \begin{align*}
    \sqrt N\{a(\widehat\theta_N)-a(\theta_N^0)\} & = A_N\sqrt N(\widehat\theta_N-\theta_N^0)+o_p(1)                          \\
                                                 & = -A_NJ_N^{-1}\dfrac{1}{\sqrt N}\sum_{i=1}^N\Psi_{i,N}(\theta_N^0)+o_p(1) \\
                                                 & = \dfrac{1}{\sqrt N}\sum_{i=1}^N\varphi_{i,N}+o_p(1).
  \end{align*}
\end{proof}

\begin{proof}[Proof of Corollary~\ref{cor:joint_asymptotic_normality}]
  The conditional population equation gives
  \begin{align*}
    \sum_{i = 1}^N \mathbb E[\Psi_{i,N}(\theta_N^0)\mid\mathcal C_N] & = 0,
  \end{align*}
  so the sum of the scores equals the sum of the centered scores.
  Fix a scalar direction $c$ of the stacked effect vector and set $v_N = -J_N^{-\top}A_N^\top c$ and $v = -J^{-\top}A^\top c$.
  Assumption~\ref{assumption:inf_regular}(v) and (vi) give $v_N\xrightarrow{p}v$.
  Assumption~\ref{assumption:inf_regular}(vii) gives
  \begin{align*}
    (v_N-v)^\top\dfrac{1}{\sqrt N}\sum_{i=1}^N \left[\Psi_{i,N}(\theta_N^0)-\mathbb E[\Psi_{i,N}(\theta_N^0)\mid\mathcal C_N]\right] & = o_p(1).
  \end{align*}
  Let $R_N$ denote either the $o_p(1)$ remainder in Theorem~\ref{thm:event_time_distribution} or the replacement term in the preceding display.
  For every $\epsilon>0$,
  \begin{align*}
    \mathbb E\!\left[\Pr (\|R_N\|>\epsilon\mid\mathcal C_N)\right] & = \Pr (\|R_N\|>\epsilon)\to0,
  \end{align*}
  and Markov's inequality gives $\Pr (\|R_N\|>\epsilon\mid\mathcal C_N)\xrightarrow{p}0$.
  Thus both remainders vanish in conditional probability.
  If $v^\top\Omega v>0$, then $\sigma_N^2(v)/N=v^\top\Omega_Nv\xrightarrow{p}v^\top\Omega v>0$.
  By the subsequence principle, every subsequence contains a further subsequence along which the rate conditions in Assumption~\ref{assumption:inf_spatial_rates} and the variance convergence in the preceding display hold almost surely.
  The spatial adaptation in the proof of Theorem~4.2 in \citet*{XU2026106304}, which applies Theorem~3.2 in \citet*{KOJEVNIKOV2021882}, then gives
  \begin{align*}
    \dfrac{1}{\sigma_N(v)}\sum_{i=1}^N Z_{i,N}(v) & \xrightarrow{d}\mathcal N(0,1)
  \end{align*}
  conditionally on $\mathcal C_N$.
  Conditional Slutsky's theorem therefore gives
  \begin{align*}
    c^\top\dfrac{1}{\sqrt N}\sum_{i=1}^N\varphi_{i,N} & \xrightarrow{d}\mathcal N(0,c^\top\Gamma c)
  \end{align*}
  conditionally on $\mathcal C_N$.
  If $v^\top\Omega v=0$, covariance convergence and conditional Chebyshev's inequality instead make this scalar leading term vanish in conditional probability.
  The conditional Cram\'er--Wold device gives the stated joint Gaussian limit, including a singular $\Gamma$.
\end{proof}

\begin{proof}[Proof of Proposition~\ref{prop:shac_covariance}]
  Write $K_{ij,N}=K\{\rho(i,j)/b_N\}$ within this proof.
  Assumptions~\ref{assumption:inf_spatial}(ii), \ref{assumption:inf_shac}(i)--(iii), and \ref{assumption:inf_shac_technical}(i) give the kernel approximation and bounds on the variance of the quadratic form for the centered population influence contributions.
  Applying the same subsequence argument to these in-probability conditions, the spatial adaptation in the proof of Theorem~4.3 in \citet*{XU2026106304}, which applies Proposition~4.1 in \citet*{KOJEVNIKOV2021882}, gives
  \begin{align*}
    \dfrac{1}{N}\sum_{i=1}^N\sum_{j=1}^N K_{ij,N}u_{i,N}u_{j,N}^\top & = \Gamma+o_p(1).
  \end{align*}
  Assumption~\ref{assumption:inf_shac_technical}(ii) gives
  \begin{align*}
    \dfrac{1}{N}\sum_{i=1}^N\sum_{j=1}^N K_{ij,N} \widehat\varphi_{i,N}\widehat\varphi_{j,N}^\top & = \dfrac{1}{N}\sum_{i=1}^N\sum_{j=1}^N K_{ij,N} \varphi_{i,N}\varphi_{j,N}^\top+o_p(1).
  \end{align*}
  Using $\varphi_{i,N}=u_{i,N}+\mu_{i,N}$, expand the population quadratic form as
  \begin{align*}
    \dfrac{1}{N}\sum_{i=1}^N\sum_{j=1}^N K_{ij,N}\varphi_{i,N}\varphi_{j,N}^\top & = \dfrac{1}{N}\sum_{i=1}^N\sum_{j=1}^N K_{ij,N}u_{i,N}u_{j,N}^\top+\dfrac{1}{N}\sum_{i=1}^N\sum_{j=1}^N K_{ij,N}u_{i,N}\mu_{j,N}^\top \\
                                                                                 & +\dfrac{1}{N}\sum_{i=1}^N\sum_{j=1}^N K_{ij,N}\mu_{i,N}u_{j,N}^\top +\Gamma_N^E                                                       \\
                                                                                 & = \Gamma+\Gamma^E+o_p(1).
  \end{align*}
  The last equality follows from Assumption~\ref{assumption:inf_shac_technical}(iii) and (iv).
  Thus $\|\widehat\Gamma-\Gamma^+\|=o_p(1)$.
  On the event that the kernel weight matrix is positive semidefinite, every fixed vector $q$ satisfies
  \begin{align*}
    q^\top\Gamma_N^Eq & = \dfrac{1}{N}\sum_{i=1}^N\sum_{j=1}^N K_{ij,N} (q^\top\mu_{i,N})(q^\top\mu_{j,N})\ge0.
  \end{align*}
  This event has probability approaching one.
  Hence $\Gamma^E$ is positive semidefinite, which gives the conservative diagonal statement.
  If all conditional influence means vanish, $\Gamma^E=0$, which gives consistency of the covariance estimator for every component and pair of components.
  Applying the contrast between the Total and Switching effects gives the centering covariance $\Gamma_{SP,SP}^E$ for the spillover effect for the target cohorts under no own adoption defined above.
\end{proof}

\begin{proof}[Proof of Corollary~\ref{cor:pointwise_ci}]
  Corollary~\ref{cor:joint_asymptotic_normality} gives the Gaussian limit for the jointly estimated effects.
  The conditional-probability argument in its proof applies to the covariance remainder in Proposition~\ref{prop:shac_covariance}.
  For $Q\in\{\mathrm{TE},\mathrm{SW},\mathrm{SP}\}$, Proposition~\ref{prop:shac_covariance} and Slutsky's theorem give
  \begin{align*}
    \dfrac{\sqrt N\{\widehat\tau^Q(l)-\tau^Q(l)\}}{\sqrt{\widehat\Gamma_{QQ}(l,l)}} & \xrightarrow{d} \mathcal N\!\left(0,\dfrac{\Gamma_{QQ}(l,l)}{\Gamma_{QQ}(l,l)+\Gamma_{QQ}^E(l,l)}\right)
  \end{align*}
  conditionally on $\mathcal C_N$.
  Therefore,
  \begin{align*}
    \Pr (\tau^Q(l)\in\widehat C_Q^{pt}(l)\mid\mathcal C_N) & \xrightarrow{p} 2\Phi\!\left(z_{1-\alpha/2}\sqrt{\dfrac{\Gamma_{QQ}(l,l)+\Gamma_{QQ}^E(l,l)}{\Gamma_{QQ}(l,l)}}\right)-1.
  \end{align*}
  This limit is at least $1-\alpha$.
  For $Q\in\{\mathrm{TE},\mathrm{SW}\}$, $\Gamma_{QQ}^E(l,l)$ is the corresponding diagonal element of $\Gamma^E$.
  For $Q=\mathrm{SP}$, it is the contrast $\Gamma_{SP,SP}^E(l,l)$ between the Total and Switching effects defined above.
  The limit equals $1-\alpha$ when $\mathbb E[\varphi_{i,N}^Q(l)\mid\mathcal C_N]=0$ for every $i$.
  The same calculation applies to the coordinate for the spillover effect among units that never adopt, with variance $\Gamma_{\infty}^{SP}(l,l)$ and centering term $\Gamma_{\infty}^{E,SP}(l,l)$.
\end{proof}

\section{Sensitivity to the exposure mapping}\label{app:application-mapping-sensitivity}

We replace the 75-mile radius in Equation~\eqref{eq:application_exposure} with 25, 50, and 100 miles.
The number of other counties within 50 miles that have adopted by date $t$ is grouped as zero, one, or at least two.
The number of years since exposure within 50 miles first becomes positive is grouped as zero, one to two years, or at least three years.
The distance to the nearest other county that has adopted by date $t$ defines four exposure levels: no adopter within 100 miles, an adopter within 50 miles, an adopter between 50 and 75 miles, and an adopter between 75 and 100 miles.

\begin{figure}[h]
  \centering
  \begin{subfigure}[t]{0.32\textwidth}
    \centering
    \includegraphics[width=\linewidth]{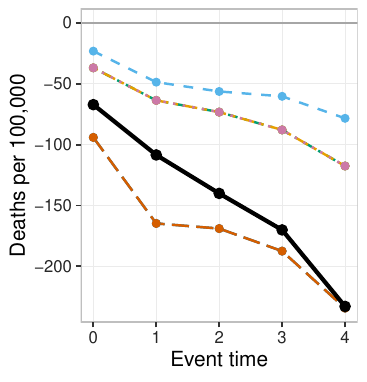}
    \caption{Total effect}
    \label{fig:application_mapping_sensitivity_total_v2}
  \end{subfigure}
  \hfill
  \begin{subfigure}[t]{0.32\textwidth}
    \centering
    \includegraphics[width=\linewidth]{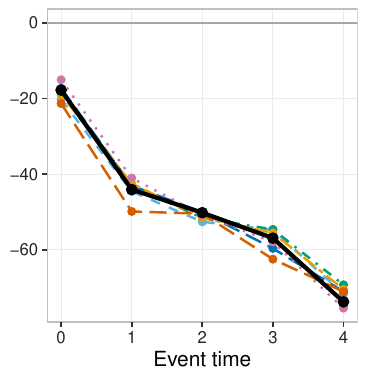}
    \caption{Switching effect}
    \label{fig:application_mapping_sensitivity_switching_v2}
  \end{subfigure}
  \hfill
  \begin{subfigure}[t]{0.32\textwidth}
    \centering
    \includegraphics[width=\linewidth]{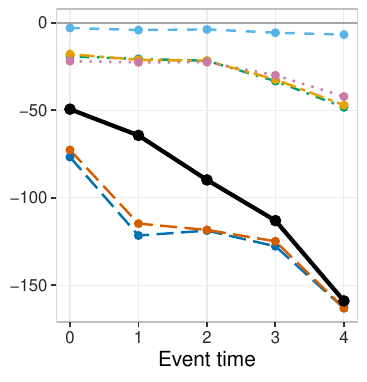}
    \caption{Spillover effect}
    \label{fig:application_mapping_sensitivity_spillover_v2}
  \end{subfigure}
  \par\smallskip
  \includegraphics[width=0.96\textwidth]{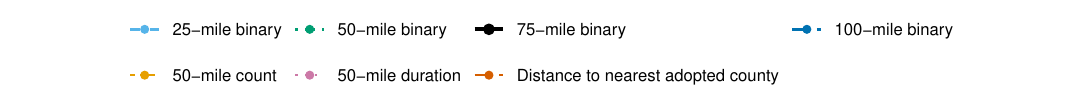}
  \caption{Estimated mortality effects under alternative exposure mappings}
  \label{fig:application_mapping_sensitivity_v2}
  \begin{minipage}{0.94\linewidth}
    \footnotesize Notes: The lines report point estimates at event times $l = 0$ through $l = 4$. The black line corresponds to the exposure mapping with a 75-mile radius.
  \end{minipage}
\end{figure}

Figure~\ref{fig:application_mapping_sensitivity_v2} reports the Total, Switching, and Spillover estimates at event times $l = 0$ through $l = 4$.
All three estimates are negative at every event time under each mapping.
The Total and Switching estimates become more negative over event time under all seven mappings.
The Spillover estimates generally become more negative, although the path is not monotone under every mapping.
The estimates based on the count and duration mappings within 50 miles are close to those from the binary mapping with a radius of 50 miles, while the mappings with radii of 75 and 100 miles produce larger Total and Spillover estimates in absolute value.
Each mapping defines different exposure levels and therefore different total, switching, and spillover effects.
\end{document}